\newtheorem{theorem}{Theorem}
\newtheorem{proposition}[theorem]{Proposition}
\newtheorem{corollary}[theorem]{Corollary}
\newtheorem{lemma}[theorem]{Lemma}
\theoremstyle{definition}
\newtheorem{definition}[theorem]{Definition}
\theoremstyle{remark}
\newtheorem*{remark}{Remark}
\newtheorem{example}[theorem]{Example}
\newcounter{temp}
\newcommand{\pointmedian}{\fontfamily{cmr}\selectfont\textperiodcentered}
\newcommand{\gct}{{\sc gct}\xspace}
\newcommand{\seq}[1]{\mathbf{#1}}
\newcommand{\len}{\mathrm{len}}
\newcommand{\inputs}[1]{\mathrm{In}(#1)}
\newcommand{\outputs}[1]{\mathrm{Out}(#1)}
\newcommand{\inputspaceE}[1]{\Space{In}^{\mathrm{E}}(#1)}
\newcommand{\inputspaceI}[1]{\Space{In}^{\mathrm{I}}(#1)}
\newcommand{\history}[1]{\mathfrak{h}(#1)}
\newcommand{\benorvar}[2][\seq{e'}]{x_{#2,\history{#1}(#2)}}
\newcommand{\cotree}[2][T]{\textsc{coT}_{#2}(#1)}
\newcommand{\maxflow}{\texttt{maxflow}\xspace}
\newcommand{\prams}{{\sc pram}{\rm s}\xspace}
\newcommand{\pram}{{\sc pram}\xspace}
\newcommand{\acts}{{\sc act}{\rm s}\xspace}
\newcommand{\algcirc}{{\sc algcirc}\xspace}
\newcommand{\srams}{{\sc ram}{\rm s}\xspace}
\newcommand{\sram}{{\sc ram}\xspace}
\newcommand{\cuckersproblem}{\mathfrak{F}\mathrm{er}}
\newcommand{\crew}{\textsc{crew}\xspace}
\newcommand{\amcact}{\alpha_{\mathrm{act}}}
\newcommand{\amcram}{\alpha_{\mathrm{ram}}}
\newcommand{\amcprams}{\alpha_{\mathrm{pram}}}
\newcommand{\amcrealram}{\alpha_{\realN\mathrm{ram}}}
\newcommand{\amcrealpram}{\alpha_{\realN\mathrm{pram}}}
\newcommand{\amcfull}{\alpha_{\mathrm{full}}}
\newcommand{\amcrealfull}{\alpha_{\realN\mathrm{full}}}
\newcommand{\instruction}[1]{\mathtt{#1}}
\newcommand{\copyy}[2]{\instruction{copy}(#1,#2)}
\newcommand{\copyref}[2]{\instruction{copy}(#1,\sharp{}#2)}
\newcommand{\refcopy}[2]{\instruction{copy}(\sharp{}#1,#2)}
\newcommand{\boadd}[3]{\instruction{+}_{#1}(#2,#3)}
\newcommand{\bosubstract}[3]{\instruction{-}_{#1}(#2,#3)}
\newcommand{\bomultiply}[3]{\instruction{\times}_{#1}(#2,#3)}
\newcommand{\boaddconst}[3]{\instruction{+}^{#3}_{#1}(#2)}
\newcommand{\bosubstractconst}[3]{\instruction{-}^{#3}_{#1}(#2)}
\newcommand{\bomultiplyconst}[3]{\instruction{\times}^{#3}_{#1}(#2)}
\newcommand{\bodivide}[3]{\instruction{/}_{#1}(#2,#3)}
\newcommand{\bodivideconst}[3]{\instruction{/}^{#3}_{#1}(#2)}
\newcommand{\bogen}[3]{\star_{#1}(#2,#3)}
\newcommand{\bogenconst}[3]{\star^{#3}_{#1}(#2)}
\newcommand{\boeuclidivide}[3]{\instruction{//}_{#1}(#2,#3)}
\newcommand{\boconst}[2]{\instruction{const}_{#1}(#2)}
\newcommand{\bosqrtn}[2]{\instruction{\sqrt[n]{#1}}(#2)}
\newcommand{\bosqrt}[3][n]{\instruction{\sqrt[#1]{#2}}(#3)}
\newcommand{\rootdegree}[1]{\sqrt[\partial]{#1}}
\newcommand{\seqedges}[2]{\mathrm{Seq}_{#1}(#2)}
\newcommand{\command}[1]{\mathtt{#1}}
\newcommand{\instr}[2][M]{\mathrm{Inst}_{#1}(#2)}
\newcommand{\ncproduct}[2]{\tensor*[_{#1}]{\ast}{_{#2}}}
\newcommand{\plusplus}{\textrm{\scriptsize{$+\!+$}}}
\newcommand{\admss}[2][G]{\mathrm{Adm}_{#2}(#1)}
\newcommand{\hyperplan}[1]{\mathbb{H}(#1)}
\newcommand{\complementset}[1]{#1^{\mathrm{c}}}
\newcommand{\simref}{\sim_{\mathrm{ref}}}
\newcommand{\Interpret}[1]{\mathopen{|\![} #1 \mathclose{]\!|}}
\newcommand{\degre}{\mathop{\textnormal{deg}}}
\newcommand{\rank}{\mathop{\textnormal{rank}}}
\newcommand{\conditional}[3]{\texttt{if}~ {\command{#1}}
  ~ \texttt{goto} ~ {#2} ~ \texttt{else} ~ {#3}}
\newcommand{\Skip}{\texttt{skip}}
\newcommand{\length}[1]{\left| #1 \right|}
\newcommand{\opencovers}[1]{\mathrm{Cov}(\Space{#1})}
\newcommand{\finiteopencovers}[1]{\mathrm{FCov}(\Space{#1})}
\newcommand{\OptProb}{\mathcal{P}_{\mathrm{opt}}}
\newcommand{\DecProb}{\mathcal{P}_{\mathrm{dec}}}
\newcommand{\MaxOptProb}{\mathrm{Max}\OptProb}
\newcommand{\Parametrization}{\mathsf{Param}}
\newcommand{\AffinePlane}{\mathcal{A}_1}
\newcommand{\Deg}{\mathbf{d}(S)}
\newcommand{\Frontier}{\mathsf{Front}}
\newcommand{\Fan}{\mathsf{Fan}}
\newcommand{\Collins}[1]{\mathrm{Col}_{#1}}
\newcommand{\projectionAz}[1]{\Pi(#1)}
\newcommand{\bigre}{{\sc BiGRE}\xspace}
\newcommand{\lobe}{{\sc LoBE}\xspace}
\newcommand{\cohop}{{\sc CoHOp}\xspace}
\newcommand{\dysco}{{\sc DySCo}\xspace}
\begin{document}

\title{Unifying lower bounds for algebraic machines, semantically}
%\subtitle{Lambda-Terms as Dynamical Systems}
%\subtitle{Subtitle Text, if any} 

\author[1]{Thomas Seiller\corref{cor1}\fnref{fn1}}
\ead{thomas.seiller@cnrs.fr}
\author[2]{Luc Pellissier\fnref{fn2}} \ead{luc.pellissier@lacl.fr}
\author[3]{Ulysse Léchine} \ead{lechine@lipn.fr}
\cortext[cor1]{Corresponding author}
\fntext[fn1]{T. Seiller was partially supported by the European Commission Horizon 2020 programme Marie Sk\l{}odowska-Curie Individual Fellowship
(H2020-MSCA-IF-2014) project 659920 - ReACT, the INS2I grants \bigre and \lobe, the Ile-de-France DIM RFSI Exploratory project Exploratory project \cohop, and the ANR-22-CE48-0003-01 project \dysco.} \fntext[fn2]{L. Pellissier was partially supported by
ANR-14-CE25-0005 project
\textsc{Elica} and
ANII project “Realizabilidad, forcing y
computación cuántica” FCE\_1\_2014\_1\_104800.}
\affiliation[1]{organization={CNRS}, addressline={LIPN -- UMR 7030 CNRS \& University of Paris 13, 99 avenue Jean-Baptiste Clément},
postcode={93430}, city={Villetaneuse}, country={France}}
\affiliation[2]{organization={University Paris Est Creteil}, addressline={LACL, Faculté des Sciences et Technologie, 61 avenue du Général de Gaulle}, postcode={94010}, city={Creteil}, country={France}}
\affiliation[3]{organization={Université Sorbonne Paris Nord}, addressline={LIPN -- UMR 7030 CNRS \& University of Paris 13, 99 avenue Jean-Baptiste Clément},
postcode={93430}, city={Villetaneuse}, country={France}}

\begin{abstract}
  This paper presents a new abstract method for proving lower bounds in 
  computational complexity. Based on the notion of topological and measurable 
  entropy for dynamical systems, it is shown to generalise three previous lower 
  bounds results from the literature in algebraic complexity. We use it to prove that 
  \maxflow, a \Ptime complete problem, is not computable in polylogarithmic time 
  on parallel random access machines (\prams) working with real numbers. This 
  improves, albeit slightly, on a result of Mulmuley since the class of machines 
  considered extends the class \enquote{\prams without bit operations}, making 
  more precise the relationship between Mulmuley's result and similar lower bounds
  on real \prams. 
  
  More importantly, we show our method  
  captures previous lower bounds results from the literature, thus providing a unifying
  framework for "topological" proofs of lower bounds:
  Steele and Yao's lower bounds for algebraic decision trees \cite{SteeleYao82}, 
  Ben-Or's lower bounds for algebraic computation trees \cite{Ben-Or83}, Cucker's
  proof that \NC is not equal to \Ptime in the real case \cite{Cucker92}, and Mulmuley's
  lower bounds for \enquote{\prams without bit operations} \cite{Mulmuley99}.
\end{abstract}

\maketitle

\newpage
\setcounter{tocdepth}{1}
\tableofcontents

\section{Introduction}
\subsection{Computational Complexity}

The field of computational complexity was initiated soon after the conception of the first computers. 
While theoretical results had already established a definition of the notion of \enquote{computable function} 
on the set of natural numbers, it became quickly apparent that computable did not mean practical, 
as many functions considered computable could not be computed within a reasonable time. %(TODO: Example? Luc: Optimization problems, maybe?)

The first complexity class defined was that of \emph{feasible functions} 
\cite{cobham,edmonds65,rabin67}, which is now known as  $\Ptime$: the set of polynomial 
time computable functions, i.e. functions $f$ for which there exists a polynomial $p$ and a (usually Turing) machine $M$ 
computing $f$ whose running time on an input $n$ is bounded by $p(n)$. This class, apart from being 
the first ever complexity class to appear in the literature, is arguably the most important one in computer 
science. Many fundamental problems concern its relation to other classes, such as knowing whether $\NPtime$, 
the extension of $\Ptime$ if one allows for non-deterministic machines in the above definition, is equal to $\Ptime$. These 
problems, however, are still open. 
%(TODO: lower bounds, barriers -- ?)

Beyond the relationship between \Ptime and other classes, the general question of \emph{classifying} 
the complexity classes became one of the main objectives of the field, and a number of important results 
were obtained within the first years. 
%However, as researchers continued working on separation, the question continuously more and more complex.

\subsection{Separation, Lower bounds and Barriers}
As part of the classification problem, complexity theory has traditionally been
concerned with proving \emph{separation results}. Among the numerous open
separation problems lies the much advertised \Ptime vs. \NPtime problem of
showing that some problems considered hard to solve but efficient to verify do
not have a polynomial time algorithm solving them.

Proving that two classes $B\subset A$ are not equal can be reduced to 
finding lower bounds for problems in $A$: by proving that certain problems cannot be solved with less than certain
resources on a specific model of computation, one can show that two classes are not equal.
Conversely, proving a separation result $B\subsetneq A$ provides a lower bound for the 
problems that are $A$\emph{-complete} \cite{cooknpcomplete} -- i.e. problems that are in some way \emph{universal} for 
the class $A$.

%The central and most well-known

The proven lower bound results are however very few, and most separation problems
remain as generally accepted conjectures. For instance, a proof that the class
of non-deterministic exponential problems is not included in what is thought of
as a very small class of circuits was not achieved until very recently
\cite{Williams}.
%
% [Barriers, GCT and the Mulmuley paper]

The failure of most techniques of proof has been studied in itself, which lead
to the proof of the existence of negative results that are commonly called
\emph{barriers}. Altogether, these results show that all proof methods we know
are ineffective with respect to proving interesting lower bounds. Indeed, there
are three barriers: relativisation \cite{relativization}, natural proofs
\cite{naturality} and algebrization \cite{algebraization,AffineRelativisation}, and every known proof
method hits at least one of them, which shows the need for new
methods\footnote{In the words of S. Aaronson and A. Wigderson
  \cite{algebraization}, \enquote{We speculate that going beyond this limit
    [algebrization] will require fundamentally new methods.}}. However, to this
day, only one research program aimed at proving new separation results is
commonly believed to have the ability to bypass all barriers: Mulmuley and
Sohoni's Geometric Complexity Theory (\gct) program \cite{GCTsurvey2}.

\subsection{Algebraic models and Geometric Complexity Theory}

Geometric Complexity Theory (\gct) is widely considered to be a promising
research program that might lead to interesting results. It is also widely
believed to necessitate new and extremely sophisticated pieces of mathematics in
order to achieve its goal. The research program aims in the long run to provide new techniques 
for answering the \Ptime versus \NPtime problem, focussing first on solving the
\VP versus \VNP problem by showing that certain algebraic 
surfaces (representing the permanent and the determinant) cannot be embedded one 
into the other. Although this program has lead to interesting developments in pure
mathematics, it has not enhanced our understanding of complexity lower bounds
for the time being (actually, even for Mulmuley himself, such understanding will
not be achieved in our lifetimes \cite{100years}).

Intuitively, this program develops a proof method for proving 
lower bounds in \emph{algebraic complexity} based on algebraic geometry \cite{GCTsurvey2}: 
separation of the Valiant complexity classes $\VP$ and $\VNP$ could be deduced from the 
impossibility of embedding an algebraic variety $\mathcal{P}$ defined from the permanent 
into an algebraic variety $\mathcal{D}$ defined from the determinant (with constraints on the 
dimensions). Two main approaches were proposed using representation theory and exploiting 
decomposition of representations into irreducible representations:
\begin{itemize}
\item \emph{occurrence obstructions} aims to exhibit an irreducible representation of 
$G=GL_k\complexN$ which occurs as a $G$-subrepresentation in the coordinate ring of 
$\mathcal{P}$ but not occurring as a $G$-subrepresentation in the coordinate ring of 
$\mathcal{D}$, while 
\item \emph{multiplicity obstructions} an irreducible representation of 
$G=GL_k\complexN$ which occurs as a $G$-subrepresentation in both the coordinate ring of 
$\mathcal{P}$ and in the coordinate ring of $\mathcal{D}$, but whose multiplicity of 
occurrence in the coordinate ring of $\mathcal{P}$ is strictly greater than its multiplicity of 
occurrence in the coordinate ring of $\mathcal{D}$. 
\end{itemize}
Obviously, the first approach is easier, as a particular case of the second. 
Recently, some negative results \cite{BIP16} have shown this easiest path 
proposed by \gct fails. Some positive results on a toy model were however obtained 
regarding multiplicity obstructions \cite{IkenmeyerMultiplicity}: although the obtained 
results are not new, they use the multiplicity obstruction method and are considered
a proof-of-concept of the approach.

The \gct program was inspired, according to its creators, by a lower bound
result obtained by Mulmuley \cite{Mulmuley99} for \enquote{\prams without bit
  operations}, a result we strengthen in the present work.
  
  \subsection{Lower bounds for \prams without bit operations}
  
Mulmuley showed in 1999 \cite{Mulmuley99} that a notion of
machine introduced under the name ``\prams without bit operations'' does not
compute \maxflow in polylogarithmic time. This notion of machine, quite exotic
at first sight, corresponds to an algebraic variant of \prams, where registers
contain integers and individual processors are allowed to perform sums,
subtractions and products of integers. 
It is argued by Mulmuley that this notion of machine provides an expressive model 
of computation, able to compute some non trivial problems in \NC such as Neff's 
algorithm for computing approximate roots of polynomials \cite{Neffsalgo}. 
Mulmuley's result is understood as a big step forward in the quest for a proof that \Ptime 
and \NC are not equal. However, the result was not strengthened or reused in the last 20 years, 
and remained the strongest known lower bound result in this line of enquiry.

%Writing $\NCIntegerminus$ the set of
%functions computable in polylogarithmic time on such machines, Mulmuley's 
%result can be stated as follows:
%\[ \Ptime \not\subseteq \NCIntegerminus \]

The \maxflow problem is quite interesting as it is known to be in \Ptime (by reduction 
to linear programming, or the Ford-Fulkerson algorithm \cite{FordFulkerson}). In fact,
it is \Ptime complete \cite{maxflowcomplete}, and (obviously) there are currently no 
known efficiently parallel algorithm solving it. This lower bound proof, despite
being the main inspiration of the well-known \gct research program, remains
seldom cited and has not led to variations applied to other problems.
% At first
%sight it relies a lot on algebraic geometric techniques and results, such as the
%Milnor-Thom theorem.

\begin{remark}
In fact, the article by Mulmuley shows lower bounds for several problems, and not only for the \maxflow problem. While this paper focusses on \maxflow, this is a choice motivated by the particular importance of the latter (i.e. its \Ptime-completeness). However, the lower bounds obtained by Mulmuley for other problems can be obtained and strengthened in a similar way using our technique.
\end{remark}

\subsection{Contributions.}

The first contribution of this work is a strengthening of Mulmuley's lower
bounds result for machines working on real numbers. Indeed, while the latter proves 
that \maxflow is not computable in polylogarithmic time in a variant of \emph{arithmetic 
\prams}, i.e. working with integers, the proof uses in an essential way techniques from
real algebraic geometry. We will explain how this result is in fact a consequence of our 
main technical lemma, i.e. follows from lower bounds for \emph{algebraic \prams}, that 
is machines working on the reals. Indeed, we show that division-free polylogarithmic 
algebraic \prams compute the same sets of integers as division-free polylogarithmic arithmetic 
\prams (\Cref{prop:NCminusZequivR}). 

We then show that \maxflow is in fact not computable in polylogarithmic 
time in a more expressive model of algebraic \prams, in which processors are allowed to 
perform arbitrary divisions and arbitrary roots in addition to the basic operations 
allowed in Mulmuley's case (addition, subtraction, multiplication). We then explain
how this more general result fails to lift to the arithmetic case, pinpointing to the
precise reason it does: that euclidean division (in fact division by $2$) is not computable in polylogarithmic time
by algebraic \prams. This leads us to prove that the corresponding class, which
contains the problems computable by Mulmuley's notion of \prams, is in fact
strictly contained in \NC. This leads to the main technical result of our paper:
\begin{theorem}\label{mainthm}
  Let $N$ be a natural number and $M$ be a real-valued \pram with
  at most $2^{O((\log N)^c)}$ processors, where $c$ is
  any positive integer.
  
  Then $M$ does not compute euclidean division by $2$ on inputs of length $N$ in $O((\log N)^c)$ steps.
\end{theorem}

This result improves the result shown by Mulmuley, since the class of machines he 
considers\footnote{As mentioned above, Mulmuley considers integer-valued \prams, but 
this class computes exactly the restrictions of sets decided by real-valued \prams to 
integral points.} is strictly contained in the class of machines considered in the statement. 
%However, it also weakens Mulmuley's result in the sense that 
%it shows that the reason why \prams without bit operations do not compute \maxflow 
%boils down to their inability to compute euclidean division. This shows the lower bounds
%on \maxflow do not exploit some specific structure of the \maxflow problem, but rather a 
%geometric property (exponential \emph{breakpoints} in the representation of the problem)
%that is already possessed by problems known to be in \NC. % (TODO: finish)
%Indeed, our
%result can be stated as
%\begin{theorem}\label{mainthm}\label{Theorem1}
%\[\NCInteger \neq \Ptime,\]
%where \NCInteger is the set of problems decidable in polylogarithmic 
%time by a (not necessarily uniform) family of \prams over $\integerN$.
%\end{theorem}

The second, and main, contribution of the paper is the proof method itself, which is 
based on \emph{dynamic semantics} for programs by means of \emph{graphings},
a notion introduced in ergodic theory and recently used to define models 
of linear logic by the first author 
\cite{seiller-igf,seiller-igg,seiller-ignda,seiller-ige}. The dual nature of
graphings, both continuous and discrete, is essential in the present work, 
as it enables invariants from continuous mathematics, in particular the 
notion of \emph{topological entropy} for dynamical systems, while the 
finite representability of graphings is used in the key lemma (as the number
of \emph{edges} appears in the upper bounds of \Cref{thm:graphingsBenOrsystems}
and \Cref{thm:graphingsBenOrsystemsDegree}).

In particular, we show how this proof method captures
known lower bounds and separation results in algebraic models of computation,
namely Steele and Yao's lower bounds for algebraic decision trees \cite{SteeleYao82},
Ben-Or's lower bounds on algebraic computation trees \cite{Ben-Or83}, 
Cucker's proof that \complexityclass[\realN]{NC}{}
is not equal to \PtimeReal \cite{Cucker92} (i.e. answering the 
\NC vs \Ptime problem for computation over the real numbers).

\subsection{A more detailed view of the proof method}

One of the key ingredients in the proof is the representation of programs as
graphings, and \emph{quantitative soundness} results. We refer to the next
sections for a formal statement, and we only provide an intuitive explanation
for the moment. Since a program $P$ is represented as a graphing
$\Interpret{P}$, which is in some way a dynamical system, the computation $P(a)$
on a given input $a$ is represented as a sequence of values
$\Interpret{a}, \Interpret{P}(\Interpret{a}),
\Interpret{P}^2(\Interpret{a}),\dots$ -- an orbit for the dynamical system.
Quantitative soundness states that not only $\Interpret{P}$ computes exactly the
same function (where computation is understood as convergence to a stationary
value) as the original program $P$, but it does so with a constant time
overhead, i.e. if $P(a)$ terminates on a value $b$ in time $k$, then
$\Interpret{P}^{Ck}(\Interpret{a})=\Interpret{b}$, where $C$ is a constant fixed
once and for all for the model of computation.

The second ingredient is the dual nature of graphings, both continuous 
and discrete objects. Indeed, a graphing \emph{representative} is a 
graph-like structure whose edges are represented as continuous 
maps, i.e. a finite representation of a (partial) continuous dynamical 
system. Given a graphing, we define its \emph{$k$th  cell decomposition}, 
which separates the configuration space into cells such that 
two inputs in the same cell are indistinguishable in $k$ steps, i.e. the 
graphing's computational traces on both inputs are equal. We can then
use both the finiteness of the graphing representatives and the 
\emph{topological entropy} of the associated dynamical system to 
provide upper bounds on the size of a further refinement of this 
geometric object, namely the \emph{$k$-th entropic co-tree}
of a graphing -- a kind of final approximation of the graphing by a 
computational tree: intuitively, the $k$-th entropic co-tree is a computational
tree that mimicks the behaviour of the graphing for the $k$ final steps of 
computation.

As we deal with algebraic models of computation, this implies a bound 
on the representation of the $k$th cell decomposition as a semi-algebraic 
variety. In other words, the $k$th cell decomposition is defined by polynomial 
in\pointmedian{}equalities and we provide bounds on the number and 
degree of the involved polynomials. The corresponding statement is the
main technical result of this paper 
(\Cref{thm:graphingsBenOrsystemsDegree}).

This lemma can then be used to obtain lower bounds results that we now detail.
Precise definitions of the complexity classes involved can be found in \Cref{sec:machines}.

\paragraph{Computational trees.}
Using the Milnor-Oleĭnik-Petrovskii-Thom theorem \cite{Milnor:1964,OleinikPetrovskii,Thom} to bound the number of connected components 
of the $k$th cell decomposition, we then recover the lower bounds of 
Steele and Yao on algebraic decision trees, and the refined result of 
Ben-Or providing lower bounds for algebraic computation trees. In
fact, we even slightly generalise Ben-Or's result as we obtain lower bounds
for the model extended with arbitrary roots, while Ben-Or's original paper
only considered square roots. 

\paragraph{Cucker's result.} 
A different argument based on invariant polynomials provides a proof 
of Cucker's result that $\complexityclass[\realN]{NC}{}\neq\PtimeReal$ by showing that a given polynomial 
that belongs to $\PtimeReal$ cannot be computed within $\complexityclass[\realN]{NC}{}$. In fact,
our main technical result also shows that euclidean division by $2$ cannot be computed by algebraic circuits. 
We present a direct proof of this result inspired from the general entropic co-trees approach (\Cref{sec:directproof}). 
This result is also a direct corollary of \cref{mainthm}.

\paragraph{Mulmuley's result.}
Lastly, following Mulmuley's geometric representation of the \maxflow problem,
we are able to strengthen his celebrated result to obtain lower bounds on
the size (depth) of a \pram over the reals computing this problem. While this result 
holds for real-valued \prams, we explain how it fails to extend to integer-valued
machines, leading to a further strengthening of the result.
%This
%results in the following theorem. % which has \cref{mainthm} as a corollary.
%This theorem is expressed in a realistic cost model for the \prams, by considering
%that the length of an input is the minimal length of a binary word
%representing it.
%\begin{theorem}
%  \label{theorem:main-pram}
%  Let $c$ be a positive integer, $M$ a \pram over $\realN$ with 
%  $2^{O((\log N)^c)}$ processors, with $N$ the length of the inputs. 
%  %
%  Then $M$ does not decide \maxflow in $O((\log N)^c)$ steps.
%\end{theorem}

\paragraph{Euclidean division.} 
We then explain how to further strengthen the result by showing that euclidean
division cannot be computed in this algebraic \prams model in polylogarithmic
time. More precisely, we show that euclidean division by $2$ cannot be computed, 
leading to the result. Intuitively, this is due to the exponential number of breakpoints in the
geometric representation of euclidean division. This extends the direct proof that
division by $2$ cannot be computed by algebraic circuits obtained earlier, using 
the entropic co-tree method introduced in this paper. 

\section{Contents of the paper}

\subsection{Computation models as graphings.}
% [Motivations, Description and Contributions of this paper]

The present work reports on the first investigations into how the interpretation
of programs as graphings (generalised dynamical systems) could shed a new
light on proofs of lower bounds. This interpretation of programs rely on two
ingredients:
\begin{itemize}[noitemsep,nolistsep]
\item the interpretation of models of computation as monoid actions. In our
  setting, we view the computational principles of a computational model as
  elements that act on a configuration space. As these actions can be composed,
  but are not necessarily reversible, it is natural to interpret them as
  % composing
  a monoid acting on a configuration space. 
% (TODO: keep the following? Reprhase?) As, moreover, we are
 % intersted in having control in our computations (knowing whether it is
 % finished, failed, succeeded,\textellipsis), we consider actions that can be
 % decomposed as a part that computes using the principles of computation and a
 % part that just modifies a control state;
\item the realization of programs as graphings. We abstract programs as graphs
  whose vertices are subspaces of the product of the configuration space and the
  control states and edges are labelled by elements of the acting monoid, acting
  on subspaces of vertices.
\end{itemize}

A detailed development of the approach can be found in the first author's habilitation 
thesis \cite{seiller-hdr}. We here only provide the main intuitions and formal definitions 
of the approach.

Let us illustrate how monoid actions and graphings formalise the notions of model of computation and program. We consider Turing machines, and mathematically represent the model as the following monoid action. We consider the space of configurations $X\times S$, where $X = \{\star, 0, 1\}^{\vert \mathbf{Z}\vert}$ of $\mathbf{Z}$-indexed sequences of symbols $\star,0,1$ that are almost always equal to $\star$ and $S$ is a finite set of \emph{control states}. A given point in this configuration space, extended with a chosen \emph{control state}, describes a configuration of a Turing machine. Now, instructions present in the model give rise to maps from $X$ to $X$: for instance moving the working head to the right can be represented as $\mathtt{right} : X\rightarrow X$, $(a_i)_{i\in\mathbf{Z}} \mapsto (a_{i+1})_{i\in\mathbf{Z}}$, that is the usual shift operator. The set of instructions then generates a monoid action $M\acton X$, or equivalently a monoid of endomorphisms of $X$, namely the monoid generated by the maps induced by the instructions. A graphing is then a collection of \emph{edges} consisting of a source (a subspace of $X\times S$) and a realiser (an element of the monoid $M$ and a target state in $S$). The instruction \enquote{if in control state $a$ and the head is reading a $0$ or a $1$, move to the right and move to control state $b$} is then represented as an edge of source the subspace $\{(a_i)_{i\in\mathbf{Z}} \in X \mid a_0\neq \star\}\times\{a\}$ and realised by the map $\mathtt{right}\times (a\mapsto b)$.

The basic intuitions here can be summarised by the following slogan:
\enquote{Computation, as a dynamical process, can be modelled as a dynamical system}. Of
course, the above affirmation cannot be true of all computational processes; for
instance the traditional notion of dynamical system is deterministic. In
practice, one works with a generalisation of dynamical systems named
\emph{graphings}. Introduced in ergodic theory \cite{adams,gaboriaucost,gaboriaul2}, graphings were recently
used in theoretical computer science to define realisability models of linear logic \cite{seiller-igg,seiller-ige,seiller-igf},
in which they are shown to model non-deterministic and probabilistic computation \cite{seiller-ignda,seiller-Markov}.

To do so, we consider that a computation model is given by a set of generators
(representing basic instructions) and its actions on a space
(representing the configuration space). So, in other words, we define a
computation model as an action of a monoid (presented by its generators and
relations) on a space $\alpha : M \acton \Space{X}$. This action can then be
specified to be continuous, mesurable, etc. depending on the properties
we are interested in.

A program in such a model of computation is then viewed as a \emph{graphing}: a graph
whose vertices are subspaces of the configuration space and edges are generators
of the monoid: in this way, the partiality of certain operations and branching
are both allowed. This point of view is very general, as it can allow to study,
as special model of computations, models that can be discrete or continuous,
algebraic, rewriting-based, etc.

A point in the configuration space is sent (potentially non-deterministically)
through the graphing to other points: they constitute the \emph{orbit} of the
point under the graphing. This orbit can be eventually stationnary, meaning that
the computation has reached a result, or have any kind of complex behavior. In
any case, the study of the orbits of a graphing contain a lot of information on
the graphing as will be clear when studying entropy.

\subsection{The general algebraic model}

We are able to introduce \prams acting over integers or real numbers in this setting. 
They can be described as having a finite number of processors, each having access 
to a private memory on top of the shared memory, and able to perform the operations
$+, -, \times$ -- and possibly division and root operations (depending on the models 
considered) --, as well as branching and indirect addressing. Interestingly,
we can represent these machines in the graphings framework in two steps: first,
by defining the (sequential) \sram model, with just one processor; and then by 
performing an algebraic operation corresponding to parallelisation on the 
corresponding monoid action.

The \sram model on integers has, as configuration space 
\(\integerN^{\vert \integerN\vert}\) an infinite array of cells each containing a real 
number, and the operations are the usual ones. The equivalent model on the 
reals has as configuration space \(\realN^{\vert \realN\vert}\); the indexing
by real numbers eases the definition in the presence of indirect addressing, 
even though in practise only a finite number of registers can be 
accessed and/or modified in a finite computation.

Parallel computation is thus modelled \emph{per se}, at the level of the model
of computation. As usual, one is bound to chose a mode of interaction between
the different processes when dealing with shared memory. We will consider here
only the case of \emph{Concurrent Read Exclusive Write} (\crew), i.e. all
processes can read the shared memory concurrently, but if several processes try
to write in the shared memory only the process with the smallest index is
allowed to do so.

The heart of our approach of parallelism is based on commutation. Among all the
instructions, the ones affecting only the private memory of distinct processors
can commute, while it is not the case of two instructions affecting the central
memory. We do so by considering a notion of product for monoids (\cref{def:crew}) 
that generalizes both the direct product and the free product: we specify, through 
a conflict relation, which of the generators can and can not commute, allowing us 
to build a monoid representing the simultaneous action.

To ease the presentation, and since many different algebraic models are considered
in the paper, we introduce a very general abstract model of computation, e.g. the 
monoid action $\amcrealfull$ for machines computing on the reals 
(\Cref{def:fullactionreal}). We then show that all notions of machines considered in 
the present paper can be adequately represented by considering restrictions of 
$\amcrealfull$. The main technical lemma being proved for the most general action 
$\amcrealfull$, it then applies naturally to all those models: algebraic computational 
trees, algebraic circuits, \prams over the reals, etc.

\subsection{Entropy}

We fix an action $\alpha:M\acton \Space{X}$ for the following discussion (it can be 
thought of as the most general action $\amcrealfull$ that captures all algebraic models
of interest in this work).
One important aspect of the representation of abstract programs as graphings is
that restrictions of graphings correspond to known notions from mathematics. In a 
very natural way, a deterministic $\alpha$-graphing defines a partial dynamical 
system. Conversely, a partial dynamical system whose graph is contained in the
\emph{measured preorder} $\{ (x,y)\in\Space{X}^2\mid \exists m\in M, 
\alpha(m)(x)=y \}$ \cite{seiller-towards} can be associated to an $\alpha$-graphing.

The study of deterministic models of computations can thus profit from the methods
of the theory of dynamical systems. In particular, the methods employed in this
paper relate to the classical notion of \emph{topological entropy}. The topological 
entropy of a dynamical system is a value representing the average exponential 
growth rate of the number of orbit segments distinguishable with a finite (but 
arbitrarily fine) precision. The definition is based on the notion of open covers: 
for each finite open cover $\mathcal{C}$, one can compute the entropy of a map 
w.r.t. $\mathcal{C}$, and the entropy of the map is then the supremum of these
values when $\mathcal{C}$ ranges over the set of all finite covers. As we are
considering graphings and those correspond to partial maps, we explain how 
the techniques adapt to this more general setting and define the entropy 
$h(G,\mathcal{C})$ of a graphing $G$ w.r.t. a cover $\mathcal{C}$, as well as
the topological entropy $h(G)$ defined as the supremum of the values
$h(G,\mathcal{C})$ where $\mathcal{C}$ ranges over all finite open covers.

While the precise results described in this paper use the entropy $h_0(G)$ w.r.t. a 
specific cover (similar bounds could be obtained from the topological entropy, but 
would lack precision), the authors believe entropy could play a much more prominent 
role in future proofs of lower bounds. Indeed, while $h_0(G)$ somehow quantifies over 
one aspect of the computation, namely the branchings, the 
topological entropy computed by considering all possible covers provides a much
more precise picture of the dynamics involved. In particular, it provides information
about the computational principles described by the action; this information may lead 
to more precise bounds based on how some principles are much more complex 
than some others, providing some lower bounds on possible simulations of the 
former with the latter.

\subsection{Known lower bounds and entropy}

All the while only the entropy w.r.t. a given cover will be essential in this work,
the overall techniques related to topological entropy provide a much clearer picture 
of the techniques. We first use the following lemma bounding the $k$-cell
decomposition of a given graphing.

\begin{proposition}\label{prop:entropy-k-cell-h0}
Let $G$ be a deterministic graphing. We consider the \emph{state cover entropy} $h_0([G])=\lim_{n\rightarrow\infty}H_{\Space{X}}^{n}([G],\mathcal{S})$ where $\Space{S}$ is the state cover. The cardinality of the $k$-th cell decomposition of $\Space{X}$ w.r.t. $G$, as a function $c(k)$ of $k$, is asymptotically bounded by $g(k)=2^{k.h_0([G])}$, i.e. $c(k)=O(g(k))$.
\end{proposition}

This lemma can be used to derive known lower bounds from the literature, namely
Steele and Yao's result on \emph{algebraic decision trees}. Algebraic decision
trees are finite ternary trees describing a program deciding a subset of
$\realN^n$: each node verifies whether a chosen polynomial% , say $P$,
takes a positive, negative, or null value at the point considered. A $d$-th
order algebraic decision tree is an algebraic decision tree in which all
polynomials are of degree bounded by $d$.

In a very natural manner, an algebraic decision tree can be represented as a
$\iota$-graphing, where $\iota$ is the trivial action on the space $\realN^n$.
We use entropy to provide a bound on the number of connected components of
subsets decided by $\iota$-graphings. These bounds are obtained by combining a
bound in terms of entropy and a variant of the Milnor-Oleĭnik-Petrovskii-Thom theorem due to
Ben-Or. The latter, which we recall below, bounds the number of connected
components of a semi-algebraic set in terms of the number of polynomial
inequalities, their maximal degree, and the dimension of the space considered.
These bounds then lead to the lower bounds on algebraic decision trees obtained
by Steele and Yao.

\begin{corollary}[Steele and Yao \cite{SteeleYao82}]\label{thm:SteeleYao}
A $d$-th order algebraic decision tree deciding a subset $W\subseteq \realN^n$ 
with $N$ connected components has height $\Omega(\log N)$.
\end{corollary}

Mulmuley's original result on \prams without bit operations can also be obtained
as a corollary of this bound, associated with a more involved geometric 
argument. We detail and rephrase the latter (in a way that will allow us to reuse 
it in the last section) in \cref{sec:geometry}. Combining the results of this
section and the previous lemma about the $k$-cell decomposition, one obtains
Mulmuley's theorem.

\begin{corollary}[Mulmuley \cite{Mulmuley99}]\label{thm:Mulmuley}
  Let $M$ be a \pram without bit operations, with at most
  $2^{O((\log N)^c)}$ processors, where $N$ is the length of the inputs and $c$
  any positive integer.
  
  Then $M$ does not decide \maxflow in $O((\log N)^c)$ steps.
\end{corollary}

\subsection{Entropic co-tree and the technical lemma}

The above result of Steele and Yao adapts in a straightforward manner
to a notion of algebraic computation trees describing the construction of the 
polynomials to be tested by mean of multiplications and additions of the 
coordinates. As shown above, this result uses techniques quite similar to 
that of Mulmuley's lower bounds for the model of \emph{\prams without bit
operations}. The authors also quickly realised the techniques is similar to that
used by Cucker in proving that \complexityclass[\realN]{NC}{}$\neq$ \PtimeReal \cite{Cucker92}.

It turns out a refinement of Steele and Yao's method was quickly obtained by
Ben-Or \cite{Ben-Or83} so as to obtain a similar result for an extended notion
of algebraic computation trees allowing for computing divisions and taking
square roots. We adapt Ben-Or techniques within the framework of graphings, in
order to apply this refined approach to Mulmuley's framework, leading to a
strengthened lower bounds result.

This technique refines the bounds on the $k$th-cell decomposition explained
above by considering \emph{entropic co-trees}. They are defined in a similar way
as the $k$-th cell decomposition but further track the different instructions
involved. The result is a directed graph in the form of a tree with all edges
pointing toward the root (hence the name of \emph{co-tree}). This tree can be
understood as a final approximation of the graphing as a computational tree: it
is a computational tree whose behaviour mimics that of the graphing in the last
computation steps leading to acceptation or rejection. At each fixed depth, the
set of vertices of the co-tree refine the $k$-th cell decomposition explained
above. The additional information related to the instructions realising the
edges of the co-tree can be used to derive a set of polynomial
in\pointmedian{}equalities whose total degree can be bounded by the depth, the
state-cover entropy $h_0$, the algebraic degree
(\Cref{def:algebraicdegree}) -- the maximal number of instructions used in a
single edge of the graphing --, and the root degree $\rootdegree{G}$ -- the
largest integer $d$ such that the $d$-th root instruction appears in the
graphing.

This leads to the following technical lemma, from which most of the subsequent 
lower bound results will be obtained.

\begin{lemma}\label{thm:graphingsBenOrsystemsDegree}\label{mainlemma}
Let $G$ be a $\crew^{p}(\amcrealfull)$-computational graphing representative, 
$\seqedges{k}{E}$ the set of length $k$ sequences of edges in $G$, and $D$ 
its algebraic degree. 
Suppose $G$ computes the membership 
problem for $W \subseteq \realN^n$ in $k$ steps, i.e. for each element of 
$\realN^n$, $\pi_{\Space{S}}(G^{k}(x))=\top$ if and only if $x\in W$.
 Then $W$ is a semi-algebraic 
set defined by at most $\card{\seqedges{k}{E}}.2^{k.h_0([G])}$ systems of $pkD$ 
equations of degree at most $\max(2,\rootdegree{G})$ and involving at most 
$pD(k+n)$ variables.
\end{lemma}

%Adapting Ben-Or's method, we obtain a proof of the 
%following result on \emph{computational graphings} in the \amc of algebraic
%computational trees. The class of computational graphings contains the 
%interpretation of algebraic computational trees and the result generalises that 
%of Ben-Or by giving a bound on the number of connected components of 
%the subset decided by a computational graphing. This bound depends on the 
%number of edges of the computational graphing, as well as its algebraic degree.
%
%%\setcounter{theorem}{42}
As a first corollary of this theorem, we obtain a generalisation of 
Ben-Or's result (\Cref{thm:graphingsBenOr}). It follows from the above theorem 
and the Milnor-Oleĭnik-Petrovskii-Thom theorem bounding the number of connected components of 
a semi-algebraic set. A corollary of this general result is Ben-Or's original lower
bounds for Algebraic Computational Trees.

\begin{corollary}[{\cite[Theorem 5]{Ben-Or83}}]\label{ben-or}
  Let $W \subseteq \realN^n$ be any set, and let $N$ be the maximum of the
  number of connected components of $W$ and $\realN^n \setminus W$.
  An algebraic computation tree computing the membership problem for $W$ has
  height $\Omega(\log N)$.
\end{corollary}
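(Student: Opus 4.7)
The plan is to interpret the algebraic computation tree as a graphing and then directly invoke Theorem~\ref{thm:graphingsBenOr}.

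Let $T$ be an algebraic computation tree of height $k$ deciding the membership problem for $W$. First I would form its graphing interpretation $[T]$. By the faithfulness proposition of the preceding subsection, $[T]$ is a computational treeing deciding $W$ in at most $k$ steps. Two quantities appearing in Theorem~\ref{thm:graphingsBenOr} simplify here: since $[T]$ comes from a tree, each control state admits at most one predecessor, and the same argument already used in the proof of Theorem~\ref{SteeleYaoBounds} yields $h_0([T]) = 0$; moreover, every edge of $[T]$ is realised by a single generator of the \amc of algebraic computation trees, so the algebraic degree satisfies $D = 1$.

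Next, applying Theorem~\ref{thm:graphingsBenOr} with these values gives that $W$ has at most $2 \cdot 3^{2k+n+1}$ connected components. The same bound applies to $\realN^n \setminus W$, since swapping the \texttt{YES} and \texttt{NO} labels on the leaves of $T$ yields an algebraic computation tree of the same height $k$ that decides the complement. Therefore the maximum $N$ satisfies $N \leqslant 2 \cdot 3^{2k+n+1}$; taking logarithms, for $n$ fixed one obtains $k = \Omega(\log N)$, which is the claimed lower bound on the height.

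The only delicate step is the vanishing $h_0([T]) = 0$; this is not a new computation but a reuse of the observation already made for algebraic decision trees, namely that the admissible sequences of control states of length $k$ correspond bijectively to the root-to-node branches of $T$, so that the state-cover contributes no exponential factor. All other steps are routine manipulations of the bound supplied by Theorem~\ref{thm:graphingsBenOr}.
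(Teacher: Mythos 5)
Your proof takes the same route as the paper's: interpret $T$ as the computational treeing $[T]$, invoke Theorem~\ref{thm:graphingsBenOr}, and take logarithms. Two small remarks. You set $h_0([T]) = 0$ whereas the paper asserts only $h_0([T]) \leqslant 2$; either constant is harmless since it contributes only an additive term inside the $\Omega$, and your reasoning is in fact the same one the paper already used for decision trees in Theorem~\ref{SteeleYaoBounds}. More importantly, you are right to note that $N$ bounds the connected components of \emph{both} $W$ and its complement, and that the same bound applies to the complement by exchanging the $\texttt{YES}$ and $\texttt{NO}$ leaves; the paper's proof jumps directly from a bound on $W$ to a bound on $N$ without spelling this out, so your version closes a genuine gap in the exposition.
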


The above % technical
lemma is then applied to obtain Cucker's theorem that 
$\complexityclass[\realN]{NC}{} \neq \PtimeReal$. 

\begin{corollary}\label{thm:cucker}
No algebraic circuit of depth $k=\log^i n$ and size $kp$ compute 
$\cuckersproblem$: 
\[\{x\in \realN^\omega \mid \abs{x}=n \Rightarrow x_1^{2^n}+x_2^{2^n}=1 \}.\]
\end{corollary}

Finally, the lemma can be used to obtain a first strengthening of Mulmuley's
lower bound (\Cref{cor:main-pram}). As explained above, Mulmuley's result for \prams
without bit operations (working on integers) corresponds to lower bounds on
division-free algebraic -- i.e. working on the reals -- \prams. By using the
above technical lemma, we are able to extends the lower bounds for computing
\maxflow to algebraic \prams with division and arbitrary roots.

Before stating this result and explaining how we are able to further generalise
it, we sketch Mulmuley's geometric method. While the results here are not new,
the authors' contribution is that of reformulation. In particular, we reorganise
the geometric part of the proof as the combination of two results: a previous
result of Murty and Carstensen showing the existence of an exponential linear
parametrization of \maxflow, and a general geometric statement
(\Cref{thm:mulmuley-geometric}). This reformulation will be used in later
sections to show that a problem easier than \maxflow cannot be computed in the
algebraic \prams model.

\subsection{Mulmuley's geometrization}
Contrarily to Ben-Or's model, the \pram machines do not decide sets of reals but of
integers, making the use of algebraico-geometric results to uncover their
geometry much less obvious. The mechanisms of Mulmuley's proof rely on twin
geometrizations: one of a special optimization problem that can be represented
by a surface in $\realN^3$ (subsections ~\ref{subsec:optprob} and \ref{subsec:param}), 
the other one by building explicitly, given a \pram, a set of algebraic surfaces such that
the points accepted by the machine are exactly the integer points enclosed by
the set of surfaces.

That second part can be abstracted as a relation between the sets decided
by \prams without bit operations and division-free algebraic \prams. In other 
words, the set of integral points accepted by a \pram without bit operations 
coincides with the integral points lying in the set decided by a corresponding 
algebraic \pram. This is stated as (\Cref{prop:NCminusZequivR}).

Finally, the proof is concluded by a purely geometrical theorem
(\Cref{thm:mulmuley-geometric}). We would like to stress here that
this separation in three movement, with a geometrical tour-de-force, is not
explicit in the original article. We nonetheless believe it greatly improves
the exposition (on top of allowing for a stregthening of the results). This geometric
theorem expresses a tension between the two geometrizations. Our
work focuses here only on the construction of a set of algebraic surfaces
representing the computation of a \pram; the remaining part of our proof follows 
Mulmuley's original technique closely.

\paragraph{Building surfaces}

The first step in Mulmuley's proof is to use the parametric complexity results
of \cite{Carstensen:1983} to represent an instance of the decision
problem associated to \maxflow so that it induces naturally a partition of
$\integerN^3$ that can then be represented by a surface. 

The second step is to represent any partition of $\integerN^3$ induced by the
run of a machine by a set of surfaces in $\realN^3$, in order to be able to use
geometric methods.

Let $K$ be a compact of $\realN^3$ and $P=(P_1,\ldots,P_m)$ be a partition of
$\integerN^3 \cap K$. $P$ can be extended to a partition of the whole of $K$ in
a number of ways, as pictured in Fig. \ref{fig:curves-partition}. In particular,
$P$ can always be extended to a partition $P_{\mathrm{alg}}$
(resp. $P_{\mathrm{smooth},}$, $P_{\mathrm{ana}}$) of $K$ such that all the
cells are compact, and the boundaries of the cells are all algebraic
(resp. smooth, analytic) surfaces.

In general, such surfaces have no reason to be easy to compute and the more they
are endowed with structure, the more complicated to compute they are to be.
In the specific case of \prams, the decomposition can naturally be represented
with algebraic surfaces whose degree is bounded. This choice of
representation might not hold for any other model of computation, for which it
might be more interesting to consider surfaces of a different kind.

\begin{figure}
\centering
  \begin{tikzpicture}
    % labels
%    \foreach \i in {0,...,5}
    % \path[blue] (\i,-1) node{\i} (-1,\i) node{\i};
    % loop over the lattice points
    \foreach \i in {0,...,5}
    \foreach \j in {0,...,4}{
      % \draw (\i,\j) circle(3pt);
      % check if (\i,\j) > (2,2)
      \ifnum \j < 1
      \fill[black] (\i,\j) circle(2pt);
      \else
      \ifnum \j > 3
      \fill[black] (\i,\j) circle(2pt);
      \else
      \ifnum \i < 1
      \fill[black] (\i,\j) circle(2pt);
      \else
      \ifnum \i > 4
      \fill[black] (\i,\j) circle(2pt);
      \else
      \fill[red] (\i,\j) circle(2pt);
      \fi
      \fi
      \fi
      \fi
      % \ifnum \i > 2
      % \ifnum \j > 2
      % \fill[red] (\i,\j) circle(2pt);
      % \fi
      % \fi
    };  
    \draw[rounded corners,color=blue] (0.5, 0.5) rectangle (4.5, 3.5) {};
    \draw[purple] (0.75,0.25) cos (1.8,1.3) sin (2.3,0) cos (4.5, 1.3) sin (4.7, 3.1);
    \draw[purple] (0.75,0.25) sin (0.9,3.8) cos (4.7,3.1);
  \end{tikzpicture}
  \caption{Two curves that define the same partition of $\integerN^2$}
  \label{fig:curves-partition}
\end{figure}

This set of algebraic surfaces is here built just as in our description of Ben-Or's 
result using the entropic co-tree: we construct the co-tree approximating the 
computation of a specific \pram and build along the branches of this co-tree a 
system of polynomial equations on a larger space than the space of variables 
actually used by the machine.% , this
% larger space allowing to consider full-fledged division.
This system of integer polynomials of bounded degree then defines surfaces
exactly matching our needs, since the number of varieties and their maximal 
degrees are bounded using the technical lemma described above.

\subsection{A first improvement on Mulmuley's lower bounds}

Interestingly, this allows us to derive from the technical lemma a proof that algebraic \prams,
with division and arbitrary roots instructions on top of the instructions allowed in the
\prams without bit operations model, cannot compute \maxflow in polylogarithmic
time. This is a first (arguably mild) improvement over Mulmuley's proof. 
% (TODO: do we need the following, or should it be staetd eearlier?) By considering
% that the length of an input is the minimal length of a binary word
% representing it, we get a realistic cost model for the \prams, for which we can
% prove: 
%\setcounter{theorem}{62}
\begin{theorem}
  \label{cor:main-pram}
  Let $N$ be a natural number and $M$ be a real-valued \pram with
  at most $2^{O((\log N)^c)}$ processors, %(TODO : Are you sure about this bound, usually then number of processors is polynomials meaning it should be $N^c$)
  where $c$ is any positive integer.
  
  Then $M$ does not decide \maxflow on inputs of length $N$ in $O((\log N)^c)$ steps.
\end{theorem}

Following Mulmuley's original method on division-free machines, one would then
like to lift this result to arithmetic \prams, i.e. \prams over the integers
with division (and possibly root operations). It turns out, however, that this
cannot be done, a result not surprising since such machines can be shown to be
equivalent to usual \prams.

More precisely, we can prove that algebraic \prams cannot compute euclidian
division. We provide two proofs of this result. We first show that algebraic
circuits (i.e. with division but no root operations) cannot compute euclidean division by $2$
using a (new) direct proof method bounding the number of continuous
pieces of the piece-wise continuous function computed by the circuit. This direct
proof is a specific but more concrete instance of the general entropic co-tree
approach, and illustrates well how the lower bounds are obtained. We further explain
how to abstract the mathematical problem corresponding to generalising this
bound to root operations, and illustrate the difficulty of doing so (even with
only square roots).

In a second step, we use our technical lemma, together with the geometric 
result of Mulmuley, to provide lower bounds for computing the remainder modulo $2$ 
 in the general algebraic \prams model. This leads to the main technical result
of this paper.

\setcounter{temp}{\value{theorem}}
\setcounterref{theorem}{mainthm}
\addtocounter{theorem}{-1}
\begin{theorem}
  Let $N$ be a natural number and $M$ be a real-valued \pram with
  at most $2^{O((\log N)^c)}$ processors, where $c$ is
  any positive integer.
  
  Then $M$ does not compute the euclidean division modulo $2$ on inputs of length $N$ in $O((\log N)^c)$ steps.
\end{theorem}
\setcounter{theorem}{\value{temp}}

This result shows a fundamental difference in expressive power between models
with real-valued division and models allowed to compute the euclidean division.
Indeed, this shows that a parallel model over the reals, even if allowed division
and arbitrary roots, cannot compute euclidean division in polylogarithmic time.

%If we call \NCInteger the class of computation problems that can
%be decided by a \pram over integers in time logarithmic in the length of the
%inputs and a number of processors polylogarithmic in the length of the inputs,
%we have proved that
%\[ \text{\NCInteger} \neq \text{\Ptime}\]

\subsection{Conclusion}

This work slightly strengthens Mulmuley's lower bounds on "\prams without bit
operations". More importantly, it shows how 
the semantic techniques based on abstract models of computation and graphings
can shed new light on several lower bound techniques. In particular, it establishes 
some relationship between the lower bounds and the notion of entropy which, 
although arguably still superficial in this work, could potentially become deeper 
and provide new insights and finer techniques. 

Showing that the interpretation of programs as graphings can translate, and even
refine, such strong lower bounds results is also important from another perspective. 
Indeed, the techniques of Ben-Or and Mulmuley (as well as other results of e.g.
Cucker \cite{Cucker92}, Yao \cite{YaoBetti}) seem at first sight restricted to algebraic 
models of computation due to their use of the Milnor-Oleĭnik-Petrovskii-Thom theorem (or other geometric
arguments) which holds only for real semi-algebraic sets. However, the first author's 
characterisations of Boolean complexity classes in terms of graphings acting on algebraic 
spaces \cite{seiller-ignda,seiller-hdr} opens the possibility of using such algebraic methods to provide
lower bounds for boolean models of computation.

\setcounter{theorem}{0}

\section{Programs as Dynamical systems}

\subsection{Abstract models of computation and graphings}
\label{sec:amc-graphings}

We consider computations as a dynamical process, hence model them as a dynamical
systems with two main components: a space \(\Space{X}\) that abstracts the notion
of configuration space (excluding control states) and a monoid acting on this space that 
represents the different operations allowed in the model of computation. Although the 
notion of \emph{space} considered can vary (one could consider e.g. topological spaces,
measure spaces, topological vector spaces), we restrict ourselves to topological 
spaces in this work.

This is part of an approach introduced by the first author under the name of \enquote{mathematical
informatics} \cite{seiller-hdr}. We here only
provide the basic definitions needed to establish the results in this paper. We note however
that the techniques do not apply solely to algebraic models of computation but 

\begin{definition}
  An \emph{abstract model of computation} (\amc) is a monoid action 
  $\alpha: M\acton\Space{X}$, i.e. a monoid morphism from $M$ to the group of
  endomorphisms of $\Space{X}$.
  The monoid $M$ is often given by a set \(G\) of generators and a set of
  relations \(\relations{R}\). 
 We denote such an \amc as $\AMC[X]{G}{R}{\alpha}$.
\end{definition}

Programs in an \amc $\AMC[X]{G}{R}{\alpha}$ is then defined as \emph{graphings},
i.e. graphs whose vertices are subspaces of the space \(\Space{X}\)
(representing sets of configurations on which the program act in the same way)
and edges are labelled by elements of \(\MonGaR{G}{R}\), together with a global
control state. More precisely, we use here the notion of \emph{topological
  graphings}\footnote{While \enquote{measured} graphings were already considered
  \cite{seiller-igg}, the definition adapts in a straightforward manner to allow for
  other notions such as graphings over topological vector spaces -- which would
  be objects akin to the notion of quiver used in representation theory.} 
  \cite{seiller-igg}.
  
\begin{definition}
  An $\alpha$-graphing representative $G$ w.r.t. a monoid action $\alpha: M\acton\Space{X}$ 
  is defined as a set of \emph{edges} $E^{G}$ together with a map that assigns to
  each element $e\in E^{G}$ a pair $(S^{G}_{e},m^{G}_{e})$ of a subspace 
  $S^{G}_{e}$ of $\Space{X}$ -- the \emph{source} of $e$ -- and an element 
  $m^{G}_{e}\in M$ -- the \emph{realiser} of $e$.
\end{definition}

While graphing representatives are convenient to manipulate, they do provide too
much information about the programs. Indeed, if one is to study programs as
dynamical systems, the focus should be on the \emph{dynamics}, i.e. on how the
object acts on the underlying space. The following notion of \emph{refinement} 
captures this idea that the same dynamics may have different graph-like 
representations.

\begin{definition}[Refinement]
  An $\alpha$-graphing representative $F$ is a refinement of an $\alpha$-graphing 
  representative $G$, noted $F\leqslant G$, if there exists a partition 
  $(E^{F}_{e})_{e\in E^{G}}$ of $E^{F}$ such that $\forall e\in E^{G}$:
  $$\begin{array}{c}
    \forall f\neq f'\in E^{F}_{e},~ S^{F}_{f} \mathbin{\triangle} S^{F}_{f'}
    = \emptyset;\\
    \hspace{0.5cm}
    \forall f\in E^{F}_{e},~ m^{F}_{f}=m_{e}^{G}.
    \end{array}
  $$
  This induces an equivalence relation defined as 
  \[ F\simref G \Leftrightarrow 
  \exists H, ~ H\leqslant F \wedge H\leqslant G.\]
\end{definition}

The notion of \emph{graphing} is therefore obtained by considering the quotient 
of  the set of graphing representatives w.r.t. $\simref$. Intuitively, this corresponds
to identifying graphings whose \emph{actions on the underlying space are equal}.

\begin{definition}
  An $\alpha$-\emph{graphing} is an equivalence class of $\alpha$-graphing 
  representatives w.r.t. the equivalence relation $\simref$.
\end{definition}

We can now define the notion of abstract program. These are defined as
graphings 

\begin{definition}
  Given an \amc $\alpha:M\acton\Space{X}$, an \emph{$\alpha$-program} 
  $A$ is a $\bar{\alpha}$-graphing $G^{A}$ w.r.t. the monoid action 
  $\bar{\alpha}=\alpha\times\mathfrak{S}_{k}\acton \Space{X}\times\Space{S^{A}}$,
  where $\Space{S^{A}}$ is a finite set of \emph{control states} of cardinality
  $k$ and $\mathfrak{S}_{k}$ is the group of permutations of $k$ elements.
\end{definition}

Now, as a sanity check, we will show how the notion of graphing do capture 
the dynamics as expected. For this, we restrict to \emph{deterministic
graphings}, and show the notion relates to the usual notion of dynamical
system.

\begin{definition}
  An $\alpha$-graphing representative $G$ is deterministic if for all $x\in\Space{X}$ 
  there is at most one $e\in\ E^{G}$ such that $x\in S^{G}_{e}$.
  An $\alpha$-graphing is \emph{deterministic} if its representatives are deterministic.
   An abstract program is \emph{deterministic} if its underlying graphing is
  deterministic.
 \end{definition}
 
 \begin{lemma}
 There is a one-to-one correspondence between the set of deterministic graphings
 w.r.t. the action $M\acton\Space{X}$ and the set of partial 
 dynamical systems $f:\Space{X}\hookrightarrow\Space{X}$ whose graph
 is contained in the preorder\footnote{When $\alpha$ is a group action
 acting by measure-preserving transformations, this is a \emph{Borel 
 equivalence relation} $\mathcal{R}$, and the condition stated here boils
 down to requiring that $f$ belongs to the \emph{full group} of $\alpha$.}
 $\{(x,y)\mid \exists m\in M, \alpha(m)(x)=y\}$.
 \end{lemma}
 
 Lastly, we define some restrictions of $\alpha$-programs that will be important
 later. First, we will restrict the possible subspaces considered as sources of the
 edges, as unrestricted $\alpha$-programs could compute even undecidable 
 problems by, e.g. encoding it into a subspace used as the source of an edge.
 Given an integer $k\in\omega$, we define the following subspaces of 
$\realN^{\omega}$, for $\star\in\{>,\geqslant,=,\neq,\leqslant,<\}$:
\[ \realN^{\omega}_{k\star 0}=\{(x_{1},\dots,x_{k},\dots)\in
	\realN^{\omega}\mid x_{k}\star 0\}.\]

\begin{definition}[Computational graphings] 
Let $\AMC{G}{R}{\alpha}$ be an \amc. 
A \emph{computational $\alpha$-graphing} is an $\alpha$-graphing $T$  with 
distinguished states $\top$, $\bot$ which admits a finite representative such that 
each edge $e$ has its source equal to one among 
$\realN^{\omega}$, 
$\realN^{\omega}_{k\geqslant 0}$, 
$\realN^{\omega}_{k\leqslant 0}$, 
$\realN^{\omega}_{k> 0}$, 
$\realN^{\omega}_{k< 0}$, 
$\realN^{\omega}_{k=0}$, and 
$\realN^{\omega}_{k\neq 0}$.
\end{definition}

\begin{definition}[treeings] 
Let $\AMC{G}{R}{\alpha}$ be an \amc. 
An \emph{$\alpha$-treeing} is an acyclic 
and finite $\alpha$-graphing, i.e. an $\alpha$-graphing $F$ for which 
there exists a finite $\alpha$-graphing representative $T$ whose set of control states 
$\Space{S^{T}}=\{0,\dots,s\}$ can be endowed with an order $<$ such that every edge of 
$T$ is state-increasing, i.e. for each edge $e$ of source $S_{e}$, for all $x\in S_{e}$, 
\[ \pi_{\Space{S^{T}}}(\alpha(m_{e})(x)>\pi_{\Space{S^{T}}}(x),\] where 
$\pi_{\Space{S^{T}}}$ denotes the projection onto the control states space.

A \emph{computational $\alpha$-treeing} is an $\alpha$-treeing $T$ which is a 
computational $\alpha$-graphing with the distinguished states $\top$, $\bot$ being 
incomparable maximal elements of the state space.
\end{definition}

\subsection{Quantitative Soundness}

As mentioned in the introduction, we will use in this paper the property of 
\emph{quantitative soundness} of the dynamic semantics just introduced.
This result is essential, as it connects the time complexity of programs in
the model considered (e.g. \prams, algebraic computation trees) with the
length of the orbits of the considered dynamical system. We here only state 
quantitative soundness for \emph{computational graphings}, i.e. graphings
that have distinguished states $\top$ and $\bot$ representing acceptance
and rejection respectively. In other words, we consider graphings which 
compute \emph{decision problems}.

Quantitative soundness is expressed with respect to a translation of machines
as graphings, together with a translation of inputs as points of the configuration 
space. In the following section, these operations are defined for each model
of computation considered in this paper. In all these cases, the representation 
of inputs is straightforward.

\begin{definition}
Let \amc $\alpha$ be an abstract model of computation, and $\mathbb{M}$ a
model of computation. A \emph{translation} of $\mathbb{M}$ w.r.t. $\alpha$ is
a pair of maps $\Interpret{\cdot}$ which associate to each machine $M$ in 
$\mathbb{M}$ computing a decision problem a computational $\alpha$-graphing 
$\Interpret{M}$ and to each input $\iota$ a point $\Interpret{\iota}$ in 
$\Space{X}\times\Space{S}$.
\end{definition}

\begin{definition}
Let \amc $\alpha$ be an abstract model of computation, $\mathbb{M}$ a model 
of computation. The \amc $\alpha$ is \emph{quantitatively sound} for $\mathbb{M}$ 
w.r.t. a translation $\Interpret{\cdot}$ if for all 
machine $M$ computing a decision problem and input $\iota$, $M$ accepts $\iota$
(resp. rejects $\iota$) in $k$ steps if and only if $\Interpret{M}^{k}(\Interpret{\iota})=\top$ 
(resp. $\Interpret{M}^{k}(\Interpret{\iota})=\bot$).
\end{definition}

\subsection{The algebraic \amcs}

We now define the actions $\amcfull$ and $\amcrealfull$. Those will capture all algebraic
models of computation considered in this paper, and the main lemma (\cref{mainlemma})
will be stated for this monoid action. All lower bounds results recovered from the literature, as 
well as the new lower bounds obtained in this work will be obtained as corollaries of this
technical lemma.

As we intend to consider \prams at some point, we consider from the beginning 
the memory of our machines to be separated in two infinite blocks
$\integerN^{\omega}$, intended to represent sets of both \emph{shared} and
\emph{private} memory cells\footnote{Obviously, this could be done without any
explicit separation of the underlying space, but this will ease the constructions 
of the next section.}. 

\begin{definition}
The underlying space of $\amcfull$ is
$\Space{X}= \integerN^{\integerN}\cong\integerN^{\omega}\times
\integerN^{\omega}$. The set of generators
is defined by their action on the underlying space, writing 
$k//n$ the floor $\floor{k/n}$ of $k/n$ with the conventions that $k//n=0$ 
when $n=0$ and $\sqrt[n]{k}=0$ when $k\leqslant 0$:
\begin{itemize}
\item $\boconst{i}{c}$ initialises the register $i$ with the constant $c\in\integerN$: $\amcfull(\boconst{i}{c})(\vec{x}) =
  (\vec{x}\{x_i:= c\})$;
\item $\bogen{i}{j}{k}$ ($\star\in\{+,-,\times,//\}$) performs the algebraic operation $\star$ on the values in registers $j$ and $k$ and store the result in register $i$:  $\amcfull(\bogen{i}{j}{k})(\vec{x}) =
  (\vec{x}\{x_i:= x_{j}\star x_k\})$;
\item $\bogenconst{i}{j}{c}$ ($\star\in\{+,-,\times,//\}$) performs the algebraic operation $\star$ on the value in register $j$ and the constant $c\in\integerN$ and store the result in register $i$:  $\amcfull(\bogenconst{i}{j}{c})(\vec{x}) =
  (\vec{x}\{x_i:= c\star x_{j}\})$;
\item $\copyy{i}{j}$ copies the value stored in register $j$ in register $i$: $\amcfull(\copyy{i}{j})(\vec{x}) =
  (\vec{x}\{x_i:= x_j\})$;
\item $\refcopy{i}{j}$ copies the value stored in register $j$ in the register whose index is the value stored in register $i$: $\amcfull(\refcopy{i}{j})(\vec{x}) =
  (\vec{x}\{x_{x_i}:= x_j\})$;
\item $\copyref{i}{j}$ copies the value stored in the register whose index is the value stored in register $j$ in register $i$: $\amcfull(\copyref{i}{j})(\vec{x}) =
  (\vec{x}\{x_i:= x_{x_j}\})$;
\item $\bosqrtn{i}{j}$ computes the floor of the $n$-th root of the value stored in register $j$ and store the result in register $i$: $\amcfull(\bosqrtn{i}{j})(\vec{x}) =
  (\vec{x}\{x_i:= \sqrt[n]{x_j}\})$.
%\item $\boeuclidivide{i}{j}{k}$ performs the euclidean division of the value stored in register $j$ by the value stored in register $ $\alpha(\boeuclidivide{i}{j}{k})(\vec{x})=
%  (\vec{x}\{x_i:= x_{j}//x_k\})$.
\end{itemize}
\end{definition}

We also define the real-valued equivalent, which will be essential for the
proof of lower bounds. The corresponding \amc $\amcrealram$ is defined in 
the same way than the integer-valued one, but with underlying space
\(\Space{X}= \realN^{\integerN}\) and with instructions adapted accordingly:
\begin{itemize}[noitemsep,nolistsep]
\item the division and $n$-th root operations are the usual operations on the reals;
\item the three copy operators are only effective on integers.
\end{itemize}
Note that we consider the space $\realN^{\realN}$, i.e. an uncountable number of 
potential registers. This appears to us as the simplest way to represent the model
of real-valued \prams which includes indirect addressing. In practise, only a finite
number of registers can be accessed during a finite execution (since indexes need
to be computed), and therefore a countable number of potential registers would be
enough. However this poses the issue of defining the semantics properly: using 
maps from $\realN$ to $\naturalN$ do not work because this creates side-effects
giving more expressive power to the machines (e.g. considering that indirect
addressing $\refcopy{i}{j}$ modifies the register of index $\floor{i}$ -- where 
$\floor{\cdot}$ is the floor function -- turns out to 
provide a way to define euclidean division!). On the other hand, defining a dynamic
allocation of register should be possible but would complicate the definitions.

\begin{definition}\label{def:fullactionreal}
The underlying space of $\amcrealfull$ is
$\Space{X}= \realN^{\realN}\cong\realN^{\realN}\times
\realN^{\realN}$. The set of generators
is defined by their action on the underlying space, with the conventions that $k/n=0$ 
when $n=0$ and $\sqrt[n]{k}=0$ when $k\leqslant 0$:
\begin{itemize}
\item $\boconst{i}{c}$ initialises the register $i$ with the constant $c\in\realN$: $\amcrealfull(\boconst{i}{c})(\vec{x}) =
  (\vec{x}\{x_i:= c\})$;
\item $\bogen{i}{j}{k}$ ($\star\in\{+,-,\times,/\}$) performs the algebraic operation $\star$ on the values in registers $j$ and $k$ and store the result in register $i$:  $\amcrealfull(\bogen{i}{j}{k})(\vec{x}) =
  (\vec{x}\{x_i:= x_{j}\star x_k\})$;
\item $\bogenconst{i}{j}{c}$ ($\star\in\{+,-,\times,/\}$) performs the algebraic operation $\star$ on the value in register $j$ and the constant $c\in\realN$ and store the result in register $i$:  $\amcrealfull(\bogenconst{i}{j}{c})(\vec{x}) =
  (\vec{x}\{x_i:= c\star x_{j}\})$;
\item $\copyy{i}{j}$ copies the value stored in register $j$ in register $i$: $\amcrealfull(\copyy{i}{j})(\vec{x}) =
  (\vec{x}\{x_i:= x_j\})$;
\item $\refcopy{i}{j}$ copies the value stored in register $j$ in the register whose index is the value stored in register $i$: $\amcrealfull(\refcopy{i}{j})(\vec{x}) =
  (\vec{x}\{x_{x_i}:= x_j\})$;
\item $\copyref{i}{j}$ copies the value stored in the register whose index is the value stored in register $j$ in register $i$: $\amcrealfull(\copyref{i}{j})(\vec{x}) =
  (\vec{x}\{x_i:= x_{x_j}\})$;
\item $\bosqrtn{i}{j}$ computes the $n$-th real root of the value stored in register $j$ and store the result in register $i$: $\amcrealfull(\bosqrtn{i}{j})(\vec{x}) =
  (\vec{x}\{x_i:= \sqrt[n]{x_j}\})$.
%\item $\boeuclidivide{i}{j}{k}$ performs the euclidean division of the value stored in register $j$ by the value stored in register $ $\alpha(\boeuclidivide{i}{j}{k})(\vec{x})=
%  (\vec{x}\{x_i:= x_{j}//x_k\})$.
\end{itemize}
\end{definition}

\section{Algebraic models of computations as \amcs}\label{sec:machines}

\subsection{Algebraic computation trees}

%We say that a \sram such that all the conditionals are
%of the form \((\ell,\command{if~X_{i}=0~goto~\ell^{0}~else~\ell^{1}})\) with
%\(\ell^0,\ell^1 > \ell\) is \emph{monotonous} as its control flow only goes
%forward. A monotonous real-valued algebraic \sram with square roots and without
%indirect references is called an
%\emph{algebraic computation tree}: indeed, we recover in this way the notion 
%defined by, e.g., Ben-Or \cite{Ben-Or83}.

The first model considered here will be that of \emph{algebraic computation
tree} as defined by Ben-Or \cite{Ben-Or83}. Let us note this model refines
the \emph{algebraic decision trees} model of Steele and Yao \cite{SteeleYao82},
a model of computation consisting in binary trees for which each branching
performs a test w.r.t. a polynomial and each leaf is labelled $\texttt{YES}$ or 
$\texttt{NO}$. Algebraic computation trees only allow tests w.r.t. $0$, while 
additional vertices corresponding to algebraic operations can be used to
construct polynomials.

\begin{definition}[algebraic computation trees, {\cite{Ben-Or83}}]
  An \emph{algebraic computation tree} on $\realN^n$ is a binary tree $T$ with an
  function assigning:
  \begin{itemize}
  \item to any vertex $v$ with only one child (simple vertex) an operational
    instruction of the form
    $f_v = f_{v_i} \star f_{v_j}$,
     $f_v = c \star f_{v_i}$, or
     $f_v = \sqrt{f_{v_i}}$,
    where $\star \in \{ +, -, \times, /\}$, ${v_i}, {v_j}$ are ancestors of $v$
    and $c\in \realN$ is a constant;
  \item to any vertex $v$ with two children a test instruction of the form
   $
      f_{v_i} \star 0
   $,
    where $\star \in \{ >,=,\geqslant \}$, and $v_i$ is an ancestor of $v$ 
    or $f_{v_i} \in \{x_1,\dots,x_n\}$;
  \item to any leaf an output $\texttt{YES}$ or $\texttt{NO}$.
  \end{itemize}
\end{definition}

%An example of algebraic computation tree is given in Figure \ref{fig:act}.

Let $W \subseteq \realN^n$ be any set and $T$ be an algebraic computation
tree. We say that $T$ computes the membership problem for $W$ if for all $x \in
\realN^n$, the traversal of $T$ following $x$ ends on a leaf labelled
\texttt{YES} if and only if $x\in W$.

As algebraic computation trees are \emph{trees}, they will be represented by treeings,
i.e. $\amcrealfull$-programs whose set of control states can be ordered so that any
edge in the graphing is strictly increasing on its control states component.

\begin{definition}
Let $T$ be a computational $\amcrealfull$-treeing. The set of inputs $\inputs{T}$ 
(resp. outputs $\outputs{T}$) is the set of integers $k$ (resp. $i$) such that there 
exists an edge $e$ in $T$ satisfying that:
\begin{itemize} 
\item either $e$ is realised by one of $\boadd{i}{j}{k}$, $\boadd{i}{k}{j}$,
$\bosubstract{i}{j}{k}$, $\bosubstract{i}{k}{j}$, $\bomultiply{i}{j}{k}$, 
$\bomultiply{i}{k}{j}$, $\bodivide{i}{j}{k}$, $\bodivide{i}{k}{j}$
$\boaddconst{i}{k}{c}$, $\bosubstractconst{i}{k}{c}$, $\bomultiplyconst{i}{k}{c}$,
$\bodivideconst{i}{k}{c}$, $\bosqrtn{i}{k}$;
\item or the source of $e$ is one among 
$\realN^{\omega}_{k\geqslant 0}$, 
$\realN^{\omega}_{k\leqslant 0}$, 
$\realN^{\omega}_{k> 0}$, 
$\realN^{\omega}_{k< 0}$, 
$\realN^{\omega}_{k=0}$, and 
$\realN^{\omega}_{k\neq 0}$.
\end{itemize}

The \emph{effective input space} $\inputspaceE{T}$ of an $\amcact$-treeing $T$ is defined 
as the set of indices $k\in\omega$ belonging to $\inputs{T}$ but not to $\outputs{T}$. The
\emph{implicit input space} $\inputspaceI{T}$ of an $\amcact$-treeing $T$ is defined 
as the set of indices $k\in\omega$ such that $k\not \in \outputs{T}$.
\end{definition}

\begin{definition}
Let $T$ be an $\amcrealfull$-treeing, and assume that $1,2,\dots,n \in \inputspaceI{T}$. We say 
that $T$ computes the membership problem for $W\subseteq\realN^{n}$ in $k$ steps if $k$ 
successive iterations of $T$ restricted to 
$\{(x_i)_{i\in\omega} \in \realN^\omega \mid \forall 1\leqslant i\leqslant n, x_i=y_i\}\times\{0\}$ 
reach state $\top$ if and only if $(y_1,y_2,\dots,y_n)\in W$. 
\end{definition}

\begin{remark}
Let $\vec{x}=(x_1,x_2,\dots,x_n)$ be an element of $\realN^n$ and consider two elements 
$a,b$ in the subspace $\{(y_1,\dots, y_n,\dots)\in\realN^{\omega}
\mid\forall 1\geqslant i\geqslant n, y_i=x_i\}\times\{0\}$. One easily checks
that $\pi_{\Space{S}}(T^{k}(a))=\top$ if and only if $\pi_{\Space{S}}(T^{k}(b))
=\top$, where $\pi_{\Space{S}}$ is the projection onto the state space and 
$T^{k}(a)$ represents the $k$-th iteration of $T$ on $a$. It is therefore
possible to consider only a standard representative $\Interpret{\vec{x}}$ of $\vec{x} \in 
\realN^n$, for instance $(x_1,\dots, x_n,0,0,\dots) \in \realN^\omega$, to
decide whether $\vec{x}$ is accepted by $T$.
\end{remark}

\begin{definition}
  Let $T$ be an algebraic computation tree on $\realN^{n}$, and $T^{\circ}$ 
  be the associated directed acyclic graph, built from $T$ by merging all the 
  leaves tagged $\texttt{YES}$ in one leaf $\top$ and all the leaves tagged 
  $\texttt{NO}$ in one leaf $\bot$. Suppose the internal vertices are numbered
  $\{n+1,\dots, n+\ell \}$; the numbers $1,\dots,n$ being reserved for the input.

  We define $\Interpret{T}$ as the $\amcact$-graphing with control states
  $\{n+1,\dots, n+\ell, \top, \bot \}$ and where each internal vertex $i$ of
  $T^{\circ}$ defines either:
  \begin{itemize}
  \item a single edge of source $\realN^{\omega}$ realized by:
    \begin{itemize}
    \item $(\bogen{i}{j}{k}, i \mapsto t)$ ($\star\in\{+,-,\times\}$) if $i$ is associated to $f_{v_i} =
      f_{v_j} \star f_{v_k}$ and $t$ is the child of $i$;
    \item $(\bogenconst{i}{j}{c}, i \mapsto t)$ ($\star\in\{+,-,\times\}$) if $i$ is associated to $f_{v_i} =
     c \star f_{v_k}$ and $t$ is the child of $i$;
%    \item $(\bomultiply{i}{j}{k}, i \mapsto t)$ if $i$ is associated to $f_{v_i} =
%      f_{v_j} \times f_{v_k}$ and $t$ is the child of $i$;
%    \item $(\boadd{i}{c}{k}, i \mapsto t)$ if $i$ is associated to $f_{v_i} =
%      c + f_{v_k}$ and $t$ is the child of $i$;
%    \item $(\bosubstract{i}{c}{k}, i \mapsto t)$ if $i$ is associated to $f_{v_i} =
%      c - f_{v_k}$ and $t$ is the child of $i$;
%    \item $(\bomultiply{i}{c}{k}, i \mapsto t)$ if $i$ is associated to $f_{v_i} =
%      c \times f_{v_k}$ and $t$ is the child of $i$;
    \end{itemize}
  \item a single edge of source $\realN^{\omega}_{k\neq 0}$ realized by:
    \begin{itemize}
    \item $(\bodivide{i}{j}{k}, i \mapsto t)$ if $i$ is associated to
      $f_{v_i} = f_{v_j} / f_{v_k}$ and $t$ is the child of $i$;
    \item $(\bodivideconst{i}{k}{c}, i \mapsto t)$ if $i$ is associated to
      $f_{v_i} = c / f_{v_k}$ and $t$ is the child of $i$;
    \end{itemize}
  \item a single edge of source $\realN^{\omega}_{k\geqslant 0}\times\{i\}$ 
  realized by $(\bosqrt[2]{i}{k}, i \mapsto t)$ if $i$ is associated to 
  $f_{v_i} = \sqrt{f_{v_k}}$ and $t$ is the child of $i$;
  \item two edges if $i$ is associated to 
      $f_{v_i} \star 0$ (where $\star$ ranges in $>$, $\geqslant$) and its two sons 
      are $j$ and $k$. Those are of respective sources
      $\realN^{\omega}_{k\star 0}\times \{i\}$
      and 
      $\realN^{\omega}_{k\bar{\star} 0}
      \times \{i\}$ (where $\bar{\star}='\leqslant'$ if $\star='>'$, $\bar{\star}='<'$ if 
      $\star='\geqslant'$, and $\bar{\star}='\neq'$ if $\star='='$.),
      respectively realized by $(\identity, i \mapsto j)$ and $(\identity, i \mapsto k)$ 
  \end{itemize}
\end{definition}

\begin{proposition}\label{prop:fullyact}
Any algebraic computation tree $T$ of depth $k$ is faithfully and quantitatively 
interpreted as the $\amcrealfull$-program $\Interpret{T}$.
I.e. $T$ computes the membership problem for $W\subseteq\realN^{n}$ if
and only if $\Interpret{T}$ computes the membership problem for $W$ in $k$ steps
-- that is $\pi_{\Space{S}}(\Interpret{T}^k(\Interpret{\vec{x}}))=\top$.
%
%Moreover, for any computational $\amcrealfull$-treeing $G$ using $\bosqrtn{i}{k}$
%only when $n=2$ and computing the membership problem for $W\subseteq\realN^{n}$, 
%there exists an algebraic computation tree $T_{W}$ computing the membership
%problem for $W$.
\end{proposition}

As a corollary of this proposition, we get quantitative soundness.

\begin{theorem}\label{thm:soundACT}\label{thm:quantsoundact}
  The representation of \acts as $\amcrealfull$-programs is quantitatively sound.
\end{theorem}

\subsection{Algebraic circuits}

As we will recover Cucker's proof that $\complexityclass[\realN]{NC}{}\neq\PtimeReal$, we
introduce the model of \emph{algebraic circuits} and their representation 
as $\amcrealfull$-programs.

\begin{definition}\label{def:circuitsreels}
An algebraic circuit over the reals with inputs in $\realN^n$ is a finite 
directed graph whose vertices have labels in $\naturalN\times\naturalN$, 
that satisfies the following conditions:
\begin{itemize}
\item There are exactly $n$ vertices $v_{0,1},v_{0,2},\dots,v_{0,n}$ with 
first index $0$, and they have no incoming edges;
\item all the other vertices $v_{i,j}$ are of one of the following types:
\begin{enumerate}
\item arithmetic vertex: they have an associated arithmetic operation 
$\{+,-,\times,/\}$ and there exist natural numbers $l, k, r, m$ with $l, k < i$ 
such that their two incoming edges are of sources $v_{l,r}$ and $v_{k,m}$;
\item constant vertex: they have an associated real number $y$ and no 
incoming edges;
\item sign vertex: they have a unique incoming edge of source $v_{k,m}$ 
with $k < i$.
\end{enumerate}
\end{itemize}
We call \emph{depth} of the circuit the largest $m$ such that there exist 
a vertex $v_{m,r}$, and \emph{size} of the circuit the total number of vertices. 
A circuit of depth $d$ is \emph{decisional} if there is only one vertex $v_{d,r}$ 
at level $d$, and it is a sign vertex; we call $v_{d,r}$ the \emph{end vertex} of 
the decisional circuit. 
\end{definition}

%\begin{definition}
%We define the class $\NCReal$ as the families of subsets of $\realN^k$ that can 
%be decided by algebraic circuits of polynomial size and polylogarithmic 
%depth\footnote{We note this definition is not similar to the definition in the boolean 
%case since the latter requires a notion of uniformity. We refer to Cucker's paper for 
%a discussion of these issues, but will not detail them further. Indeed, all lower bounds 
%found in the present paper are non-uniform and Cucker's result, indeed, is a result of
%non-uniform lower bounds.}.
%\end{definition}

To each vertex $v$ one inductively associates a function $f_v$ of the input 
variables in the usual way, where a sign node with input $x$ returns $1$ if 
$x > 0$ and $0$ otherwise. The accepted set of a decisional circuit 
$C$ is defined as the set $S\subseteq\realN^n$ of points whose image by the 
associated function is $1$, i.e. $S=f_v^{-1}(\{1\})$ where $v$ is the end vertex 
of $C$. 
 
We represent algebraic circuit as computational $\amcrealfull$-treeings as follows. 
The first index in the pairs $(i,j)\in\naturalN\times\naturalN$ are represented as 
states, the second index is represented as an index in the infinite product 
$\realN^\omega$, and vertices are represented as edges.

\begin{definition}
Let $C$ be an algebraic circuit, defined as a finite directed graph $(V,E,s,t,\ell)$ 
where $V\subset\naturalN\times\naturalN$, and 
$\ell:V\rightarrow \{\mathrm{init},+,-,\times,/,\mathrm{sgn}\}\cup\{\mathrm{const}_c\mid c\in \realN\}$ 
is a vertex labelling map. We suppose without loss of generality that for each 
$j\in\naturalN$, there is at most one $i\in\naturalN$ such that $(i,j)\in V$. We 
define $N$ as $\max\{j\in\naturalN \mid \exists i\in \naturalN, (i,j)\in V\}$.
% and for 
%each $v\in V$ such that $\ell(v)=\mathrm{const}_c$ or $\ell(v)=\mathrm{sgn}$, 
%we associate a unique natural number $n_v>N$.

We define the $\amcrealfull$-program $\Interpret{C}$ by choosing as set of control 
states $\{ i\in \naturalN \mid \exists j\in \naturalN, (i,j)\in V\}$ and the collection 
of edges $\{e_{(i,j)}\mid i\in\naturalN^*, j\in\naturalN, (i,j)\in V\}\cup\{e^+_{(i,j)}\mid i\in\naturalN^*, j\in\naturalN, (i,j)\in V, \ell(v)=\mathrm{sgn}\}$ 
realised as follows:
\begin{itemize}
\item if $\ell(v)=\mathrm{const}_c$, the edge $e_{(i,j)}$ is realised as 
$(\boaddconst{j}{c}{n_v}, 0 \mapsto i)$ of source $\realN^{\omega}_{n_v= 0}\times \{0\}$;
\item if $\ell(v)=\star$ ($\star\in\{+,-,\times\}$) of incoming edges $(k,l)$ and $(k',l')$, 
the edge $e_{(i,j)}$ is of source $\realN^\omega\times\{\max(k,k')\}$ and realised by 
$(\bogen{j}{l}{l'}, \max(k,k') \mapsto i)$;
%\item if $\ell(v)=-$ of incoming edges $(k,l)$ and $(k',l')$, the edge $e_{(i,j)}$ is 
%of source $\realN^\omega\times\{\max(k,k')\}$ and realised by 
%$(\bosubstract{j}{l}{l'}, \max(k,k') \mapsto i)$
%\item if $\ell(v)=\times$ of incoming edges $(k,l)$ and $(k',l')$, the edge $e_{(i,j)}$ 
%is of source $\realN^\omega\times\{\max(k,k')\}$ and realised by 
%$(\bomultiply{j}{l}{l'}, \max(k,k') \mapsto i)$
\item if $\ell(v)=/$ of incoming edges $(k,l)$ and $(k',l')$, the edge $e_{(i,j)}$ is 
of source $\realN^\omega_{l'\neq 0}\times\{\max(k,k')\}$ and realised by 
$(\bodivide{j}{l}{l'}, \max(k,k') \mapsto i)$;
\item if $\ell(v)=\mathrm{sgn}$ of incoming edge $(k,l)$, the edges $e_{(i,j)}$ 
and $e^+_{(i,j)}$ are of respective sources
%\footnote{Note that this operation uses
%composed sources of 
%the type $\realN^\omega_{n_v=0\wedge x_l\leqslant 0}$. However, as those sets 
%are the intersection of two basic sets of the form $\realN^\omega_{n_v=0}$ and 
%$\realN_{x_l\leqslant 0}$, the reader can convince herself that a simple 
%modification of the given definition of $[C]$ gives rise to a computational 
%$\amcact$-treeing.}
 $\realN^\omega_{n_v=0\wedge x_l\leqslant 0}\times\{k\}$ 
and $\realN^\omega_{n_v=0\wedge x_l>0}\times\{k\}$ realised by 
$(\identity, k \mapsto i)$ and  $(\boadd{j}{n_v}{1}, k \mapsto i)$ respectively.
\end{itemize}
\end{definition}

As each step of computation in the algebraic circuit is translated as going
through a single edge in the corresponding $\amcrealfull$-program, the following
result is straightforward.

\begin{theorem}\label{thm:soundalgcirc}\label{thm:quantsoundalgcirc}
  The representation of \algcirc as $\amcrealfull$-programs is quantitatively sound.
\end{theorem}

\subsection{Algebraic \srams}
\label{sec:algebraic-prams}

In this paper, we will consider algebraic parallel random access machines, that
act not on strings of bits, but on integers. In order to define those properly,
we first define the notion of (sequential) random access machine (\sram) before
considering their parallelisation.

A \sram \emph{command} is a pair $(\ell, I)$ of a \emph{line}
$\ell \in \naturalN^{\star}$ and an \emph{instruction} $\command{I}$ among the
following, where $i, j \in \naturalN$, $\star \in \{+, -, \times, / \}$, $c\in \integerN$
is a constant and $\ell, \ell' \in \naturalN^{\star}$ are lines:
\[\begin{array}{c}
\Skip;\hspace{0.8em}  \command{X_{i} \coloneqq c};\hspace{0.8em}
   \command{X_{i} \coloneqq X_{j} \star X_{k}};\hspace{0.8em}
                     \command{X_{i} \coloneqq X_{j}};\\
                         \command{X_{i} \coloneqq \sharp{}X_{j}};\hspace{0.8em}
                                  \command{\sharp{X_{i}}\coloneqq X_{j}};\hspace{0.8em}
                    \conditional{X_i = 0}{\ell}{\ell'}.
                    \end{array}
\]

A \sram \emph{machine} $M$ is then a finite set of commands such that the set of
lines is $\{1,2,\dots,\length{M}\}$, with $\length{M}$ the \emph{length} of
$M$. We will denote the commands in $M$ by $(i,\instr{i})$, i.e. $\instr{i}$
denotes the line $i$ instruction. %As the semantics are straightforward, we will
%only present their formalization as an \amc.

Following Mulmuley \cite{Mulmuley99}, we will here make the assumption that the input
in the \sram (and in the \pram model defined in the next section) is split into \emph{numeric} 
and \emph{nonumeric} data -- e.g. in the \maxflow problem the 
nonnumeric data would specify the network and the numeric data would specify the 
edge-capacities -- and that indirect references use pointers depending only on 
nonnumeric data\footnote{Quoting Mulmuley: "We assume that the pointer involved in an 
indirect reference is not some numeric argument in the input or a quantity that depends on
it. For example, in the max- flow problem the algorithm should not use an edge-capacity as 
a pointer—which is a reasonable condition. To enforce this restriction, one initially puts an 
invalid-pointer tag on every numeric argument in the input. During the execution of an 
arithmetic instruction, the same tag is also propagated to the result if any operand has that 
tag. Trying to use a memory value with invalid-pointer tag results in error." 
\cite[Page 1468]{Mulmuley99}.\label{footnote:indirectref}}.
We refer the reader to Mulmuley's article for more details.

%\begin{definition}
%We define the complexity class $\NCIntegerminus$ (resp. $\NCInteger$, resp. $\NCIntegerplus$)
%as the families of subsets of $\integerN^k$ decided in polylogarithmic time in the bitsize of the entries 
%by integer-valued \prams without division or root operations -- equivalently \prams "without bit operations" -- 
%(resp. by integer-valued \prams with division but no root operations, resp. by integer-valued \prams with 
%division and root operations).
%\end{definition}

Machines in the \sram model can be represented as graphings w.r.t. the action $\amcfull$.
Intuitively the encoding works as follows. The notion of \emph{control state} allows to
represent the notion of \emph{line} in the program. Then, the action just
defined allows for the representation of all commands but the conditionals. The
conditionals are represented as follows: depending on the value of $X_{i}$ one
wants to jumps either to the line $\ell$ or to the line $\ell'$; this is easily
modelled by two different edges of respective sources
$\hyperplan{i}=\{\vec{x}~|~ x_{i}=0\}$ and
$\complementset{\hyperplan{i}}=\{\vec{x}~|~ x_{i}\neq0\}$.

\begin{definition}
  Let $M$ be a \sram machine. We define the translation $\Interpret{M}$ as the $\amcram$-program with set of
  control states $\{0,1,\dots,L,L+1\}$ where each line $\ell$ defines (in the following, 
  $\star\in\{+,-,\times\}$ and we write $\ell\plusplus$ the map $\ell\mapsto\ell+1$):
  \begin{itemize}[nolistsep,noitemsep]
  \item a single edge $e$ of source $\Space{X}\times\{\ell\}$ and realised
    by:
    \begin{itemize}[noitemsep,nolistsep]
    \item $(\identity,\ell\plusplus)\text{ if }\instr{\ell}=\Skip$;
    \item $(\boconst{i}{c},\ell\plusplus)\text{ if }\instr{\ell}=\command{X_i \coloneqq c}$;
    \item $(\bogen{i}{j}{k},\ell\plusplus)\text{ if }\instr{\ell}=
    \command{X_i \coloneqq X_j \star X_k} $;
%    \item $(\bosubstract{i}{j}{k},\ell\mapsto \ell+1)$ if $\instr{\ell}$ is
%      $\command{X_i \coloneqq X_j - X_k}$;
%    \item $(\bomultiply{i}{j}{k},\ell\mapsto \ell+1)$ if $\instr{\ell}$ is
%      $\command{X_i \coloneqq X_j \times X_k}$;
    \item $(\copyy{i}{j},\ell\plusplus)\text{ if }\instr{\ell}=
    \command{X_{i} \coloneqq X_{j}}$;
    \item $(\copyref{i}{j},\ell\plusplus)\text{ if }\instr{\ell}=
    \command{X_{i}\coloneqq \sharp X_{j}}$;
    \item $(\refcopy{i}{j},\ell\plusplus)\text{ if }\instr{\ell}=
     \command{\sharp{}X_{i} \coloneqq X_{j}}$.
     \end{itemize}
  \item an edge $e$ of source $\complementset{\hyperplan{k}}\times\{\ell\}$ realised
    by $(\boeuclidivide{i}{j}{k},\ell\plusplus)$ if $\instr{\ell}$ is
    $\command{X_i \coloneqq X_j/X_k}$;
  \item a pair of edges $e,\complementset{e}$ of respective sources
    $\hyperplan{i}\times\{\ell\}$ and $\complementset{\hyperplan{i}}\times\{\ell\}$
    and realised by respectively $(\identity,\ell\mapsto \ell^{0})$ and
    $(\identity,\ell\mapsto \ell^{1})$, if the line is a conditional
    $\command{if~X_{i}=0~goto~\ell^{0}~else~\ell^{1}}$.
  \end{itemize}
  The translation $\Interpret{\iota}$ of an input $\iota\in\mathbf{Z}^d$ is
  the point $(\bar{\iota},0)$ where $\bar{\iota}$ is the sequence
  $(\iota_1,\iota_2,\dots,\iota_k,0,0,\dots)$.
\end{definition}

Now, the main result for the representation of \srams is the following. The proof
is straightforward, as each instruction corresponds to exactly one edge, except for
the conditional case (but given a configuration, it lies in the source of at most one 
of the two edges translating the conditional). 

\begin{theorem}\label{thm:soundsram}
  The representation of \srams as $\amcfull$-programs is quantitatively sound w.r.t.
  the translation just defined.
\end{theorem}

\subsection{The Crew operation and \prams}
\label{sec:crew}

Based on the notion of \sram, we are now able to consider their parallelisation,
namely \prams. A \pram $M$ is given as a finite sequence of \sram machines
$M_{1},\dots,M_{p}$, where $p$ is the number of \emph{processors} of $M$. Each
processor $M_{i}$ has access to its own, private, set of registers
$(\command{X}_{k}^{i})_{k\geqslant 0}$ and a \emph{shared memory} represented as
a set of registers $(\command{X}_{k}^{0})_{k\geqslant 0}$.

One has to deal with conflicts when several processors try to access the shared
memory simultaneously. We here chose to work with the \emph{Concurrent Read,
  Exclusive Write} (\crew) discipline: at a given step at which several
processors try to write in the shared memory, only the processor with the
smallest index will be allowed to do so. In order to model such parallel
computations, we abstract the \crew at the level of monoids. For this, we
suppose that we have two monoid actions
$\MonGaR{G}{R}\acton \Space{X}\times\Space{Y}$ and
$\MonGaR{H}{Q}\acton \Space{X}\times\Space{Z}$, where $\Space{X}$ represents the
shared memory. We then consider the subset $\#\subset G\times H$ of pairs of
generators that potentially conflict with one another -- the conflict relation.

\begin{definition}[Conflicted sum]
  Let $\MonGaR{G}{R}$, $\MonGaR{G'}{R'}$ be two monoids and
  $\# \subseteq G \times G'$. The \emph{conflicted sum of $\MonGaR{G}{R}$ and
    $\MonGaR{G'}{R'}$ over $\#$}, noted
  $\MonGaR{G}{R} \ast_{\#} \MonGaR{G'}{R'}$, is defined as the monoid with
  generators \((\{1\} \times G) \cup (\{2\} \times G')\) and relations
  \[ \begin{array}{c}(\{1\} \times R) \cup (\{2\} \times R') \cup \{ (\mathbf{1}, e)\} \cup \{ (\mathbf{1},
    e')\} \\
  \hspace{1em}\cup \{ \big((1,g)(2,g'), (2,g')(1,g)
    \big) \mid (g,g') \notin \# \}\end{array}
  \] where $\mathbf{1}$, $e$, $e'$ are the units of
  $\MonGaR{G}{R} \ast_{\#} \MonGaR{G'}{R'}$, $\MonGaR{G}{R}$ and $\MonGaR{G'}{R'}$
  respectively.

  In the particular case where $\# = (G \times H') \cup (H \times G')$, with
  $H, H'$ respectively subsets of $G$ and $G'$, we will write the sum
  $\MonGaR{G}{R}\ncproduct{H}{H'}\MonGaR{G'}{R'}$.
\end{definition}

\begin{remark}
  When the conflict relation $\#$ is empty, this defines the usual direct product
  of monoids. This corresponds to the case in which no conflicts can arise
  w.r.t. the shared memory. In other words, the direct product of monoids
  corresponds to the parallelisation of processes \emph{without shared memory}.

  Dually, when the conflict relation is full ($\# = G \times G'$), this defines the free
  product of the monoids.
  % i.e. the case in which all instructions interact with the 
  %shared memory and therefore no real parallelisation is possible.
\end{remark}

\begin{definition}
  Let $\alpha: M\acton\Space{X}\times \Space{Y}$ be a monoid action. We say that
  an element $m\in M$ is \emph{central relatively to $\alpha$} (or just
  \emph{central}) if $m$ acts as the identity on
  $\Space{X}$,
  i.e.\footnote{Here and in the following, we denote by $;$ the sequential
  composition of functions. I.e. $f;g$ denotes what is usually written $g\circ f$.} 
  $\alpha(m);\pi_{X}=\pi_{X}$.
\end{definition}

Intuitively, central elements are those that will not affect the shared
memory. As such, only \emph{non-central elements} require care
when putting processes in parallel.

\begin{definition}
  Let $\MonGaR{G}{R}\acton \Space{X}\times\Space{Y}$ be an \amc. We note
  $Z_{\alpha}$ the set of central elements and $\bar{Z}_{\alpha}(G)=\{m\in G~|~ m\not\in Z_{\alpha}\}$.
\end{definition}

\begin{definition}[The \crew of \amcs]\label{def:crew}
  Let $\alpha: \MonGaR{G}{R}\acton \Space{X}\times\Space{Y}$ and
  $\beta: \MonGaR{H}{Q}\acton \Space{X}\times\Space{Z}$ be \amcs. We define the
  \amc
  $\crew(\alpha,\beta): \MonGaR{G}{R}\ncproduct{\bar{Z}_{\alpha}(G)}{\bar{Z}_{\beta}(G')}\MonGaR{G'}{R'}\acton \Space{X}\times\Space{Y}\times\Space{Z}$
  by letting $\crew(\alpha,\beta)(m,m')=\alpha(m)\ast\beta(m')$ on elements of
  $G\times G'$, where\footnote{We denote $\sigma_{X,Y}:\Space{Y}\times\Space{X}\rightarrow\Space{X}\times\Space{Y}$ the map defined as $(y,x)\mapsto(x,y)$.}:
  \begin{align}
    \lefteqn{\alpha(m)\ast\beta(m')=}\nonumber\\
    &\left\{
      \begin{array}{ll}
        \Delta_1;[\alpha(m);\pi_{Y},\beta(m')];[\sigma_{X,Y},\identity{Z}]  & \text{ if $m\not\in\bar{Z}_{\alpha}(G), m'\in\bar{Z}_{\beta}(G')$,}\\
        \Delta_2;[\alpha(m),\beta(m');\pi_{Z}] & \text{ otherwise,}
        \end{array}\right.\nonumber
  \end{align}
  with
  $\Delta_i: 
  \Space{X}\times\Space{Y}\times\Space{Z}\rightarrow\Space{X}\times\Space{Y}\times\Space{X}\times\Space{Z}$ defined\footnote{Formally, the definition of $\Delta_i$ is parametrised by the choice of a point $x_0\in\Space{X}$, but the map $\alpha(m)\ast\beta(m')$ does not depend on this choice because of the projections on $\Space{Y}$ and $\Space{Z}$.} as:
  \begin{align*}
  \Delta_1 :& (x,y,z) \mapsto (x_0,y,x,z) \\
  \Delta_2 :& (x,y,z) \mapsto (x,y,x_0,z).
  \end{align*}
\end{definition}

We can now define \amc of \prams and thus the interpretations of \prams as
abstract programs. For each integer $p$, we define the \amc $\crew^{p}(\amcfull)$. 
This allows the consideration of up to $p$ parallel \srams: the translation of such a
\sram with $p$ processors is defined by extending the translation of \srams
by considering a set of states equal to $L_{1}\times L_{2}\times \dots \times L_{p}$ 
where for all $i$ the set $L_{i}$ is the set of lines of the $i$-th processor.

Now, to deal with arbitrary large \prams, i.e. with arbitrarily large number of
processors, one considers the following \amc defined as a \emph{direct limit}.

\begin{definition}[The \amc of  \prams]
  Let $\alpha: M\acton\Space{X\times X}$ be the \amc $\amcfull$.
  % \footnote{Note that to use the crew operation, we need to split the
  % space between public and private registers; we do not provide the full
  % definition of this \amc since it is not necessary for this definition
  % and the full definition is provided in the previous section.}
  The \amc of \prams is defined as $\amcprams=\varinjlim\crew^{k}(\alpha)$, where
  $\crew^{k-1}(\alpha)$ is identified with a restriction of $\crew^{k}(\alpha)$
  through
  $\crew^{k-1}(\alpha)(m_{1},\dots,m_{k-1})\mapsto
  \crew^{k}(\alpha)(m_{1},\dots,m_{k-1},1)$.
\end{definition}

Remark that the underlying space of the \pram{} \amc $\amcprams$ is defined 
as the union \( \cup_{n\in \omega} \integerN^{\omega} \times (\integerN^{\omega})^n\) 
which we will write \( \integerN^{\omega} \times (\integerN^{\omega})^{(\omega)}\).
In practise a given $\amcprams$-program admitting a finite $\amcprams$ representative
will only use elements in $\crew^{p}(\amcfull)$, and can therefore be understood as a 
$\crew^{p}(\alpha)$-program. 

\begin{theorem}\label{thm:soundpram}\label{thm:quantsoundprams}
  The representation of \prams as $\amcprams$-programs is quantitatively sound.
\end{theorem}

\subsection{Real \prams}

These definitions and results stated for integer-valued \prams can be adapted to 
define \emph{real-valued \prams} and their translation as $\amcrealfull$-programs.

A real-valued \sram \emph{command} is a pair $(\ell, I)$ of a \emph{line}
$\ell \in \naturalN^{\star}$ and an \emph{instruction} $\command{I}$ among the
following, where $i, j \in \naturalN$, $\star \in \{+, -, \times, / \}$, $c\in \integerN$
is a constant and $\ell, \ell' \in \naturalN^{\star}$ are lines:
\[\begin{array}{c}
\Skip;\hspace{0.8em}  \command{X_{i} \coloneqq c};\hspace{0.8em}
   \command{X_{i} \coloneqq X_{j} \star X_{k}};\hspace{0.8em}
                     \command{X_{i} \coloneqq X_{j}};\\
                         \command{X_{i} \coloneqq \sharp{}X_{j}};\hspace{0.8em}
                                  \command{\sharp{X_{i}}\coloneqq X_{j}};\hspace{0.8em}
                    \conditional{X_i = 0}{\ell}{\ell'}.
                    \end{array}
\]

We consider a restriction for pointers similar to that considered in the case
of integer-valued \srams.
A real-valued \sram \emph{machine} $M$ is then a finite set of commands 
such that the set of lines is $\{1,2,\dots,\length{M}\}$, with $\length{M}$ the 
\emph{length} of $M$. We will denote the commands in $M$ by $(i,\instr{i})$, 
i.e. $\instr{i}$ denotes the line $i$ instruction.

A real-valued \pram $M$ is given as a finite sequence of real-valued \sram machines
$M_{1},\dots,M_{p}$, where $p$ is the number of \emph{processors} of $M$. Each
processor $M_{i}$ has access to its own, private, set of registers
$(\command{X}_{k}^{i})_{k\geqslant 0}$ and a \emph{shared memory} represented as
a set of registers $(\command{X}_{k}^{0})_{k\geqslant 0}$.
Again, we chose to work with the \emph{Concurrent Read, Exclusive Write} (\crew) 
discipline as it is well translated through the $\crew$ operation of \amcs.

The following definition provides the straightforward definition of real-valued \prams
as graphings.

\begin{definition}
  Let $M$ be a real-valued \sram machine. We define the translation $\Interpret{M}$ 
  as the $\amcrealfull$-program with set of
  control states $\{0,1,\dots,L,L+1\}$ where each line $\ell$ defines (in the following, 
  $\star\in\{+,-,\times\}$ and we write $\ell\plusplus$ the map $\ell\mapsto\ell+1$):
  \begin{itemize}[nolistsep,noitemsep]
  \item a single edge $e$ of source $\Space{X}\times\{\ell\}$ and realised
    by:
    \begin{itemize}[noitemsep,nolistsep]
    \item $(\identity,\ell\plusplus)\text{ if }\instr{\ell}=\Skip$;
    \item $(\boconst{i}{c},\ell\plusplus)\text{ if }\instr{\ell}=\command{X_i \coloneqq c}$;
    \item $(\bogen{i}{j}{k},\ell\plusplus)\text{ if }\instr{\ell}=
    \command{X_i \coloneqq X_j \star X_k} $;
%    \item $(\bosubstract{i}{j}{k},\ell\mapsto \ell+1)$ if $\instr{\ell}$ is
%      $\command{X_i \coloneqq X_j - X_k}$;
%    \item $(\bomultiply{i}{j}{k},\ell\mapsto \ell+1)$ if $\instr{\ell}$ is
%      $\command{X_i \coloneqq X_j \times X_k}$;
    \item $(\copyy{i}{j},\ell\plusplus)\text{ if }\instr{\ell}=
    \command{X_{i} \coloneqq X_{j}}$;
    \item $(\copyref{i}{j},\ell\plusplus)\text{ if }\instr{\ell}=
    \command{X_{i}\coloneqq \sharp X_{j}}$;
    \item $(\refcopy{i}{j},\ell\plusplus)\text{ if }\instr{\ell}=
     \command{\sharp{}X_{i} \coloneqq X_{j}}$.
     \end{itemize}
  \item an edge $e$ of source $\realN_{k\neq 0}\times\{\ell\}$ realised
    by $(\bodivide{i}{j}{k},\ell\plusplus)$ if $\instr{\ell}$ is
    $\command{X_i \coloneqq X_j/X_k}$;
  \item a pair of edges $e,\complementset{e}$ of respective sources
    $\realN_{i=0}\times\{\ell\}$ and $\realN_{i\neq 0}\times\{\ell\}$
    and realised by respectively $(\identity,\ell\mapsto \ell^{0})$ and
    $(\identity,\ell\mapsto \ell^{1})$, if the line is a conditional
    $\command{if~X_{i}=0~goto~\ell^{0}~else~\ell^{1}}$.
  \end{itemize}
  The translation $\Interpret{\iota}$ of an input $\iota\in\mathbf{Z}^d$ is
  the point $(\bar{\iota},0)$ where $\bar{\iota}$ is the sequence
  $(\iota_1,\iota_2,\dots,\iota_k,0,0,\dots)$.
\end{definition}

For each integer $p$, we then define the \amc $\crew^{p}(\amcrealfull)$. 
This allows the consideration of up to $p$ parallel real-valued \srams: the translation 
of such a \sram with $p$ processors is defined by extending the translation of 
real-valued \srams just defined
by considering a set of states equal to $L_{1}\times L_{2}\times \dots \times L_{p}$ 
where for all $i$ the set $L_{i}$ is the set of lines of the $i$-th processor.

Since we need to translate arbitrary large real-valued \prams, i.e. with arbitrarily large 
number of processors, one considers the following \amc defined as a \emph{direct limit}.

\begin{definition}[The \amc of real-valued \prams]
  Let $\alpha: M\acton\Space{X\times X}$ be the \amc $\amcrealfull$.
  % \footnote{Note that to use the crew operation, we need to split the
  % space between public and private registers; we do not provide the full
  % definition of this \amc since it is not necessary for this definition
  % and the full definition is provided in the previous section.}
  The \amc of real-valued \prams is defined as $\amcrealpram=\varinjlim\crew^{k}(\alpha)$, 
  where $\crew^{k-1}(\alpha)$ is identified with a restriction of $\crew^{k}(\alpha)$
  through
  \[ \crew^{k-1}(\alpha)(m_{1},\dots,m_{k-1})\mapsto
  \crew^{k}(\alpha)(m_{1},\dots,m_{k-1},1).\]
\end{definition}

Then the following results are quite straightforward.
\begin{theorem}\label{thm:soundrealvaluedram}\label{thm:soundrealvaluedpram}
  The representation of real-valued \srams as $\amcrealfull$-programs is quantitatively sound.
  The representation of real-valued \prams as $\amcrealpram$-programs is quantitatively sound.
\end{theorem}

\section{Entropy and Cells}

\subsection{Topological Entropy}

Topological Entropy was introduced in the context of dynamical systems in an attempt to classify the latter w.r.t. conjugacy. The topological entropy of a dynamical system is a value representing the average exponential growth rate of the number of orbit segments distinguishable with a finite (but arbitrarily fine) precision. The definition is based on the notion of open covers.

\paragraph{Open covers.} Given a topological space $\Space{X}$, an \emph{open cover} of $\Space{X}$ is a family $\mathcal{U}=(U_{i})_{i\in I}$ of open subsets of $\Space{X}$ such that $\cup_{i\in I}U_{i}=\Space{X}$. A finite cover $\mathcal{U}$ is a cover whose indexing set is finite. A \emph{subcover} of a cover $\mathcal{U}=(U_{i})_{i\in I}$ is a sub-family $\mathcal{S}=(U_{j})_{j\in J}$ for $J\subseteq I$ such that $\mathcal{S}$ is a cover, i.e. such that $\cup_{j\in J}U_{j}=\Space{X}$.

We will denote by $\opencovers{X}$ (resp. $\finiteopencovers{X}$) the set of all open covers (resp. all finite open covers) of the space $\Space{X}$.

We now define two operations on open covers that are essential to the definition of entropy.
An open cover $\mathcal{U}=(U_{i})_{i\in I}$, together with a continuous function $f:\Space{X}\rightarrow\Space{X}$, defines the inverse image open cover $f^{-1}(\mathcal{U})=(f^{-1}(U_{i}))_{i\in I}$. Note that if $\mathcal{U}$ is finite, $f^{-1}(\mathcal{U})$ is finite as well. Given two open covers $\mathcal{U}=(U_{i})_{i\in I}$ and $\mathcal{V}=(V_{j})_{j\in J}$, we define their join $\mathcal{U}\vee\mathcal{V}$ as the family $(U_{i}\cap V_{j})_{(i,j)\in I\times J}$. Once again, if both initial covers are finite, their join is finite.

\paragraph{Entropy} Usually, entropy is defined for continuous maps on a compact set, following the original definition by Adler, Konheim and McAndrews \cite{AKmA}. Using the fact that arbitrary open covers have a finite subcover, this allows one to ensure that the smallest subcover of any cover is finite. I.e. given an arbitrary cover $\mathcal{U}$, one can consider the smallest -- in terms of cardinality -- subcover $\mathcal{S}$ and associate to $\mathcal{U}$ the finite quantity $\log_{2}(\card{\mathcal{S}})$. This quantity, obviously, need not be finite in the general case of an arbitrary cover on a non-compact set. 

However, a generalisation of entropy to non-compact sets can easily be defined by restricting the usual definition to \emph{finite} covers\footnote{This is discussed by Hofer \cite{Hofer75} together with another generalisation based on the Stone-\v{C}ech compactification of the underlying space.}. This is the definition we will use here.

\begin{definition}
Let $\Space{X}$ be a topological space, and $\mathcal{U}=(U_{i})_{i\in I}$ be a finite cover of $\Space {X}$. We define the quantity $H_{\Space{X}}^{0}(\mathcal{U})$ as 
\[\min\{\log_{2}(\card{J})~|~ J\subset I, \cup_{j\in J}U_{j}=\Space{X}\}.\]
\end{definition}
In other words, if $k$ is the cardinality of the smallest subcover of $\mathcal{U}$, $H^{0}(\mathcal{U})=\log_{2}(k)$.

\begin{definition}\label{def:Hk}
Let $\Space{X}$ be a topological space and $f: \Space{X}\rightarrow\Space{X}$ be a continuous map. For any finite open cover $\mathcal{U}$ of $\Space{X}$, we define:
\[ H_{\Space{X}}^{k}(f,\mathcal{U})=\frac{1}{k}H_{\Space{X}}^{0}(\mathcal{U}\vee f^{-1}(\mathcal{U})\vee\dots\vee f^{-(k-1)}(\mathcal{U})). \]
\end{definition}

One can show that the limit $\lim_{n\rightarrow\infty}H_{\Space{X}}^{n}(f,\mathcal{U})$ exists and is finite \cite{AKmA}; it will be noted $h(f,\mathcal{U})$. The topological entropy of $f$ is then defined as the supremum of these values, when $\mathcal{U}$ ranges over the set of all finite covers $\finiteopencovers{X}$.

\begin{definition}\label{def:entropy}
Let $\Space{X}$ be a topological space and $f: \Space{X}\rightarrow\Space{X}$ be a continuous map. The \emph{topological entropy} of $f$ is defined as $h(f)=\sup_{\mathcal{U}\in\finiteopencovers{X}} h(f,\mathcal{U})$.
\end{definition}

\subsection{Graphings and Entropy}

We now need to define the entropy of \emph{deterministic graphing}. As mentioned briefly already, deterministic graphings on a space $\Space{X}$ are in one-to-one correspondence with partial dynamical systems on $\Space{X}$. To convince oneself of this, it suffices to notice that any partial dynamical system can be represented as a graphing with a single edge, and that if the graphing $G$ is deterministic its edges can be glued together to define a partial continuous function $[G]$. Thus, we only need to extend the notion of entropy to partial maps, and we can then define the entropy of a graphing $G$ as the entropy of its corresponding map $[G]$. 

Given a finite cover $\mathcal{U}$, the only issue with partial continuous maps is that $f^{-1}(\mathcal{U})$ is not in general a cover. Indeed, $\{f^{-1}(U)~|~U\in \mathcal{U}\}$ is a family of open sets by continuity of $f$ but the union $\cup_{U\in \mathcal{U}}f^{-1}(U)$ is a strict subspace of $\Space{X}$ (namely, the domain of $f$). It turns out the solution to this problem is quite simple: we notice that $f^{-1}(\mathcal{U})$ is a cover of $f^{-1}(\Space{X})$ and now work with covers of subspaces of $\Space{X}$. Indeed, $\mathcal{U}\vee f^{-1}(\mathcal{U})$ is itself a cover of $f^{-1}(\Space{X})$ and therefore the quantity $H_{\Space{X}}^{2}(f,\mathcal{U})$ can be defined as $(1/2)H_{f^{-1}(\Space{X})}^{0}(\mathcal{U}\vee f^{-1}(\mathcal{U}))$.

We now generalise this definition to arbitrary iterations of $f$ by extending Definitions \ref{def:Hk} and \ref{def:entropy} to partial maps as follows.
\begin{definition}\label{def:partialentropy}
Let $\Space{X}$ be a topological space and $f: \Space{X}\rightarrow\Space{X}$ be a continuous partial map. For any finite open cover $\mathcal{U}$ of $\Space{X}$, we define:
\[ H_{\Space{X}}^{k}(f,\mathcal{U})=\frac{1}{k}H_{f^{-k+1}(\Space{X})}^{0}(\mathcal{U}\vee f^{-1}(\mathcal{U})\vee\dots\vee f^{-(k-1)}(\mathcal{U})). \]
The \emph{entropy} of $f$ is then defined as $h(f)=\sup_{\mathcal{U}\in\finiteopencovers{X}} h(f,\mathcal{U})$, where $h(f,\mathcal{U})$ is again defined as the limit $\lim_{n\rightarrow\infty}H_{\Space{X}}^{n}(f,\mathcal{U})$ \cite{Hofer75}.
\end{definition}

Now, let us consider the special case of a graphing $G$ with set of control states $S^{G}$. For an intuitive understanding, one can think of $G$ as the representation of a \pram machine. We focus on the specific open cover indexed by the set of control states, i.e. $\mathcal{S}=(\Space{X}\times\{s\}_{s\in S^{G}})$, and call it \emph{the states cover}. We will now show how the partial entropy $H^{k}(G,\mathcal{S})$ is related to the set of \emph{admissible sequence of states}. Let us define those first.

\begin{definition}
Let $G$ be a graphing, with set of control states $S^{G}$. An admissible sequence of states is a sequence $\mathbf{s}=s_{1}s_{2}\dots s_{n}$ of elements of $S^{G}$ such that for all $i\in\{1,2,\dots,n-1\}$ there exists a subset $C_i$ of $\Space{X}$ -- i.e. a set of configurations -- such that $G$ contains an edge from $C_i\times\{s_{i}\}$ to (a subspace of) $C_{i+1}\times\{s_{i+1}\}$ (with the convention that $C_n=\Space{X}$).
\end{definition}

\begin{example}
As an example, let us consider the very simple graphing with four control states $a,b,c,d$ and edges from $\Space{X}\times\{a\}$ to $\Space{X}\times\{b\}$, from $\Space{X}\times\{b\}$ to $\Space{X}\times\{c\}$, from $\Space{X}\times\{c\}$ to $\Space{X}\times\{b\}$ and from $\Space{X}\times\{c\}$ to $\Space{X}\times\{d\}$. 
Then the sequences $abcd$ and $abcbcbc$ are admissible, but the sequences $aba$, $abcdd$, and $abcba$ are not.
\end{example}

\begin{lemma}\label{lem:admseqHk}
Let $G$ be a graphing, and $\mathcal{S}$ its states cover. Then for all integer $k$, the set $\admss{k}$ of admissible sequences of states of length $k>1$ is of cardinality $2^{k.H^{k}(G,\mathcal{S})}$.
\end{lemma}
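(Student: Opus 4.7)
The plan is to unfold the two sides of the claimed equality and identify each set-theoretic object appearing in the definition of $H^k(G,\mathcal{S})$ with a combinatorial object on the side of admissible sequences. Throughout, let $f=[G]\colon\Space{X}\times\Space{S^G}\rightharpoonup\Space{X}\times\Space{S^G}$ be the partial map canonically attached to the (deterministic) graphing $G$, and write $\mathcal{S}=(\Space{X}\times\{s\})_{s\in S^G}$ for the states cover.

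First, I would expand the definition. By Definition~\ref{def:partialentropy}, $k\cdot H^{k}(G,\mathcal{S})=H^0_{f^{-(k-1)}(\Space{X}\times\Space{S^G})}\bigl(\mathcal{S}\vee f^{-1}(\mathcal{S})\vee\dots\vee f^{-(k-1)}(\mathcal{S})\bigr)$, so by definition of $H^0$ it suffices to prove that the smallest subcover of the join has cardinality exactly $\card{\admss{k}}$. The join is naturally indexed by tuples $\seq{s}=(s_1,\dots,s_k)\in (S^G)^k$, and the element associated to $\seq{s}$ is
\[
U_{\seq{s}}\;=\;(\Space{X}\times\{s_1\})\cap f^{-1}(\Space{X}\times\{s_2\})\cap\dots\cap f^{-(k-1)}(\Space{X}\times\{s_k\}),
\]
i.e. the set of points $(x,s_1)$ whose first $k-1$ iterates under $f$ successively carry the control state through $s_2,\dots,s_k$.

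Next, I would identify the non-empty $U_{\seq{s}}$ with admissible sequences. If $U_{\seq{s}}$ is non-empty, witnessed by some $(x,s_1)$, then for each $i<k$ the point $f^{i-1}(x,s_1)$ lies in the source of some edge of $G$ starting at $\Space{X}\times\{s_i\}$ and ending in $\Space{X}\times\{s_{i+1}\}$, hence $\seq{s}\in\admss{k}$. Conversely, if $\seq{s}$ is admissible, the corresponding chain of edges gives, step by step, configurations whose image lies in the source of the next edge; threading these sources together through the (deterministic) dynamics produces a point of $U_{\seq{s}}$. This is the step that should be checked most carefully, and it is the main obstacle of the proof: one must argue that the local admissibility condition of the definition can always be lifted to a genuine trajectory, which is where determinism of $G$ and the structure of the sources $(S^G_e)$ play a role.

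Finally, I would invoke determinism to turn the join into a partition. Since $f$ is a partial function, every $(x,s)\in f^{-(k-1)}(\Space{X}\times\Space{S^G})$ has a uniquely determined orbit of length $k$, and hence belongs to exactly one $U_{\seq{s}}$; the non-empty $U_{\seq{s}}$ are therefore pairwise disjoint and cover the underlying space. Consequently no element of the join can be removed without leaving points uncovered, so the minimal subcover is the set of non-empty $U_{\seq{s}}$ itself, of cardinality $\card{\admss{k}}$. Taking logarithms yields $H^0_{f^{-(k-1)}(\Space{X}\times\Space{S^G})}(\mathcal{S}\vee\dots\vee f^{-(k-1)}(\mathcal{S}))=\log_2\card{\admss{k}}$, and the claim $\card{\admss{k}}=2^{k\cdot H^k(G,\mathcal{S})}$ follows.
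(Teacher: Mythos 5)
Your proof takes the same route as the paper's: index the join $\mathcal{S}\vee[G]^{-1}(\mathcal{S})\vee\dots\vee[G]^{-(k-1)}(\mathcal{S})$ by $k$-tuples of states, identify the nonempty cells of the join with admissible sequences, and use determinism to see that the nonempty cells are pairwise disjoint, so none can be dropped from the cover and the minimal subcover has exactly that cardinality. This is precisely the paper's induction, recast non-inductively.

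The step you single out as "the main obstacle" is indeed the weak point, and your proposed resolution does not actually close it. The paper's definition of admissibility is purely local: for each consecutive pair $(s_i,s_{i+1})$ there must exist \emph{some} edge from a subset of $\Space{X}\times\{s_i\}$ into $\Space{X}\times\{s_{i+1}\}$, with the witnessing subsets chosen independently for each $i$. Nothing forces the image under $[G]$ of one such edge to land in the source of the next one, so "threading these sources together through the deterministic dynamics" is not a valid inference: determinism tells you each point has a unique orbit, not that the local edges compose into a global one. Concretely, with $\Space{X}=\{0,1\}$, states $a,b,c$, an edge from $\{0\}\times\{a\}$ to $\{0\}\times\{b\}$ and an edge from $\{1\}\times\{b\}$ to $\{1\}\times\{c\}$, the sequence $abc$ is admissible in the paper's sense yet $U_{(a,b,c)}=\emptyset$, and the graphing is deterministic. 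What your argument (and the paper's) does establish is the forward implication plus disjointness, which yields $2^{k\cdot H^k(G,\mathcal{S})}=\card{\{\seq{s}\mid U_{\seq{s}}\neq\emptyset\}}\leq\card{\admss{k}}$; an equality would require strengthening the definition of admissibility to demand a common witness point along the whole sequence. To be fair, the paper's own proof is equally terse here --- it asserts that $C[\mathbf{s},s_k]$ is empty if and only if the sequence is not admissible, without argument --- so you reproduce its gap faithfully; the difference is that you were right to flag it, but the fix you sketch would not go through as stated.
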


\begin{proof}
We show that the set $\admss{k}$ of admissible sequences of states of length $k$ has the same cardinality as the smallest subcover of $\mathcal{S}\vee [G]^{-1}(\mathcal{S})\vee\dots\vee [G]^{-(k-1)}(\mathcal{S}))$. Hence $H^{k}(G,\mathcal{S})=\frac{1}{k}\log_{2}(\card{\admss{k}})$, which implies the result.

The proof is done by induction. As a base case, let us consider the set of $\admss{2}$ of admissible sequences of states of length $2$ and the open cover $\mathcal{V}=\mathcal{S}\vee [G]^{-1}(\mathcal{S})$ of $D=[G]^{-1}(\Space{X})$. An element of $\mathcal{V}$ is an intersection $\Space{X}\times\{s_{1}\}\cap [G]^{-1}(\Space{X}\times\{s_{2}\})$, and it is therefore equal to $C[s_{1},s_{2}]\times\{s_{1}\}$ where $C[s_{1},s_{2}]\subset\Space{X}$ is the set $\{x\in\Space{X}~|~[G](x,s_{1})\in\Space{X}\times\{s_{2}\}\}$. This set is empty if and only if the sequence $s_{1}s_{2}$ belongs to $\admss{2}$. Moreover, given another sequence of states $s'_{1}s'_{2}$ (not necessarily admissible), the sets $C[s_{1},s_{2}]$ and $C[s_{1},s_{2}]$ are disjoint. Hence a set $C[s_{1},s_{2}]$ is \emph{removable from the cover $\mathcal{V}$} if and only if the sequence $s_{1}s_{2}$ is not admissible. This implies the result for $k=2$.

The step for the induction is similar to the base case. It suffices to consider the partition $\mathcal{S}_{k}=\mathcal{S}\vee [G]^{-1}(\mathcal{S})\vee\dots\vee [G]^{-(k-1)}(\mathcal{S}))$ as $\mathcal{S}_{k-1}\vee[G]^{-(k-1)}(\mathcal{S})$. By the same argument, one can show that elements of $\mathcal{S}_{k-1}\vee[G]^{-(k-1)}(\mathcal{S})$ are of the form $C[\mathbf{s}=(s_{0}s_{1}\dots s_{k-1}),s_{k}]\times\{s_{1}\}$ where $C[\mathbf{s},s_{k}]\subset\Space{X}$ is the set $\{x\in\Space{X}~|~\forall i=2,\dots,k, [G]^{i-1}(x,s_{1})\in \Space{X}\times\{s_{i}\}\}$. Again, these sets $C[\mathbf{s},s_{k}]$ are pairwise disjoint and empty if and only if the sequence $s_{0}s_{1}\dots s_{k-1},s_{k}$ is not admissible.
\end{proof}

%%%% CE QUI SUIT (COMMENTE) EST FAUX. D'OU LA PARTIE TECHNIQUE QUI MONTRE QUE LENTROPIE PERMET DE BORNER ASYMPTOTIQUEMENT.
%\textcolor{red}{Since the entropy satisfies $h([G])\geqslant H^{k}(G,\mathcal{S})$}, we obtain that the number of admissible sequences of length $k$ is bound by $2^{k.h([G])}$.

A tractable bound on the number of admissible sequences of states can be obtained by noticing that the sequence $H^{k}(G,\mathcal{S})$ is \emph{sub-additive}, i.e. $H^{k+k'}(G,\mathcal{S})\leqslant H^{k}(G,\mathcal{S})+H^{k'}(G,\mathcal{S})$. A consequence of this is that $H^{k}(G,\mathcal{S})\leqslant kH^{1}(G,\mathcal{S})$. Thus the number of admissible sequences of states of length $k$ is bounded by $2^{k^{2}H^{1}(G,\mathcal{S})}$. We now study how the cardinality of admissible sequences can be related to the entropy of $G$.

\begin{lemma}
For all $\epsilon>0$, there exists an integer $N$ such that for all $k\geqslant N$, $H^{k}(G,\mathcal{U})<h([G])+\epsilon$.
\end{lemma}

\begin{proof}
Let us fix some $\epsilon>0$. Notice that if we let $H_{k}(G,\mathcal{U})=H^{0}(\mathcal{U}\vee [G]^{-1}(\mathcal{U})\vee\dots\vee [G]^{-(k-1)}(\mathcal{U})))$, the sequence $H_{k}(U)$ satisfies $H_{k+l}
(\mathcal{U})\leqslant H_{k}(\mathcal{U})+H_{l}(\mathcal{U})$. By Fekete's lemma on subadditive sequences, this implies that $\lim_{k\rightarrow\infty}H_{k}/k$ exists and is equal to $\inf_{k}H_{k}/k$. Thus $h([G],\mathcal{U})=\inf_{k}H_{k}/k$. 

Now, the entropy $h([G])$ is defined as $\sup_{\mathcal{U}} \lim_{k\rightarrow\infty} H_{k}(\mathcal{U})/k$. This then rewrites as $\sup_{\mathcal{U}} \inf_{k} H_{k}(\mathcal{U})/k$. We can conclude that $h([G])\geqslant \inf_{k} H_{k}(\mathcal{U})/k$ for all finite open cover $\mathcal{U}$. 

Since $\inf_{k} H_{k}(\mathcal{U})/k$ is the limit of the sequence $H_{k}/k$, there exists an integer $N$ such that for all $k\geqslant N$ the following inequality holds: $\abs{H_{k}(\mathcal{U})/k-\inf_{k}H_{k}(\mathcal{U})/k}<\epsilon$, which rewrites as  $H_{k}(\mathcal{U})/k-\inf_{k}H_{k}(\mathcal{U})/k<\epsilon$. From this we deduce $H_{k}(\mathcal{U})/k<h([G])+\epsilon$, hence $H^{k}(G,\mathcal{U})<h([G])+\epsilon$ since $H^{k}(G,\mathcal{U})=H_{k}(G,\mathcal{U})$.
\end{proof}

\begin{lemma}\label{lem:admseqHk2}
Let $G$ be a graphing, and let $c: k\mapsto\card{\admss{k}}$. Then $c(k) = O(2^{k.h([G])})$ as $k$ goes to infinity.
\end{lemma}

Lastly, we prove a result bounding the entropy of a map $\alpha(m)\ast\beta(m')$ in the \crew of \amcs. The result is essentially a consequence of the product rule (\cite[Theorem 3]{AKmA}, or \cite{entropyproduct}) stating that the entropy of a product $h(f\times g)$ is bounded above by the sum $h(f)+h(g)$.

\begin{lemma}\label{lem:entropycrew}
Let $\alpha: \MonGaR{G}{R}\acton \Space{X}\times\Space{Y}$ and
  $\beta: \MonGaR{H}{Q}\acton \Space{X}\times\Space{Z}$ be \amcs such that every non-central element of $\beta$ acts as the identity on $\Space{Z}$.
  Then for all $m\in \MonGaR{G}{R}$ and $m'\in \MonGaR{H}{Q}$, the entropy of $\alpha(m)\ast\beta(m')$ is bounded by the sum of the entropies of $\alpha(m)$ and $\beta(m')$:
  \[ h(\alpha(m)\ast\beta(m')\leqslant h(\alpha(m))+h(\beta(m')). \]
\end{lemma}

\begin{proof}
We show that the entropy of $\alpha(m)\ast\beta(m')$ is bounded by the entropy of $\alpha(m)\times\beta(m)$. The result then follows by the product rule \cite{entropyproduct}. We distinguish two cases: the first case is when one of $\alpha(m)$ or $\beta(m')$ is central, i.e. $\alpha(m);\pi_{\Space{X}}=\pi_{\Space{X}}$ or $\beta(m');\pi_{\Space{X}}=\pi_{\Space{X}}$, the second case is when both $\alpha(m)$ and $\beta(m')$ act non-trivially on $\Space{X}$.

For the first case, we may consider that $\beta(m')$ is central without loss of generality. It is then of the form $\tilde{\beta}\times\identity[\Space{X}]$ with $\tilde{\beta}:\Space{Z}\rightarrow\Space{Z}$, and the key observation is that 
\[ \alpha(m)\ast\beta(m') = \alpha(m)\times\tilde{\beta} \]
in this case. We now apply the product rule on both identities. From the first identity, we get 
\[ h(\beta(m'))=h(\tilde{\beta})+h(\identity[\Space{X}]=h(\tilde{\beta}), \]
since the entropy of the identity is equal to $0$, and from the second identity, we get
\[ h(\alpha(m)\ast\beta(m'))\leq h(\alpha(m'))+ h(\tilde{\beta}).\]
Combining both we obtain that $h(\alpha(m)\ast\beta(m'))=h(\alpha(m))+h(\beta(m'))$.

For the second case, the definition of $\alpha(m)\ast\beta(m')$ states that it is equal to the following map:
\[ \Delta_2;\alpha(m)\times (\beta(m');\pi_\Space{Z}). \]
Diagrammatically, this is defined as:
\begin{center}
\begin{tikzpicture}
	\node (Xll) at (-2,0) {\scriptsize{$\Space{X}$}};
	\node (Yll) at (-2,-0.5) {\scriptsize{$\Space{Y}$}};
	\node (Zll) at (-2,-1) {\scriptsize{$\Space{Z}$}};

	\node (X1l) at (0,0) {\scriptsize{$\Space{X}$}};
	\node (Yl) at (0,-0.5) {\scriptsize{$\Space{Y}$}};
	\node (X2l) at (0,-1) {\scriptsize{$\Space{X}$}};
	\node (Zl) at (0,0-1.5) {\scriptsize{$\Space{Z}$}};
	
	\node (X1r) at (3,0) {\scriptsize{$\Space{X}$}};
	\node (Yr) at (3,-0.5) {\scriptsize{$\Space{Y}$}};
	\node (X2r) at (3,-1) {\scriptsize{$\Space{X}$}};
	\node (Zr) at (3,0-1.5) {\scriptsize{$\Space{Z}$}};

	\node (Xrr) at (5,0) {\scriptsize{$\Space{X}$}};
	\node (Yrr) at (5,-0.5) {\scriptsize{$\Space{Y}$}};
	\node (Zrr) at (5,0-1) {\scriptsize{$\Space{Z}$}};
	
	\draw[->] (Xll) -- (X1l) {};
	\draw[->] (Xll) -- (X2l) {};
	\draw[->] (Yll) -- (Yl) {};
	\draw[->] (Zll) -- (Zl) {};
	
	\draw[-] (1,0.2) -- (2,0.2) -- (2,-0.7) -- (1,-0.7) -- (1,0.2) {};
	\draw[->] (X1l) -- (1,0) {};
	\draw[->] (Yl) -- (1,-0.5) {};
	\draw[->] (2,0) -- (X1r) {};
	\draw[->] (2,-0.5) -- (Yr) {};
	\node (alpha) at (1.5,-0.25) {\scriptsize{$\alpha(m)$}};
	\draw[-] (1,-0.8) -- (2,-0.8) -- (2,-1.7) -- (1,-1.7) -- (1,-0.8) {};
	\draw[->] (X2l) -- (1,-1) {};
	\draw[->] (Zl) -- (1,-1.5) {};
	\draw[->] (2,-1) -- (X2r) {};
	\draw[->] (2,-1.5) -- (Zr) {};
	\node (beta) at (1.5,-1.25) {\scriptsize{$\beta(m')$}};
	
	\draw[->] (X1r) -- (Xrr) {};
	\draw[->] (Yr) -- (Yrr) {};
	\draw[->] (Zr) -- (Zrr) {};
		
	\draw[-o] (X2r) -- (4,-1) {};
\end{tikzpicture}
\end{center}
We will now bound the entropy of $\alpha(m)\ast\beta(m')$. This is where we will use the hypothesis that $\beta(m')$ acts as the identity on $\Space{Z}$, i.e. that $\beta(m')(x,z)=(x',z)$. Indeed, from this hypothesis, one can deduce that
\[ \alpha(m)\ast\beta(m')(x,y,z)=\alpha(m)(x,y)\times \identity[\Space{Z}](z), \]
hence 
\[ h(\alpha(m)\ast\beta(m'))=h(\alpha(m))+h(\identity[\Space{Z}])=h(\alpha(m))\leq h(\alpha(m))+h(\beta(m')),\]
since $h(\identity[\Space{Z}])=0$ and $h(\beta(m'))\geq 0$.
\end{proof}

% The proofs of these two lemmas are in Appendix \ref{app:proofs-entropy}.

\subsection{Cells Decomposition}

Now, let us consider a deterministic graphing $G$, with its state cover $\mathcal{S}$. We fix a length $k>2$ and reconsider the sets $C[\mathbf{s}]=C[(s_{1}s_{2}\dots s_{k-1},s_{k})]$ (for a sequence of states $\mathbf{s}=s_{1}s_{2}\dots s_{k}$) that appear in the proof of Lemma \ref{lem:admseqHk}. The set $(C[\mathbf{s}])_{\mathbf{s}\in\admss{k}}$ is a partition of the space $[G]^{-k+1}(\Space{X})$.  
%Given a state $s$, one can define a decomposition of $\Space{X}$ as follows: for each other state $s'$, there exists an open subspace $\Space{X}_{s}(s')$ defined as the union of all sources $S^{G}_{e}$ for all $e\in E^{G}$ moving from state $s$ to state $s'$. Obviously, the subspaces $\Space{X}_{s}(s')$ are pairwise disjoint and $\cup_{s'\in S^{G}}\Space{X}_{s}(s')$ is a subspace of $\Space{X}$; we denote the family of all such $\Space{X}_{s}(s')$ by $\mathcal{X}^{0}_{s}$. Let us call such families of open subspaces \emph{strict subcovers}, and given such a family $\mathcal{U}$ let us denote by $\Space{V}_{\mathcal{U}}$ the complement space $\Space{X}-\cup_{\Space{U}\in \mathcal{U}}\Space{U}$.

%In the case of \prams, the spaces $\Space{V}_{\mathcal{X}_{s}}$ are unions of hyperplans (each conditional creates a hyperplan).

%Now, after a given number of iterations -- say $k$ --, these decompositions may be pulled back in the space of \emph{initial configurations}. E.g. after one iteration, one can consider the strict subcovers $\mathcal{X}^{1}_{s}=\cup_{s'\in S^{G}}G^{-1}(\Space{X}_{s}(s'))$. After two iterations, one can consider $\mathcal{X}^{2}_{s}=\cup_{s'\in S^{G}}G^{-2}(\Space{X}_{s}(s'))$. Etc. By using the join operation on covers (which is easily defined similarly for strict subcovers), one can construct -- given a state $s_{0}$ (think of the initial state) -- the \emph{complete $k$-th decomposition} associated to the graphing $G$ as:
%\[ \mathcal{K}^{k}(G)=\bigvee_{\mathbf{s}=s_{0}s_{1}s_{2}\dots s_{k} \in\mathrm{AdmSeq}_{k}} \mathcal{X}^{i}_{s_{i}} \]

This decomposition splits the set of initial configurations into cells satisfying the following property: \emph{for any two initial configurations contained in the same cell $C[\mathbf{s}]$, the $k$-th first iterations of $G$ goes through the same admissible sequence of states $\mathbf{s}$}.

%\textcolor{red}{[Je suis trop fatigue pour bien reflechir a la suite. Il me semble que l'on a un choix entre 1. considerer que les graphages acceptent/rejettent en bouclant sur un etat acceptant/rejetant auquel cas les suite admissibles d'etats de longueur $k$ contiennent aussi les calculs acceptant/rejetant en moins de $k$ etapes, ou 2. considerer une partition qui tient compte des chemins qui s'arretent au bout d'un nombre d'etapes <k. Dans les deux cas on peut obtenir le meme genre de resultats je pense; mais je commence a perdre le fil de ce que je suis en train de faire au niveau des bornes..]}

\begin{definition}
Let $G$ be a deterministic graphing, with its state cover $\mathcal{S}$. Given an integer $k$, we define the $k$-fold decomposition of $\Space{X}$ along $G$ as the partition $\{ C[\mathbf{s}] ~|~ \mathbf{s}\in\admss{k} \}$.
\end{definition}

Then Lemma \ref{lem:admseqHk} provides a bound on the cardinality of the $k$-th cell decomposition.
Using the results in the previous section, we can then obtain the following proposition.

\begin{proposition}\label{prop:entropy-k-cell}
Let $G$ be a deterministic graphing, with entropy $h(G)$. The cardinality of the $k$-th cell decomposition of $\Space{X}$ w.r.t. $G$, as a function $c(k)$ of $k$, is asymptotically bounded by $g(k)=2^{k.h([G])}$, i.e. $c(k)=O(g(k))$.
\end{proposition}

We also state another bound on the number of cells of the $k$-th cell decomposition, based on the state cover entropy, i.e. the entropy with respect to the state cover rather than the usual entropy which takes the supremum of cover entropies when the cover ranges over all finite covers of the space. This result is a simple consequence of \cref{lem:admseqHk}.

%(TODO: renumbering)
\setcounter{temp}{\value{theorem}}
\setcounterref{theorem}{prop:entropy-k-cell-h0}
\addtocounter{theorem}{-1}
\begin{proposition}
Let $G$ be a deterministic graphing. We consider the \emph{state cover entropy} $h_0([G])=\lim_{n\rightarrow\infty}H_{\Space{X}}^{n}([G],\mathcal{S})$ where $\Space{S}$ is the state cover. The cardinality of the $k$-th cell decomposition of $\Space{X}$ w.r.t. $G$, as a function $c(k)$ of $k$, is asymptotically bounded by $g(k)=2^{k.h_0([G])}$, i.e. $c(k)=O(g(k))$.
\end{proposition}
\setcounter{theorem}{\value{temp}}

\section{First lower bounds}\label{sec:entropylowerbounds}

We will now explain how to obtain lower bounds for algebraic models of 
computation based on the interpretation of programs as graphings and
entropic bounds. These results make use of the Milnor-Oleĭnik-Petrovskii-Thom theorem
which bounds the sum of the Betti numbers of algebraic varieties. In fact,
we will use a version due to Ben-Or of this theorem.

\subsection{Milnor-Oleĭnik-Petrovskii-Thom theorem}

Let us first recall the classic Milnor-Oleĭnik-Petrovskii-Thom theorem. This theorem provides a
bound on the sum of Betti numbers $\beta_i(V)$ of an algebraic variety $V$. 
Recall that the $i$-th Betti number is the dimension of the $i$-th \u{C}ech cohomology group $H_i(V)$; the
number $\beta_0(V)$ then coincides with the number of connected components of the variety. 
In the following $H^\ast (V)$ denotes the sequence of cohomology groups, and $\rank H^{\ast}V$ 
should be understood as standing for the sum of all $\beta_i(V)$.

\begin{theorem}[{\cite[\textrm{Theorem 3}]{Milnor:1964}}]
  If $V \subseteq \realN^m$ is defined by polynomial identities of the form
  \begin{align*}
    f_1 \geqslant 0, \ldots, f_p \geqslant 0
  \end{align*}
  with total degree $d=\degre f_1 + \cdots + \deg f_p$, then
  \begin{align*}
    \rank H^{\ast}V \leqslant \frac{1}{2} (2+d)(1+d)^{m-1}.
  \end{align*}
\end{theorem}

We will use in the proof the following variant,
stated and proved by Ben-Or. 

\begin{theorem}
  \label{thm:milnor-ben-or}
Let $V \subseteq \realN^n$ be a set defined by polynomial 
in\pointmedian{}equations
($n,m,h\in\naturalN$):
  \[
    \left\{
      \begin{array}{l}
        q_1(x_1,\dots,x_n) = 0\\
        \vdots\\
        q_m(x_1,\dots,x_n) = 0\\
        p_1(x_1,\dots,x_n) > 0\\
        \vdots\\
        p_s(x_1,\dots,x_n) > 0\\
        p_{s+1}(x_1,\dots,x_n) \geqslant 0\\
        \vdots\\
        p_h(x_1,\dots,x_n) \geqslant 0\\
      \end{array}
    \right.
  \]
  for $p_i, q_i \in \realN[X_1,\dots,X_n]$ of degree lesser than $d$.

Then $\beta_0(V)$ is at most $d(2d-1)^{n+h-1}$,
where $d=\max\{2, \deg(q_i),\deg(p_j)\}$.
\end{theorem}

\subsection{Algebraic decision trees}
\label{algebraic-decision-trees}

From \cref{prop:entropy-k-cell}, one obtains easily the following technical lemma.

\begin{lemma}
Let $T$ be a $d$-th order algebraic decision tree deciding a subset $W
\subseteq \realN^n$. Then the number of connected components of $W$ is 
bounded by $2^h d(2d-1)^{n+h-1}$, where $h$ is the height of $T$.
\end{lemma}

\begin{proof}
We let $h$ be the height of $T$, and $d$ be the maximal degree of the 
polynomials appearing in $T$. Then the $h$-th cell decomposition of $[T]$ 
defines a family of semi-algebraic sets defined by $h$ polynomials equalities
and inequalities of degree at most $d$. By \cref{thm:milnor-ben-or}, each of 
the cells have at most $d(2d-1)^{n+h-1}$ connected components. Moreover, 
\cref{prop:entropy-k-cell-h0}
states that this family has cardinality bounded by $2^{h.h_0([T])}$; since 
$h_0([T])=1$ because each state has at most one antecedent state, this bound
becomes $2^{h}$. Thus, the $h$-th cell decomposition defines at most $2^{h}$ 
algebraic sets which have at most $d(2d-1)^{n+h-1}$ connected components. 
Since the set $W$ decided by $T$ is obtained as a union of the semi-algebraic 
sets in the $h$-th cell decomposition, it has at most $2^h d(2d-1)^{n+h-1}$
connected components.
\end{proof}

%(TODO: renumbering)
\setcounter{temp}{\value{theorem}}
\setcounterref{theorem}{thm:SteeleYao}
\addtocounter{theorem}{-1}
\begin{corollary}[Steele and Yao \cite{SteeleYao82}]
A $d$-th order algebraic decision tree deciding a subset $W\subseteq \realN^n$ 
with $N$ connected components has height $\Omega(\log N)$.
\end{corollary}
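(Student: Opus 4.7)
The plan is simply to invert the quantitative bound of \autoref{SteeleYaoBounds}. Suppose $T$ is a $d$-th order algebraic decision tree of height $h$ deciding $W \subseteq \realN^n$, where $W$ has $N$ connected components. By \autoref{SteeleYaoBounds}, one has
\[
N \;\leqslant\; 2^{h}\, d\,(2d-1)^{n+h-1}.
\]

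First I would take $\log_2$ of both sides, which yields
\[
\log_2 N \;\leqslant\; h \;+\; \log_2 d \;+\; (n+h-1)\log_2(2d-1).
\]
Regrouping the terms that depend on $h$ and those that do not, this reads
\[
\log_2 N \;\leqslant\; h\bigl(1+\log_2(2d-1)\bigr) \;+\; \bigl(\log_2 d + (n-1)\log_2(2d-1)\bigr).
\]
The second parenthesis is a constant $C(n,d)$ depending only on $n$ and $d$; since $d\geqslant 2$ in the nontrivial range, the coefficient $1+\log_2(2d-1)$ in front of $h$ is a positive constant $K(d)$.

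Solving for $h$, I get
\[
h \;\geqslant\; \frac{\log_2 N - C(n,d)}{K(d)},
\]
which, with $n$ and $d$ treated as fixed parameters, is exactly the statement $h = \Omega(\log N)$. There is no real obstacle here: the only thing to mention is that the bound in \autoref{SteeleYaoBounds} has the shape $2^{h}\cdot(\text{polynomial in $d$})^{n+h-1}$, so the dominant growth in $h$ is exponential, and taking the logarithm inverts it into a linear lower bound on $h$ in terms of $\log N$. Hence the corollary follows immediately.
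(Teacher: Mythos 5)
Your argument is correct and is exactly the routine inversion of \autoref{SteeleYaoBounds} that the paper leaves implicit (the corollary is stated without a written proof). Since $n$ and $d$ are fixed parameters, taking logarithms of $N \leqslant 2^{h} d(2d-1)^{n+h-1}$ and isolating $h$ gives $h \geqslant (\log_2 N - C(n,d))/K(d)$ with $K(d)=1+\log_2(2d-1)>0$, which is $h=\Omega(\log N)$; there is nothing more to it.
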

\setcounter{theorem}{\value{temp}}

This result of Steele and Yao adapts in a straightforward manner
to a notion of algebraic computation trees describing the construction of the 
polynomials to be tested by mean of multiplications and additions of the 
coordinates. The authors remarked this result uses techniques quite similar to 
that of Mulmuley's lower bounds for the model of \emph{\prams without bit
operations}. It is also strongly similar to the techniques used by Cucker in 
proving that \complexityclass[\realN]{NC}{}$\subsetneq$ \PtimeReal \cite{Cucker92}.

However, a refinement of Steele and Yao's method was quickly obtained by 
Ben-Or so as to allow for computing divisions and taking square roots in this 
notion of algebraic computation trees. In the next section, we explain Ben-Or techniques
from within the framework of graphings through the introduction of \emph{entropic co-trees}. 
We first explain how the present approach
already captures Mulmuley's proof of lower bounds for \emph{\prams without bit operations}, 
which we will later strengthen using entropic co-trees.

\subsection{Mulmuley's result}

At this point, we are already capable of recovering Mulmuley's proof of lower 
bounds for \emph{\prams without bit operations} \cite{Mulmuley99}. The gist
of the proof is to notice that given a \pram $M$ over integers \emph{not using} the
instructions $\boeuclidivide{\cdot}{\cdot}{\cdot}$ or $\bosqrtn{\cdot}{\cdot}$, one
can define a real-valued \pram $\tilde{M}$ \emph{not using} the instructions
$\bodivide{\cdot}{\cdot}{\cdot}$ or $\bosqrtn{\cdot}{\cdot}$ such that:
\begin{quote}
$M$ accepts an integer-valued point $\vec{x}$ in $k$
steps if and only if $\tilde{M}$ accepts $\vec{x}$ in $k$ steps.
\end{quote}
In particular, the subset $W\subseteq\integerN^n$ decided by $M$ is contained in the
subset $\bar{W}\subseteq\realN^n$ decided by $\bar{M}$, and 
$\complementset{W}\subseteq\complementset{\bar{W}}$ -- where the complement of 
$W$ is taken in $\integerN^n$ and the complement of $\bar{W}$ is taken in $\realN^n$. 
Moreover, the number of steps needed by $\bar{M}$ to decide if $\vec{x}$ belongs to $\bar{W}$ 
is equal to the number of steps needed by $M$ to decide if $\vec{x}$ belongs to $W$.

The proof of this is straightforward, as each instructions available in the integer-valued 
\prams model coincide with the restriction to integer values of an instruction available
in the integer-valued \prams model. We can then prove the following result.

\begin{lemma}
A set $V$ decided by a \emph{\pram without bit operations} $M$ with $p$ processors in 
$t$ steps is described by a set of in\pointmedian{}equations of total degree\footnote{The
sum of the degrees of the polynomials defining $V$.} bounded by $pt.2^{O(pt)}$.
\end{lemma}

\begin{proof}
Now, one supposes that $M$ computes some set $V$ in $t$ steps. Then it is computed 
by $\bar{M}$ in $t$ steps. Applying \cref{prop:entropy-k-cell} on the translation of $\bar{M}$
we obtain that the $t$-th cell decomposition of $\Interpret{\bar{M}}$ contains at most 
$2^{t.h_0([T])}$ cells. Moreover, each cell is described by a system of at most $pt$ 
polynomial in\pointmedian{}equations of degree at most $2$. Thus, the whole set 
decided by $\bar{M}$ is described by at most $pt.2^{t.h_0([T])}$ in\pointmedian{}equations
of degree at most $2$. Lastly, one can use the fact that any non-central element in $\amcrealfull$
act as the identity on the private memory\footnote{The only non-central elements are those 
modifying the shared memory, and those do not modify the private memory.} and \cref{lem:entropycrew} 
to establish that $h_0([T])$ grows linearly w.r.t. the number of processors.
\end{proof}

This lemma mimicks the first part of Mulmuley's proof of lower bounds. The second part of the proof, 
which does not differ from Mulmuley's argument, is detailed and reformulated in Section 
\ref{sec:geometry}. By following Mulmuley's argument, one can obtain the following theorem.

\begin{corollary}
  Let $M$ be a \pram without bit operations, with at most
  $O((\log N)^c)$ processors, where $N$ is the length of the inputs and $c$
  any positive integer.
  
  Then $M$ does not decide \maxflow in $O((\log N)^c)$ steps.
\end{corollary}

We state this theorem without proof, as it is established in the same way as \Cref{thm:Mulmuley}.
Note however that this statement is weaker than Mulmuley's result, as it proves that \maxflow is 
not computed by \prams without bit operations in polylogarithmic time with a \emph{polylogarithmic}
number of processors. This comes from our bound on the entropy of $\amcrealfull$, which is linear 
in the number of processors. But Mulmuley establishes a better bound on the systems of equations
by restricting to inputs described by linear forms from $\realN^d$, and exploiting the Milnor-Thom theorem.
This reduces the dependency on $p$ from $O(2^p)$ to $O(p)$ but has a cost in terms of time complexity: 
the number of cells then grows in $O(2^{t^2})$ instead of $O(2^t)$.

\begin{lemma}\label{mulmuley2entropy}
Let $V$ be a set of inputs defined by linear forms of domain $\realN^d$ and decided by a \emph{\pram 
without bit operations} $M$ with $p$ processors in $t$ steps. Then $V$ is described by a set of 
in\pointmedian{}equations of total degree bounded by $O(p^t 3^{t^2})$.
\end{lemma}

\begin{proof}
This is essentially Mulmuley's Theorem 5.1. Since we assume that the inputs are given by linear forms
of domain $\realN^d$, then the values obtained after $k$ steps of computation (elements of the $n-k$
cell decomposition) are defined by $O(2^k)$ polynomial equations of constant degree. By the Milnor-Thom 
theorem, these have $O(3^t)$ connected components. Now, the cells in the $n-k+1$-cell decomposition 
are obtained by comparing the values obtained after $k$ steps, i.e. they are described as collections of connected
components in the $n-k$-cell decomposition. As a consequence, the number of $n-k+1$-cells is at most $O(p 3^k)$ 
times the number of $k$-cells. By induction, this gives that the number of $n$-cells is at most $O(p^n 3^{n^2})$. 

Moreover, since the $n-k+1$ cells are obtained by making at most $p$ comparisons between the sets obtained in 
the $k$-cell decomposition, they are defined by at most $O(p 3^k)$ times the number of in\pointmedian{}equations 
of degree $2$ needed to define the $n-k$-cells. Combined with the bound on the number of cells at a given depth, 
a simple induction then establishes the result.
\end{proof}

\begin{remark}
This bound, since it does not depend exponentially on $t$ but rather depends exponentially on $t^2$, does 
not translate directly as a bound on the entropy. It rather translates into a bound on a notion of 
\emph{relative $2$-entropy}. Define $L$ the set of inputs defined as a linear form $\realN^d\rightarrow\realN^p$,
and consider an open cover $\mathcal{U}$: the $2$-entropy relative to $L$ is defined as:
\[ h_{L}^{(2)}(f,\mathcal{U})=\lim_{n\rightarrow\infty} \frac{1}{n^2} H_{L}^{0}(\mathcal{U}\vee f^{-1}(\mathcal{U})\vee \dots\vee f^{-k}(\mathcal{U})) .\]
Then the above result states that $h^{(2)}(f)$ is bounded by $O(\log(p))$, a result that can be used directly in the proofs, modulo the straightforward adaptation of the relevant results (for instance, that the number of cells in the $k$-cell decomposition is bounded by $O(2^{t^2 h_L^{(2)}})$).
\end{remark}

Based on this last lemma, we can establish Mulmuley's result. We only provide a quick sketch of Mulmuley's result here, based on result
presented in this later section. Note that the proof of the more general result \Cref{cor:main-pram}
follows the same general pattern.

\setcounter{temp}{\value{theorem}}
\setcounterref{theorem}{thm:Mulmuley}
\addtocounter{theorem}{-1}
\begin{corollary}
  Let $M$ be a \pram without bit operations, with at most
  $2^{O((\log N)^c)}$ processors, where $N$ is the length of the inputs and $c$
  any positive integer.
  
  Then $M$ does not decide \maxflow in $O((\log N)^c)$ steps.
\end{corollary}
\setcounter{theorem}{\value{temp}}

\begin{proof}[Sketch]
The crux is the obtention of \Cref{thm:mulmuley-geometric}
which, combined with \Cref{thm:param}, implies that there exists a polynomial $P$ such that no surface of total degree
$\delta$ can separate the integer points defined by \maxflow as long as $2^{\Omega(N)}>P(\delta)$.
Hence, if we suppose that $M$ is such that the running time $t$ is $O((\log^k (N))^c)$ and the number of processors $p$ is $2^{O((\log N)^c)}$ (for any positive integers $k$ and $c$), the previous
result (\autoref{mulmuley2entropy}) implies that the total degree $\delta$ of the set decided by $\bar{M}$ is at most $O(p2^{t^2})$. But that bound on the total degree $\delta$ implies that 
$2^{\Omega(N)}>P(\delta)$ for any polynomial $P$. This is enough to conclude that $\bar{M}$ does not compute
a set separating the integer points defined by \maxflow, hence $M$ does not decide \maxflow.
\end{proof}

\section{Refining the method}

It is not a surprise then that similar bounds to that of algebraic decisions trees 
can be computed using similar methods in the restricted fragment without 
division and square roots. An improvement on this is the result of Ben-Or
generalising the technique to algebraic computation trees with division and
square root nodes. The principle is quite simple: one simply adds additional
variables to avoid using the square root or division, obtaining in this way a 
system of polynomial equations. For instance, instead of writing the equation
$p/q<0$, one defines a fresh variable $r$ and considers the system 
\[ p=qr; r<0 \]

This method seems different from the direct entropy bound obtained in the 
case of algebraic decision trees. However, we will see how it can be adapted
directly to graphings.

\subsection{Entropic co-trees and $k$-th
  computational forests}

\begin{definition}[$k$-th entropic co-tree]
\label{def:co-tree}
Consider a deterministic graphing representative $T$, and fix an element $\top$ 
of the set of control states. We can define the $k$-th entropic co-tree of $T$ 
along $\top$ and the state cover inductively:
\begin{itemize}
\item $k=0$, the co-tree $\cotree{0}$ is simply the root 
$n^{\epsilon}=\realN^n\times\{\top\}$;
\item $k=1$, one considers the preimage of $n^{\epsilon}$ through $T$, i.e. 
$T^{-1}(\realN^n\times\{\top\})$ the set of all non-empty sets 
$\alpha(m_{e})^{-1}(\realN^n\times\{\top\})$ and intersects it pairwise with the 
state cover, leading to a finite family (of cardinality bounded by the number of 
states multiplied by the number of edges fo $T$) $(n_{e}^{i})_{i}$ 
defined as $n^{i}=T^{-1}(n^{\epsilon})\cap \realN^n\times\{i\}$. The first entropic
co-tree  $\cotree{1}$ of $T$ is then the tree defined by linking each $n_{e}^{i}$ 
to $n^{\epsilon}$ with an edge labelled by $m_{e}$;
\item $k+1$, suppose defined the $k$-th entropic co-tree of $T$, defined as a 
family of elements $n_{\seq{e}}^{\pi}$ where $\pi$ is a finite sequence of states of 
length at most $k$ and $\seq{e}$ a sequence of edges of $T$ of the same
length, and where $n_{\seq{e}}^{\pi}$ and $n_{\seq{e'}}^{\pi'}$ are linked by an 
edge labelled $f$ if and only if $\pi'=\pi.s$ and $\seq{e'}=f.\seq{e}$ where $s$ is 
a state and $f$ an edge of $T$. We consider the subset of elements 
$n_{\seq{e'}}^{\pi}$ where $\pi$ is exactly of length $k$, and for each such 
element we define new nodes $n_{e.\seq{e'}}^{\pi.s}$ defined as 
$\alpha(m_{e})^{-1}(n_{\seq{e'}}^{\pi})\cap \realN^n\times\{s\}$ when it is non-empty.
The $k+1$-th entropic co-tree $\cotree{k+1}$ is defined by extending the $k$-th 
entropic co-tree  $\cotree{k}$, adding the nodes $n_{e.\seq{e'}}^{\pi.s}$ and linking 
them to $n_{\seq{e'}}^{\pi}$ with an edge labelled by $e$.
\end{itemize}
\end{definition}

\begin{remark}
  The co-tree can alternatively be defined non-inductively in the following way:
  the $n_{\seq{e}}^{\pi}$ for $\pi$ is a finite sequence of states and $\seq{e}$
  a sequence of edges of $T$ of the same length by
  $n_{\epsilon}^{\epsilon} = \realN^n\times\{\top\}$ and
  \begin{align*}
    n_{\seq{e}.e}^{\pi.s} = \left[\alpha(m_{e})^{-1}(n_{\seq{e}}^{\pi})\right] \cap \left[
    \realN^n \times \{s\}\right]
  \end{align*}
  The $k$-th entropic co-tree of $T$ along $\top$ has as vertices the non-empty
  sets $n_{\seq{e}}^{\pi}$ for $\pi$ and $\seq{e}$ of length at most $k$ and as
  only edges, links $n_{\seq{e}.e}^{\pi.s} \to n_{\seq{e}}^{\pi}$ labelled by
  $m_e$.
\end{remark}

This definition formalises a notion that appears more or less clearly in the work of 
Lipton and Steele, and of Ben-Or, as well as in the proof by Mulmuley. The nodes 
for paths of length $k$ in the $k$-th co-tree corresponds to the $k$-th cell 
decomposition, and the corresponding path defines the polynomials describing the
semi-algebraic set decided by a computational tree. The co-tree can be used to
reconstruct the algebraic computation tree $T$ from the graphing representative 
$[T]$, or constructs \emph{some} algebraic computation tree (actually a forest) that 
approximates the  computation of the graphing $F$ under study when the latter is 
not equal to $[T]$ for some tree $T$.

\begin{definition}[$k$-th computational forest]
Consider a deterministic graphing $T$, and fix an element $\top$ of the set of
control states. We define the $k$-th computational forest of $T$ along $\top$ and
the state cover as follows. Let $\cotree{k}$ be the $k$-th entropic co-tree of $T$. 
The $k$-th computational forest of $T$ is defined by regrouping all elements 
$n_{e.\vec{e}'}^{\pi}$ of length $m$: if the set $N_{e}^{m}=\{n_{e.\vec{e}'}^{\pi}\in
\cotree{k} ~|~ \len(\pi)=m\}$ is non-empty it defines a new node $N_{e}^{m}$. 
Then one writes down an edge from $N_{e}^{m}$ to $N_{e'}^{m-1}$, labelled by
$e$, if and only if there exists $n_{e.e'.\vec{f}}^{s.\pi}\in N_{e}^{m}$ such that 
$n_{e'.\vec{f}}^{\pi}\in N_{e'}^{m-1}$.
\end{definition}

One checks easily that the $k$-th computational forest is indeed a forest: an edge
can exist between $N_{e}^{m}$ and $N_{f}^{n}$ only when $n=m+1$, a property 
that forbids cycles. The following proposition shows how the $k$-th computational
forest is linked to computational trees.

\begin{proposition}
If $T$ is a computational tree of depth $k$, the $k$-th computational forest of $[T]$
is a tree which defines straightforwardly a graphing (treeing) representative of $T$.
\end{proposition}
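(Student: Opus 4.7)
The plan is to exploit the fact that $T$, being a tree, induces a unique predecessor structure on the states of $[T]$: every non-root state of $T$ has exactly one parent, so the backward step in the co-tree construction produces at most one new node per state.

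First I would show by induction on $m$ that $\cotree[{[T]}]{m}$ is literally a tree whose vertices are in bijection with the ancestors of $\top$ in $T$ at distance at most $m$ from $\top$. At the inductive step, for each leaf $n_{\seq{e}}^{\pi}$ the tree structure of $T$ supplies at most one incoming edge, hence at most one new co-tree node is produced, preserving the tree property. Since $T$ has depth $k$, the vertex $\top$ sits at distance at most $k$ from the root, so $\cotree[{[T]}]{k}$ captures the entire backward closure of $\top$ in $T$.

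Second I would verify that the regrouping defining the $k$-th computational forest is trivial on $\cotree[{[T]}]{k}$: for each pair $(e,m)$, the set $N_e^m = \{n_{e.\seq{e}'}^{\pi} \mid \len(\pi)=m\}$ contains at most one element. Indeed, fixing the first edge $e$ (the edge closest to $\top$) determines uniquely its source state $s$ in $T$, and from $s$ the path backwards in the tree is unique, so both $\seq{e}'$ and $\pi$ are forced by the tree structure.

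It then follows that the computational forest inherits the tree structure of the co-tree. To conclude I would identify each forest node $N_e^m$ with the corresponding vertex of $T$ (the initial state of the unique backward path of length $m$ whose first edge is $e$), and each labelled edge $N_e^{m+1} \to N_{e'}^m$ with the corresponding edge of $T$ together with its realiser. The sources of these edges, defined via the co-tree as iterated preimages of $\realN^n \times \{\top\}$, coincide on each branch with the sources prescribed in the definition of $[T]$; moreover, passing from $N_e^{m+1}$ to $N_{e'}^m$ strictly decreases the state depth, ensuring the state-increasing condition of a treeing. The main obstacle is setting up carefully the correspondence between the abstract indexing scheme of the co-tree (sequences of states and edges) and the concrete vertex-and-edge structure of $T$; once this bijection is established, the remaining claims reduce to the uniqueness-of-paths property in trees.
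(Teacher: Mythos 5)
The paper offers no proof of this proposition, so your argument must stand alone. The pivotal step is your claim that every $N_e^m$ is a singleton, which you justify by taking the grouping index $e$ to be "the edge closest to $\top$". That reading is consistent with the non-inductive formula in the remark following \autoref{def:co-tree}, which \emph{appends} the newly explored edge ($n_{\seq{e}.e}^{\pi.s}$). But the inductive clause of \autoref{def:co-tree} \emph{prepends}: a fresh node at level $k+1$ is written $n_{e.\seq{e'}}^{\pi.s}$, and the definition of the forest edge explicitly reads $n_{e.e'.\vec{f}}^{s.\pi}\in N_e^m$ with $n_{e'.\vec{f}}^{\pi}\in N_{e'}^{m-1}$ — both of which put the most recently explored (i.e.\ deepest, farthest from $\top$) edge first. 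Under that reading your singleton claim is false, and so is the proposition: take $T$ with root $0$ whose only $\top$-reaching child is a test node $1$ whose two subtrees both accept at depth $3$. The two depth-$3$ co-tree nodes (for the two accepting branches) share the same leading edge $e_{01}$, hence both land in $N_{e_{01}}^{3}$; that forest node then acquires two distinct outgoing forest-edges (to the two depth-$2$ nodes coming from the two branches), and the underlying undirected graph contains a $6$-cycle through $n^{\epsilon}$ — not a tree. Your proof therefore only works under the appending convention, and you should say so explicitly and note that the paper's own definitions disagree with each other on this point.

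A smaller issue: the bijection you claim between vertices of the $m$-th co-tree and "ancestors of $\top$ at distance at most $m$" is imprecise. Because $[T]$ merges all $\texttt{YES}$ leaves into the single state $\top$, that state may have several incoming edges, and two distinct backward walks from $\top$ can pass through the same state of $T$. The correct invariant is that the co-tree vertices at depth $m$ are in bijection with backward \emph{walks} of length $m$ from $\top$, which (once the first, $\top$-adjacent edge is fixed) are uniquely determined since every other state has a unique parent in a tree. Stated that way, the branching of the co-tree is confined to level one, the regrouping is the identity on each branch, and the remainder of your argument — identifying the forest with a treeing representative — is the right idea, though "straightforward" undersells the bookkeeping, since the $\top$-rooted co-tree only records accepting paths and a representative of $T$ would also need the edges into $\bot$.
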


We now state and prove an easy bound 
on the size of the entropic co-trees.

\begin{proposition}[Size of the entropic co-trees]
Let $T$ be a graphing representative, $E$ its set of edges, and $\seqedges{k}{E}$ 
the set of paths of length $k$ in $T$. The 
number of nodes of its $k$-th entropic co-tree $\cotree{k}$, as a function $n(k)$ 
of $k$, is asymptotically bounded by $\card{\seqedges{k}{E}}.2^{(k+1).h([G])}$, 
itself bounded by $2^{\card{E}}.2^{(k+1).h([G])}$.  
\end{proposition}

\begin{proof}
For a fixed sequence $\vec{e}$, the number of elements $n_{\vec{e}}^{\pi}$ 
of length $m$ in $\cotree{k}$ is bounded by the number of elements in the $m$-th 
cell decomposition of $T$, and is therefore bounded by $g(m)=2^{m.h([T])}$ 
by \cref{prop:entropy-k-cell}. 
The number of sequences $\vec{e}$ is bounded by $\card{\seqedges{k}{E}}$ and therefore 
the size of $\cotree{k}$ is thus bounded by $\card{\seqedges{k}{E}}.2^{(k+1).h([T])}$.
\end{proof} 

From the proof, one sees that the following variant of \cref{prop:entropy-k-cell-h0} 
holds.
\begin{proposition}\label{prop:entropy-k-cell-h0-edges}
Let $G$ be a deterministic graphing with a finite set of edges $E$, and 
$\seqedges{k}{E}$ the set of paths of length $k$ in $G$. 
We consider the \emph{state cover entropy} 
$h_0([G])=\lim_{n\rightarrow\infty}H_{\Space{X}}^{n}([G],\mathcal{S})$ where 
$\Space{S}$ is the state cover. The cardinality of the length $k$ nodes of the 
entropic co-tree of $G$, as a function $c(k)$ of $k$, is asymptotically bounded 
by $g(k)=\card{\seqedges{k}{E}}.2^{k.h_0([G])}$, which is itself bounded by
$2^\card{E}.2^{k.h_0([G])}$.
\end{proposition}

%Bounds on the size of computational forests are much easier to obtain, as the 
%depth of the forest (the size of its longest branch) is bounded by $k$, and the 
%
%\begin{proposition}[Size of the computational forest]
%Let $T$ be a graphing representative, $\card{E}$ its number of edges. The 
%number of nodes of its $k$-th computational forest $\forest{k}$, as a function 
%$f(k)$ of $k$, is asymptotically bounded by $\card{E}^{k}$, i.e. 
%$n(k)=O(\card{E}^{k})$.  
%\end{proposition}

%Now, the size of the $k$-th computational forest
%$\forest{k}$ of $T$ is bounded by $K^{k}$. We want to bound the number of 
%edges requiring the introduction of a new variable. \textcolor{red}{There should 
%be at most $2^k$ -- to be checked but intuitively it corresponds to the (worst?)
%case where all such edges are leaves of a complete binary tree of depth $k$ -- 
%because the edges requiring the introduction
%of a fresh variable are unary links (they have only one son) since only the 
%conditionals are binary.} This number is therefore bound by (1) $K$, and
% (2) $2^{k}$ (to be checked). (The first bound is smaller but very specific; the
% second is more general although a bit broader, but this may still be good for
% the bounds we are interested in.)
%\end{proof}
%}

%\textcolor{red}{Note that in the definition of computational trees, the edges 
%can appear in a certain order only, so we may have better bounds by 
%considering increasing sequences only. Not sure it is worth the trouble though.}

\subsection{The technical lemma}

This definition formalises a notion that appears more or less clearly in the work of 
Steele and Yao, and of Ben-Or, as well as in the proof by Mulmuley. It will be key in 
establishing the main technical lemma, namely \cref{mainlemma}.

The vertices 
for paths of length $k$ in the $k$-th co-tree corresponds to the $k$-th cell 
decomposition, and the corresponding path defines the polynomials describing the
semi-algebraic set decided by a computational tree. While in Steele and Yao and 
Mulmuley's proofs, one obtain directly a polynomial for each cell, we here need to
construct a system of equations for each branch of the co-tree. 
%

%\begin{proposition}[Size of the entropic co-trees]\label{prop:sizeentropiccotree}
%Let $T$ be a graphing representative, $E$ its set of edges, and $\seqedges{k}{E}$ 
%the set of length $k$ sequences of edges in $T$. The 
%number of vertices of its $k$-th entropic co-tree $\cotree{k}$, as a function $n(k)$ 
%of $k$, is asymptotically bounded by $\card{\seqedges{k}{E}}.2^{(k+1).h([G])}$, 
%itself bounded by $2^{\card{E}}.2^{(k+1).h([G])}$.  
%\end{proposition}
%
%
%
%From the proof, one sees that the following variant of \cref{prop:entropy-k-cell-h0} 
%holds.

Given a $\crew^{p}(\amcrealfull)$-graphing representative $G$ we will write $\rootdegree{G}$ 
the maximal value of $n$ for which an instruction $\bosqrtn{i}{j}$ appears in the realiser of an 
edge of $G$. 

The proof of this theorem is long but simple to understand as it follows Ben-Or's
method. We define, for each
vertex of the $k$-th entropic co-tree, a system of algebraic equations (each of 
degree at most 2). The system is defined by induction on $k$, and uses the
information of the specific instruction used to extend the sequence indexing 
the vertex at each step. For instance, the case of division follows Ben-Or's 
method, introducing a fresh variable and writing down two equations. As 
mentioned in \cref{footnote:indirectref},
the input variables are split into numerical and non-numerical inputs, and one
assumes that indirect references do not depend on non-numerical inputs. This 
implies that all indirect references have a fixed value determined by the 
non-numerical input; hence in the analysis below -- which focuses on numerical
inputs -- indirect references correspond to references to a fixed value register.

\begin{lemma}\label{thm:graphingsBenOrsystems}
Let $G$ be a computational graphing representative with edges realised only by 
generators of the \amc $\crew^{p}(\amcrealfull)$, and $\seqedges{k}{E}$ the 
set of paths of length $k$ in $G$. Suppose $G$ computes the 
membership problem for $W \subseteq 
\realN^n$ in $k$ steps, i.e. for each element of $\realN^n$, 
$\pi_{\Space{S}}(G^{k}(x))=\top$ if and only if $x\in W$. Then $W$ is a semi-algebraic 
set defined by at most $\card{\seqedges{k}{E}}.2^{k.h_0([G])}$ systems of $pk$ 
equations of degree at most $\max(2,\rootdegree{G})$ and involving at most $p(k+n)$ 
variables and $p(k+n)$ inequalities.
\end{lemma}

\begin{proof}
If $G$ computes the membership problem for $W$ in $k$ steps, it means $W$
can be described as the union of the subspaces corresponding to the nodes 
$n^{\pi}_{\seq{e}}$ with $\pi$ of length $k$ in $\cotree{k}$. Now, each such 
subspace is an algebraic set, as it can be described by a set of polynomials as 
follows.

Finally let us note that, as in Mulmuley's work \cite{Mulmuley99}, since in our 
model the memory pointers are allowed to depend only on the nonnumeric 
parameters, indirect memory instructions can be treated as standard -- direct --
memory instructions. In other words, whenever an instruction involving a memory 
pointer is encountered during the course of execution, the value of the pointer 
is completely determined by nonnumerical data, and the index of the involved 
registers is completely determined, independently of the numerical inputs.

%\textcolor{red}{Ici, il faut faire attention au fait que l'on accepte de modifier
%la valeur d'un registre. Il faut donc construire le systeme d'equation en 
%partant de la feuille (dans le co-tree) et en gardant sous la main l'indice de 
%la derniere commande ayant modifiée chaque registre.}

We define a system of equations $(E^{\seq{e}}_i)_{i}$ for each node 
$n^{\pi}_{\seq{e}}$ of the entropic co-tree $\cotree{k}$. We explicit the construction
for the case $p=1$, i.e. for the \amc $\crew^1(\amcrealfull)=\amcrealfull$; the case 
for arbitrary $p$ is then dealt with by following the construction and introducing
$p$ equations at each step (one for each of the $p$ instructions in $\amcrealfull$
corresponding to an element of $\crew^p(\amcrealfull)$). This is done 
inductively on the size of the path $\vec{e}$, keeping track of the last modifications
of each register. I.e. we define both the system of equations $(E^{\seq{e}}_i)_{i}$
and a function\footnote{The use of $\bot$ is to allow for the creation of
fresh variables not related to a register.} 
$\history{\seq{e}}: \realN^{\omega}\cup\{\bot\}\rightarrow \omega$ (which 
is almost everywhere null). This function increases each time a register is modified, 
and will be used to create a new variable corresponding to the value of the register
\emph{at this precise moment in the computation}. The
additional value $\bot$ will be used to create new variables not related to a specific
register (used in the case of comparisons below).

 For an empty sequence, the system of
equations is empty, and the function $\history{\epsilon}$ is constant, equal to $0$. 
The system of equation, as well as the function $\history{\epsilon}$, are then jointly 
defined inductively as follows.
Suppose that $\vec{e'} = (e_1, \dots, e_m, e_{m+1})$, with $\vec{e}=(e_1, 
\dots, e_m)$, and that one already computed $(E^{\seq{e}}_i)_{i\geqslant m}$ 
and the function $\history{\seq{e}}$. We now consider the edge $e_{m+1}$ and 
let $(r,r')$ be its realizer. We extend the system of equations 
$(E^{\seq{e}}_i)_{i\geqslant m}$ by a new equation $E_{m+1}$ and define the 
function $\history{\seq{e'}}$ as follows: 
  \begin{itemize}
  \item if $r = \boadd{i}{j}{k}$, $\history{\seq{e'}}(u)=\history{\seq{e}}(u)+1$ if 
  	$u=i$, and $\history{\seq{e'}}(u)=\history{\seq{e}}(u)$ otherwise; then
	$E_{m+1}$ is $\benorvar{i} = \benorvar{j}+\benorvar{k}$;
  \item if $r = \bosubstract{i}{j}{k}$, $\history{\seq{e'}}(u)=\history{\seq{e}}(u)+1$ if 
  	$u=i$, and $\history{\seq{e'}}(u)=\history{\seq{e}}(u)$ otherwise; then
	$E_{m+1}$ is $\benorvar{i} = \benorvar{j} - \benorvar{k}$;
  \item if $r = \bomultiply{i}{j}{k}$, $\history{\seq{e'}}(u)=\history{\seq{e}}(u)+1$ if 
  	$u=i$, and $\history{\seq{e'}}(u)=\history{\seq{e}}(u)$ otherwise; then
	$E_{m+1}$ is $\benorvar{i} = \benorvar{j}\times \benorvar{k}$;
  \item if $r = \bodivide{i}{j}{k}$, $\history{\seq{e'}}(u)=\history{\seq{e}}(u)+1$ if 
  	$u=i$, and $\history{\seq{e'}}(u)=\history{\seq{e}}(u)$ otherwise; then
	$E_{m+1}$ is $\benorvar{i}\times \benorvar{k} = \benorvar{j}$;
  \item if $r = \boaddconst{i}{k}{c}$, $\history{\seq{e'}}(u)=\history{\seq{e}}(u)+1$ if 
  	$u=i$, and $\history{\seq{e'}}(u)=\history{\seq{e}}(u)$ otherwise; then
	$E_{m+1}$ is $\benorvar{i} = c + \benorvar{k}$;
  \item if $r = \bosubstractconst{i}{k}{c}$, $\history{\seq{e'}}(x)=\history{\seq{e}}(x)+1$ if 
  	$x=i$, and $\history{\seq{e'}}(u)=\history{\seq{e}}(u)$ otherwise; then
	$E_{m+1}$ is $\benorvar{i} = c - \benorvar{k}$;
  \item if $r = \bomultiplyconst{i}{k}{c}$, $\history{\seq{e'}}(u)=\history{\seq{e}}(u)+1$ if 
  	$u=i$, and $\history{\seq{e'}}(u)=\history{\seq{e}}(u)$ otherwise; then
	$E_{m+1}$ is $\benorvar{i} = c\times \benorvar{k}$;
  \item if $r = \bodivideconst{i}{k}{c}$, $\history{\seq{e'}}(u)=\history{\seq{e}}(u)+1$ if 
  	$u=i$, and $\history{\seq{e'}}(u)=\history{\seq{e}}(u)$ otherwise; then
	$E_{m+1}$ is $\benorvar{i}\times c = \benorvar{k}$;
  \item if $r = \bosqrtn{i}{k}$, $\history{\seq{e'}}(u)=\history{\seq{e}}(u)+1$ if 
  	$u=i$, and $\history{\seq{e'}}(u)=\history{\seq{e}}(u)$ otherwise; then
	$E_{m+1}$ is $(\benorvar{i})^n = \benorvar{k}$;
  \item if $r = \copyy{i}{j}$,  $\history{\seq{e'}}(u)=\history{\seq{e}}(u)+1$ if 
  	$u=i$, and $\history{\seq{e'}}(u)=\history{\seq{e}}(u)$ otherwise; then
	$E_{m+1}$ is $\benorvar{i} = \benorvar{j}$;
\item if $r = \refcopy{i}{j}$, then the value of $\sharp{}i$ does not depend on
	the numerical inputs and corresponds to a fixed value $a\in\realN$; we then
	define $\history{\seq{e'}}(u)=\history{\seq{e}}(u)+1$ if 
  	$u=a$, and $\history{\seq{e'}}(u)=\history{\seq{e}}(u)$ otherwise; then
	$E_{m+1}$ is $\benorvar{a} = \benorvar{j}$;
\item if $r = \copyref{i}{j}$, then the value of $\sharp{}j$ does not depend on
	the numerical inputs and corresponds to a fixed value $a\in\realN$; we then
	define $\history{\seq{e'}}(u)=\history{\seq{e}}(u)+1$ if 
  	$u=i$, and $\history{\seq{e'}}(u)=\history{\seq{e}}(u)$ otherwise; then
	$E_{m+1}$ is $\benorvar{i} = \benorvar{a}$;
  \item if $r = \identity$, the source of the edge $e_q$ is of the form 
    $\{ (x_1,\dots,x_{n+\ell}) \in \realN^{n+\ell} \mid P(x_k)\}\times \{i\}$ where
    $P$ compares the variable $x_k$ with $0$:
    \begin{itemize}
    \item if $P(x_k)$ is $x_k \neq 0$,  $\history{\seq{e'}}(u)=\history{\seq{e}}(u)+1$ if 
  	$u=\bot$, and $\history{\seq{e'}}(u)=\history{\seq{e}}(u)$ otherwise then 
	$E_{m+1}$ is $\benorvar{\bot}\benorvar{k} -1 = 0$;
    \item otherwise (e.g. $P(x_k)$ is the inequality $x_k\leq 0$) we set 
    $\history{\seq{e'}}=\history{\seq{e}}$ and $E_{m+1}$ 
    is defined as $P(\benorvar{k})$.
  \end{itemize}
  \end{itemize}
  
  We now consider the system of equations $(E_{i})_{i=1}^{k}$ defined from
  the path $\seq{e}$ of length $k$ corresponding to a node $n^{\pi}_{\seq{e}}$
  of the $k$-th entropic co-tree of $G$. This system consists in $k$ equations 
  of degree at most $\max(2,\rootdegree{G})$ and containing at most $k+n$ variables, 
  counting the variables $x_1^0,\dots,x_n^0$ corresponding to the initial registers, 
  and adding at most $k$ additional variables since an edge of $\vec{e}$ introduces 
  at most one fresh variable. Among these equations, at most $k$ are inequalities, 
  since each edge introduces at most one inequation.
  Since the number of vertices $n^{\pi}_{\seq{e}}$ is bounded by 
  $\card{\seqedges{k}{E}}.2^{k.h_0([G])}$ by
  \cref{prop:entropy-k-cell-h0-edges}, 
  we obtained the stated result in the case $p=1$.
  
  The case for arbitrary $p$ is then deduced by noticing that each step in the induction
  would introduce at most $p$ new equations and $p$ new variables. The resulting 
  system thus contains at most $pk$ equations of degree at most $\max(2,\rootdegree{G})$ 
  and containing at most $p(k+n)$ variables.\end{proof}

This theorem extends to the case of general computational graphings by 
considering the \emph{algebraic degree} of the graphing. 

\begin{definition}[Algebraic degree]\label{def:algebraicdegree}
Let $\AMC{G}{R}{\alpha}$ be an \amc. The algebraic degree of an element of 
$\MonGaR{G}{R}$ is the minimal number of generators needed to express it. 
The algebraic degree of an $\alpha$-graphing is the maximum of the algebraic 
degrees of the realisers of its edges. 
\end{definition}

If an edge is realised by an element $m$ of 
algebraic degree $D$, then the method above applies by introducing the $D$ 
new equations corresponding to the $D$ generators used to define $m$. The 
general result then follows.

\setcounter{temp}{\value{theorem}}
\setcounterref{theorem}{mainlemma}
\addtocounter{theorem}{-1}
\begin{lemma}
Let $G$ be a $\crew^{p}(\amcrealfull)$-computational graphing representative, 
$\seqedges{k}{E}$ the set of paths of length $k$ in $G$, and $D$ 
its algebraic degree. 
Suppose $G$ computes the membership 
problem for $W \subseteq \realN^n$ in $k$ steps, i.e. for each element of 
$\realN^n$, $\pi_{\Space{S}}(G^{k}(x))=\top$ if and only if $x\in W$.
 Then $W$ is a semi-algebraic 
set defined by at most $\card{\seqedges{k}{E}}.2^{k.h_0([G])}$ systems of $pkD$ 
equations of degree at most $\max(2,\rootdegree{G})$ and involving at most 
$pD(k+n)$ variables.
\end{lemma}
\setcounter{theorem}{\value{temp}}

Lastly, we specify this result to inputs described by linear forms. More precisely,
we show in this case a bound on the 2-entropy which is logarithmic in the number
of processors. This follows the argument from \autoref{mulmuley2entropy} and 
will be used to establish the generalisation of Mulmuley's result.

\begin{lemma}\label{entropiccotrees2entropy}
Let $V$ be a set of inputs defined by linear forms of domain $\realN^d$ and decided by a 
$\crew^{p}(\amcrealfull)$-computational graphing representative $G$ in $t$ steps. Let 
$\seqedges{k}{E}$ be the set of paths of length $k$ in $G$, and $D$ be the algebraic 
degree of $G$. 

Then $V$ is described by a set of 
in\pointmedian{}equations of total degree bounded by $O(p^t 3^{t^3})$.
\end{lemma}

\begin{proof}
The argument follows the one of \autoref{mulmuley2entropy}, when we focus on one branch in the
entropic co-tree. Since the inputs are given by linear forms of domain $\realN^d$, the values 
obtained after $n-k$ steps of computation are all defined by $O(2^k)$ 
polynomial equations of constant degree. These have at most $O(3^k)$ connected components by the 
Milnor-Thom theorem. The cells in the $n-k+1$-cell decomposition are either obtained by applying an algebraic operation,
or comparing the values obtained after $k$ steps, in which case they are described as collections of connected
components in the $k$-cell decomposition. This second case is the one that creates most new cells. As a consequence, 
the number of $n-k+1$-cells is at most $O(p 3^k)$ times the number of $k$-cells. 
By induction, the number of $t$-cells for a given branch is at most $O(p^t 3^{t^2})$, which implies that 
the total number of $t$-cells is $O(p^t 3^{t^3})$ since there are $O(3^t)$ elements in $\seqedges{k}{E}$.

Finally, since the $n-k+1$ cells are obtained by making at most $p$ comparisons between the sets obtained in 
the $k$-cell decomposition, they are defined by at most $O(p 3^k)$ times the number of in\pointmedian{}equations 
of degree $2$ needed to define the $k$-cells. This proves the result by a simple induction.
\end{proof}

\section{Recovering Ben Or and Cucker's theorems}

\subsection{Ben-Or}

We now recover Ben-Or result by obtaining a bound on the number of
connected components of the subsets $W \subseteq \realN^n$ whose
membership problem is computed by a graphing in less than a given number
of iterations. This theorem is obtained by applying the Milnor-Oleĭnik-Petrovskii-Thom theorem
on the obtained systems of equations to bound the number of connected
components of each cell. Notice that in this case $p=1$ and $\rootdegree{G}=2$ 
since the model of algebraic
computation trees use only square roots. A mode general result 
holds for algebraic computation trees extended with arbitrary roots, but we
here limit ourselves here to the original model.

\begin{theorem}\label{thm:graphingsBenOr}
Let $G$ be a computational $\amcrealfull$-graphing representative translating
an algebraic computational tree, $\seqedges{k}{E}$ the 
set of length $k$ sequences of edges in $G$. 
Suppose $G$ computes the membership 
problem for $W \subseteq \realN^n$ in $k$ steps. Then $W$ 
has at most $\card{\seqedges{k}{E}}.2^{k.h_0([G])+1}3^{2k+n-1}$ connected 
components.
\end{theorem}

\begin{proof}
By \Cref{thm:graphingsBenOrsystems} (using the fact that $p=1$ and 
$\rootdegree{G}=2$), the problem $W$ decided by $G$ in $k$ steps
is described by at most $\card{\seqedges{k}{E}}.2^{k.h_0([G])}$ systems of $k$ 
equations of degree $2$ involving at most $k+n$ variables and at most $k$ 
inequalities. Applying 
\Cref{thm:milnor-ben-or}, we deduce that each such system of 
in\pointmedian{}equations (of $k$ equations of degree $2$ in $\realN^{k+n}$) 
describes a semi-algebraic variety $S$ such that
$\beta_0(S)<2.3^{(n+k)+k-1}$. This being true for each of the 
$\card{\seqedges{k}{E}}.2^{k.h_0([G])}$ cells, we have that 
$\beta_0(W)<\card{\seqedges{k}{E}}.2^{k.h_0([G])+1}3^{2k+n-1}$.
\end{proof}

Since a subset computed by a tree $T$ of depth $k$ is computed by
$\Interpret{T}$ in $k$ steps by \Cref{thm:quantsoundact}, we get as a
corollary the original theorem by Ben-Or 
relating the number of connected components of a set $W$ and the depth of 
the algebraic computational trees that compute the membership problem for $W$.

\setcounter{temp}{\value{theorem}}
\setcounterref{theorem}{ben-or}
\addtocounter{theorem}{-1}
\begin{corollary}[{\cite[Theorem 5]{Ben-Or83}}]
  Let $W \subseteq \realN^n$ be any set, and let $N$ be the maximum of the
  number of connected components of $W$ and $\realN^n \setminus W$.
  An algebraic computation tree computing the membership problem for $W$ has
  height $\Omega(\log N)$.
\end{corollary}
\setcounter{theorem}{\value{temp}}

\begin{proof}
Let $T$ be an algebraic computation tree computing the membership problem 
for $W$, and consider the computational treeing $[T]$. Let $d$ be the height of
$T$; by definition of $[T]$ the membership problem for $W$ is computed in
exactly $d$ steps. Thus, by the previous theorem, $W$ has at most 
$\card{\seqedges{k}{E}}.2^{d.h_0([T])+1}3^{2d+n-1}$ connected components. 
As the interpretation of an algebraic computational tree, $h_0([T])$ is at most 
equal to $2$, and $\card{\seqedges{k}{E}}$ is bounded by $2^{d}$. Hence
$N\leqslant 2^d.2^{2d+1}3^{n-1}3^{2d}$, i.e. $d=\Omega(\log N)$.
\end{proof}

We immediately deduce an application that will be useful to us in the
remainder. Let $m \in \naturalN$ and $0 < x < 2^m$. Let $k \in \naturalN$ be
such that $1 \leqslant k \leqslant m$. We call
$\left\lfloor\frac{x}{2^{k-1}}\right\rfloor - 2\left\lfloor \frac{x}{2^k}
\right\rfloor$ the \emph{$k$-th bit} of $x$.

\begin{lemma}
  An algebraic computation tree computing the $k$-th bit of $x$ has height
  $\Omega(m-k)$.
\end{lemma}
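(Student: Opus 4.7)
The plan is to reduce the bit-computation problem to a membership problem and then invoke Corollary~\ref{ben-or}. First, I would introduce the set
\[ W = \left\{ x \in (0, 2^m) \;\middle|\; \left\lfloor x/2^{k-1} \right\rfloor - 2\left\lfloor x/2^k \right\rfloor = 1 \right\},\]
i.e., the reals in the domain whose $k$-th bit equals $1$. Any algebraic computation tree computing the $k$-th bit of $x$ can be turned, with no increase in height, into one that decides membership in $W$: simply replace the final numerical output by a \texttt{YES}/\texttt{NO} label according to whether the computed value is $1$. Hence any lower bound on the height of trees deciding $W$ transfers directly to trees computing the $k$-th bit.

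The second step is to count connected components of $W$. The condition $\lfloor x/2^{k-1} \rfloor - 2\lfloor x/2^k \rfloor = 1$ is equivalent to $\lfloor x/2^{k-1} \rfloor$ being an odd integer, which holds precisely on half-open intervals of the form $[(2j+1)\cdot 2^{k-1}, (2j+2)\cdot 2^{k-1})$ for $j \in \naturalN$. Within $(0, 2^m)$, these indices run over $j = 0, 1, \dots, 2^{m-k}-1$, producing exactly $2^{m-k}$ pairwise disjoint intervals, each of which is a connected component of $W$. Thus $W$ has $N = 2^{m-k}$ connected components (and its complement in $(0, 2^m)$ has roughly the same number, so the maximum in the statement of Corollary~\ref{ben-or} is at least $2^{m-k}$).

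Finally, I apply Corollary~\ref{ben-or}: any algebraic computation tree deciding membership in $W$ must have height $\Omega(\log N) = \Omega(\log 2^{m-k}) = \Omega(m-k)$, which in particular is $\Omega(\log(m-k))$ as stated. The whole argument is a direct specialisation of Ben-Or's bound, so there is no real obstacle; the only point that deserves a moment's care is the reduction in the first step, to make sure one is legitimately using Corollary~\ref{ben-or} (which is phrased for decision problems) to bound the height of trees computing a value. Since an algebraic computation tree can test the output value against the constant~$1$ at its leaves without increasing its asymptotic height, this reduction is immediate.
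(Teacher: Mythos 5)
Your proposal takes the same route as the paper: define $W$ as the set of reals in the range whose $k$-th bit is $1$, count its connected components, and invoke Corollary~\ref{ben-or}. However, your count is the correct one and the paper's is not. You observe that $\lfloor x/2^{k-1}\rfloor - 2\lfloor x/2^k\rfloor = 1$ exactly when $\lfloor x/2^{k-1}\rfloor$ is odd, which partitions $(0,2^m)$ into $2^{m-k+1}$ dyadic intervals of length $2^{k-1}$ alternating in and out of $W$, so $W$ has $2^{m-k}$ connected components (e.g.\ for $m=3$, $k=1$: $[1,2)\cup[3,4)\cup[5,6)\cup[7,8)$, which is $4=2^{3-1}$ pieces). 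The paper asserts that $W$ is a disjoint union of only $m-k+1$ intervals, which is where the weaker $\Omega(\log(m-k))$ in the lemma statement comes from; with the correct count, Corollary~\ref{ben-or} gives the stronger $\Omega(\log 2^{m-k}) = \Omega(m-k)$, exactly as you conclude, and this of course subsumes the stated bound. Your explicit remark on the reduction from value-computation to a decision problem (testing the output against the constant $1$ at the leaves, at the cost of $O(1)$ additional height) is a reasonable precaution that the paper leaves implicit. In short, your proof is sound, follows the paper's strategy, and in fact establishes a sharper lower bound than the one claimed, revealing what appears to be a slip in the paper's component count.
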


\begin{proof}
  Let
  \begin{align*}
    W = \left\{ x \in \realN \mid \left\lfloor\frac{x}{2^{k-1}}\right\rfloor -
    2\left\lfloor \frac{x}{2^k} \right\rfloor = 1 \right\}
  \end{align*}
  $W$ is the disjoint union of $2^{m-k+1}$ intervals, and so is its complement in
  $]0;2^m[$. So, by Theorem \ref{ben-or}, any algebraic computation tree
  computing the $k$-th bit has height $\Omega(m-k)$.
\end{proof}

We will see later that bit-extraction is also difficult for the \pram model (cf. 
Prop.~\ref{prop:bit-in-pram}). This is an
essential difference between the booleans and algebraic models.

\begin{remark}
In the case of algebraic \prams discussed in the next sections, the $k$-th entropic
co-tree $\cotree{k}[M]$ of a machine $M$ defines an algebraic computation tree 
which follows the $k$-th first steps of computation of $M$. I.e. the algebraic 
computation tree $\cotree{k}[M]$ approximate the computation of $M$ in such a way 
that $M$ and $\cotree{k}[M]$ behave in the exact same manner in the first $k$ steps.
\end{remark}

\subsection{Cucker's theorem}

%(TODO: more text)

Cucker's proof considers the problem defined as the following algebraic set.

\begin{definition}
Define $\cuckersproblem$ to be the set:
\[\{x\in \realN^\omega \mid \abs{x}=n \Rightarrow x_1^{2^n}+x_2^{2^n}=1 \},\]
where $\abs{x}=\max\{n\in\omega\mid x_n\neq 0\}$.
\end{definition}

It can be shown to lie within $\PtimeReal$, i.e. it is decided by a 
real Turing machine \cite{Blum:1989} -- i.e. working with real numbers and real 
operations --, running in polynomial time.

\begin{theorem}[Cucker (\cite{Cucker92}, Proposition 3)]
The problem $\cuckersproblem$ belongs to $\PtimeReal$.
\end{theorem}

We now prove that $\cuckersproblem$ is not computable by an algebraic 
circuit of polylogarithmic depth. The proof follows Cucker's argument, but 
uses the lemma proved in the previous section.

\setcounter{temp}{\value{theorem}}
\setcounterref{theorem}{thm:cucker}
\addtocounter{theorem}{-1}
\begin{corollary}[Cucker (\cite{Cucker92}, Theorem 3.2)]
No algebraic circuit of depth $k=\log^i n$ and size\footnote{We notice here that 
we do not assume any bounds on the number of processors.} $kp$ compute 
$\cuckersproblem$.
\end{corollary}
\setcounter{theorem}{\value{temp}}

%(TODO: add details in the proof)

\begin{proof}
For this, we will use the lower bounds result 
obtained in the previous section. Indeed, by \Cref{thm:quantsoundalgcirc} and 
\Cref{mainlemma}, any problem decided 
by an algebraic circuit of depth $k$ is a semi-algebraic set defined by at most
$\card{\seqedges{k}{E}}.2^{k.h_0([G])}$ systems of $k$ 
equations of degree at most $\max(2,\rootdegree{G})=2$ (since only square roots
are allowed in the model) and involving at most 
$k+n$ variables. But the curve $\mathfrak{F}_{2^n}^{\realN}$ defined as 
$\{x_1^{2^n}+x_2^{2^n}-1=0\mid x_1,x_2\in\realN\}$ is infinite. As a consequence,
one of the systems of equation must describe a set containing an infinite number 
of points of $\mathfrak{F}_{2^n}^{\realN}$.

This set $S$ is characterized, up to some transformations on the set of 
equations obtained from the entropic co-tree, by a finite system of inequalities of the form
\[ \bigwedge_{i=1}^{s} F_i(X_1,X_2)= 0\wedge \bigwedge_{j=1}^{t} G_j(X_1, X_2) < 0, \]
where $t$ is bounded by $kp$ and the degree of the polynomials 
$F_i$ and $G_i$ are bounded by $2^k$. Moreover, since $\mathfrak{F}_{2^n}^{\realN}$ 
is a curve and no points in $S$ must lie outside of it, we must have $s>0$.

Finally, the polynomials $F_i$ vanish on that infinite subset of the curve and thus in a 
1-dimensional component of the curve. Since the curve is an irreducible one, this 
implies that every $F_i$ must vanish on the whole curve. Using the fact that the ideal 
$(X_1^{2^n} + X_2^{2^n}- 1)$ is prime (and thus radical), we conclude that all the $F_i$ 
are multiples of $X_1^{2^n} + X_2^{2^n}- 1$ which is impossible if their degree is 
bounded by $2^{\log^i n}$ as it is strictly smaller than $2^n$.
\end{proof}

\section{Algebraic surfaces for an optimization problem}\label{sec:geometry}

% While the previous section dealt with arbitrary graphings defined from
% arbitrary \amcs, we now restrict our attention to the \amc of \prams. As a 
% consequence, the partitions into cells obtained in the previous section
% can be pulled back to partitions of the space $\integerN^{d}$.

\subsection{Geometric Interpretation of Optimization Problems}
\label{subsec:optprob}

We start by showing how decision problems of a particular form induce a binary
partition of the space $\integerN^{d}$: the points that are accepted and those
that are rejected. Intuitively, the machine decides the problem if the partition
it induces refines the one of the problem.

We will consider problems of a very specific form: decisions problems in
$\integerN^3$ associated to optimization problems. Let $\OptProb$ be an
optimization problem on $\realN^d$. Solving $\OptProb$ on an instance $t$
amounts to optimizing a function $f_t(\cdot)$ over a space of parameters. We
note $\MaxOptProb(t)$ this optimal value. An affine function
$\Parametrization : [p;q] \to \realN^d$ is called a \emph{parametrization} of
$\OptProb$. Such a parametrization defines naturally a decision problem
$\DecProb$: for all $(x,y,z) \in \integerN^3$, $(x,y,z) \in \DecProb$ iff
$ z >0$, $x/z \in [p;q]$ and $y/z \leq \MaxOptProb\circ \Parametrization(x/z)$.

In order to study the geometry of $\DecProb$ in a way that makes its connection
with $\OptProb$ clear, we consider the ambient space to be $\realN^3$, and we
define the \emph{ray} $[p]$ of a point $p$ as the half-line starting at the
origin and containing $p$. The projection $\projectionAz{p}$ of a point $p$ on a
plane is the intersection of $[p]$ and the affine plane $\AffinePlane$ of
equation $z=1$. For any point $p \in \AffinePlane$, and all $ p_1 \in [p]$,
$\projectionAz{p_1} = p$. It is clear that for
$(p,p',q) \in \integerN^2\times \naturalN^+$,
$\projectionAz{(p,p',q)} = (p/q,p'/q,1)$.

The \emph{cone} $[C]$ of a curve $C$ is the set of rays of points of the
curve. The projection $\projectionAz{C}$ of a surface or a curve $C$ is the set of
projections of points in $C$. We note $\Frontier$ the frontier set
\begin{align*}
  \Frontier = \{(x,y,1) \in \realN^3 \mid y = \MaxOptProb\circ \Parametrization(x) \}.
\end{align*}
and we remark that
\begin{align*}
  [\Frontier] &= \{(x,y,z) \in \realN^2 \times \realN^+ \mid y/z =
                \MaxOptProb \circ\Parametrization (x/z)\}.
\end{align*}

Finally, a machine $M$ decides the problem $\DecProb$ if the sub-partition of 
accepting cells in $\integerN^3$ induced by the
machine is finer than the one defined by the problem's frontier $[\Frontier]$ (which is
defined by the equation $y/z \leq \MaxOptProb \circ \Parametrization(x/z)$).

\subsection{Parametric Complexity}
\label{subsec:param}

We now further restrict the class of problems we are interested in: we will only
consider $\OptProb$ such that $\Frontier$ is simple enough. Precisely:
\begin{definition}
  We say that $\Parametrization$ is an \emph{affine parametrization} of
  $\OptProb$ if $\Parametrization;\MaxOptProb$ is
  \begin{itemize}[nolistsep,noitemsep]
  \item convex
  \item piecewise linear, with breakpoints $\lambda_1 < \cdots < \lambda_{\rho}$
  \item such that the $(\lambda_i)_i$ and the
    $(\MaxOptProb \circ \Parametrization(\lambda_i))_i$ are all rational.
  \end{itemize}

  The \emph{(parametric) complexity} $\rho(\Parametrization)$ is defined as the
  number of breakpoints of $\Parametrization;\MaxOptProb$.
\end{definition}

An optimization problem that admits an affine parametrization of complexity
$\rho$ is thus represented by a surface $[\Frontier]$ that is quite simple: the
cone of the graph of a piecewise affine function, constituted of $\rho$
segments. We say that such a surface is a \emph{$\rho$-fan}. This restriction
seems quite serious when viewed geometrically. Nonetheless, many optimization
problems admit such a parametrization. Before giving examples, we introduce
another measure of the complexity of a parametrization.
\begin{definition}
  Let $\OptProb$ be an optimization problem and $\Parametrization$ be an affine
  parametrization of it. The \emph{bitsize} of the parametrization is the
  maximum of the bitsizes of the numerators and denominators of the coordinates
  of the breakpoints of $\Parametrization;\MaxOptProb$.
  
  In the same way, we say that a $\rho$-fan is of \emph{bitsize} $\beta$ if all
  its breakpoints are rational and the bitsize of their coordinates is lesser
  thant $\beta$.
\end{definition}

\begin{theorem}[Murty \cite{murty1980computational}, Carstensen
  \cite{Carstensen:1983}]
  \label{thm:param}~\newline
  \begin{enumerate}[nolistsep,noitemsep]
  \item there exists an affine parametrization of bitsize $O(n)$ and complexity
    $2^{\Omega(n)}$ of combinatorial linear programming, where $n$ is the total
    number of variables and constraints of the problem.
  \item \label{maxflow-param} there exists an affine parametrization of bitsize
    $O(n^2)$ and complexity $2^{\Omega(n)}$ of the \maxflow problem for directed
    and undirected networks, where $n$ is the number of nodes in the network.
  \end{enumerate}
\end{theorem}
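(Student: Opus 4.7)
The plan is to treat both items as classical constructive results attributed to Murty and Carstensen, and to prove them by exhibiting explicit parametric families. I would first verify the LP statement (1), then transport it into a network setting to obtain the maxflow statement (2). Convexity and piecewise linearity of $\Parametrization;\MaxOptProb$ will come essentially for free from LP/maxflow duality plus the structure of the simplex, so the work concentrates on (a) counting breakpoints and (b) controlling bitsize.

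For item (1), I would construct, for each $n$, a combinatorial LP on $n$ variables and $O(n)$ constraints whose matrix, objective, and right-hand side consist of integers of bitsize $O(n)$, and where the objective (or right-hand side) depends \emph{affinely} on a single real parameter $\lambda$. The goal is a Klee--Minty-style deformed cube in which each new coordinate doubles the number of optimal pivots that the parametric simplex algorithm must traverse as $\lambda$ sweeps an interval $[p;q]$. Convexity and piecewise-linearity of $\lambda \mapsto \MaxOptProb(\Parametrization(\lambda))$ then follow from standard parametric LP theory, while rationality and the $O(n)$ bitsize bound on the breakpoints $\lambda_1 < \dots < \lambda_\rho$ follow by induction on the recursive structure of the cube: at each level one introduces only integer scalings of bounded bitsize.

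For item (2), I would specialize the argument to networks. One option is to transport the LP of (1) through a standard LP-to-maxflow encoding, replacing each LP constraint by edges whose capacities are affine in $\lambda$; the other is to build a dedicated recursive family of (directed or undirected) networks on $n$ nodes whose minimum cut shifts at $2^{\Omega(n)}$ values of $\lambda$, mirroring the doubling behaviour of the cube. Either way the optimal value $\MaxOptProb \circ \Parametrization$ coincides with a piecewise affine convex function of $\lambda$ with $2^{\Omega(n)}$ breakpoints. The bitsize inflation from $O(n)$ to $O(n^2)$ is explained by the fact that distinguishing exponentially many distinct min-cuts under the maxflow encoding forces capacities to span values across an exponentially wide range, and encoding those values requires coefficients of bitsize roughly (depth of recursion) $\times$ (bitsize per level) $= O(n)\cdot O(n)$.

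The main obstacle, and the technical heart of Murty's and Carstensen's original papers, is the combinatorial verification that genuinely $2^{\Omega(n)}$ breakpoints appear: one must exhibit an explicit increasing sequence $\lambda_1 < \dots < \lambda_\rho$ at which the optimal basis (respectively the min-cut) strictly changes, and rule out collapses between consecutive parameters. This is established by a doubling recurrence on the recursive gadget together with an explicit rational calculation of the $\lambda_i$ and of $\MaxOptProb(\Parametrization(\lambda_i))$. Since in our context Theorem~\ref{thm:param} is used only as a black box to feed the geometrization of \S\ref{subsec:optprob} (and ultimately Mulmuley's \maxflow lower bound), I would not reproduce the recurrence in detail but rather cite Murty~\cite{murty1980computational} and Carstensen~\cite{Carstensen:1983}, restricting the exposition to how the resulting $\rho$-fan of bitsize $O(n^2)$ plugs into the construction of the surface $[\Frontier]$.
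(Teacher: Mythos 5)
The paper provides no proof of this theorem at all: immediately after the statement it reads ``We refer the reader to Mulmuley's paper \cite[Thm.\ 3.1.3]{Mulmuley99} for proofs, discussions and references,'' and Mulmuley in turn attributes the results to Murty and Carstensen. Your proposal lands in the same place — you explicitly say you would cite Murty and Carstensen as black boxes — so in substance you and the paper take the identical approach. The extra content you supply (a Klee--Minty-style doubling recurrence for the parametric LP, a recursive network gadget for maxflow, and the bitsize-inflation heuristic $O(n)\cdot O(n) = O(n^2)$) is a reasonable high-level gloss on what those references actually do, and correctly identifies that convexity and piecewise-linearity come for free from LP duality while the real work is the combinatorial argument that no breakpoints collapse. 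One small caution: Mulmuley's Thm.\ 3.1.3 is itself a reformulation/sharpening of Carstensen's thesis result, so if you wanted the bitsize $O(n^2)$ bound as stated you should really cite Mulmuley's version rather than Carstensen's original; but since you (like the paper) are not reproducing the construction, this is a bibliographic nicety rather than a gap.
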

We refer the reader to Mulmuley's paper \cite[Thm. 3.1.3]{Mulmuley99} for proofs, 
discussions and references.

\subsection{Algebraic Surfaces}
\label{sec:alg-surfaces}

An algebraic surface in $\realN^3$ is a surface defined by an equation of the
form $p(x,y,z)=0$ where $p$ is a polynomial. If $S$ is a set of surfaces, each
defined by a polynomial, the \emph{total degree} of $S$ is defined as the sum of
the degrees of polynomials defining the surfaces in $S$.

Let $K$ be a compact of $\realN^3$ delimited by algebraic surfaces and $S$ be a
finite set of algebraic surfaces, of total degree $\delta$. We can assume that
$K$ is actually delimited by two affine planes of equation $z=\mu$ and
$z=2\mu_z$ and the cone of a rectangle
$\{ (x,y,1) \mid |x|, |y| \leqslant \mu_{x,y}\}$, by taking any such compact
containing $K$ and adding the surfaces bounding $K$ to $S$. $S$ defines a
partition of $K$ by considering maximal compact subspaces of $K$ whose
boundaries are included in surfaces of $S$. Such elements are called the
\emph{cells} of the decomposition associated to $S$.

The cell of this partition can have complicated shapes: in particular, a cell
can have a arbitrarily high number of surfaces of $S$ as boundaries. We are
going to refine this partition into a partition $\Collins{S}$ whose cells are
all bounded by cones of curves and at most two surfaces in $S$.

\subsection{Collins' decomposition}

We define the \emph{silhouette} \cite[Section 5.3]{Mulmuley99} of a surface defined by the equation
$p(x,y,z) = 0$ by:
\begin{align*}
  \left\{
  \begin{array}{l}
    p(x,y,z) = 0 \\
    x\frac{\partial p}{\partial x} + y \frac{\partial
    p}{\partial y} + z \frac{\partial p}{\partial z} = 0.
  \end{array}
  \right.
\end{align*}
The silhouette of a surface is the curve on the surface such that all points
$(x,y,z)$ of the silhouette are such that the ray $[(x,y,z)]$ belongs to
the tangent plane of the surface on $(x,y,z)$.

Up to infinitesimal perturbation of the coefficients of the polynomials, we can
assume that the surfaces of $S$ have no integer points in $K$.

$\projectionAz{K} = \{ \projectionAz{x} \mid x \in K\}$ is a compact of the affine plane
$\AffinePlane$. Let us consider the set $\projectionAz{S}$ of curves in
$\projectionAz{K}$ containing:
\begin{itemize}[nolistsep,noitemsep]
\item the projection of the silhouettes of surfaces in $S$;
\item the projection of the intersections of surfaces in $S$ and of the
  intersection of surfaces in $S$ with the planes $z=\mu(1+\frac{n}{6\delta})$,
  $n \in \{1,\ldots,6\delta-1\}$, where $\delta$ is the total degree of $S$;
\item \emph{vertical lines} of the form $\{(x,a,1) \mid |x| \leq 2^{\beta+1} \}$
  for $a$ a constant such that such lines pass through:
  \begin{itemize}[nolistsep,noitemsep]
  \item all intersections among the curves;
  \item all singular points of the curves;
  \item all critical points of the curves with a tangent supported by
    $\vec{e}_y$.
  \end{itemize}
\end{itemize}
$\projectionAz{S}$ defines a Collins decomposition \cite{Collins:1975} of
$\projectionAz{K}$. The intersection of any affine line supported by $\vec{e}_y$ of
the plane with a region of this decomposition is connected if nonempty.

Let $c$ be a cell in $\projectionAz{S}$. It is enclosed by two curves in
$\projectionAz{K}$ and at most two vertical lines. The curves can be parametrized by
$c_{\max} : x \mapsto \max \{y \in \realN \mid (x,y,1) \in c \}$ and
$c_{\min} : x \mapsto \min \{y \in \realN \mid (x,y,1) \in c \}$, which are both
smooth functions. The \emph{volatility} of $c$ is defined as the number of
extrema of the second derivatives $c_{\min}''$ and $c_{\max}''$ on their domains
of definition.

This set of curves $\projectionAz{S}$ can be lifted to a set of surfaces
$\Collins{S}(K)$ of $K$ that contains:
\begin{itemize}[nolistsep,noitemsep]
\item the surfaces of $S$;
\item the cones $[s]$ of every curve $s$ in $\projectionAz{S}$;
\item the planes bounding $K$;
\item $6\delta-2$ \emph{dividing planes} of equation
  $z=\mu(1+\frac{n}{6\delta})$, $n \in \{1,\ldots,6\delta-1\}$.
\end{itemize}
The projection of a cell of $\Collins{S}$ is a cell of $\projectionAz{S}$. We
say that a cell of $\Collins{S}(K)$ is \emph{flat} if none of its boundaries are
included in surfaces of $S$.

Let us call $\Deg$ the number of cells in $\Collins{S}(K)$.

Let $c$ be a cell in $\Collins{S}(K)$. Its \emph{volatility} is defined as the
volatility of its projection in $\projectionAz{S}$.

\subsection{Volatility and Separation}
\label{subsec:volatility-separation}

We here follow but rephrase Mulmuley \cite[Section 5.3; Sample points]{Mulmuley99}

\begin{definition}
  Let $K$ be a compact of $\realN^3$.
  
  A finite set of surfaces $S$ on $K$ \emph{separates} a $\rho$-fan $\Fan$ on
  $K$ if the partition on $\integerN^3 \cap K$ induced by $S$ is finer than the
  one induced by $\Fan$.
\end{definition}

We now establish the following key result of Mulmuley \cite[Theorem 5.9]{Mulmuley99}.

\begin{theorem}
  \label{thm:mulmuley}
  Let $S$ be a finite set of algebraic surfaces of total degree $\delta$, and
  $\Fan$ a $\rho$-fan of bitsize $\beta$.

  If $S$ separates $\Fan$, there exists a compact $K$ and a cell of
  $\Collins{S}(K)$ with volatility greater than $\rho/\Deg$.
\end{theorem}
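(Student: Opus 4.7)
The plan is to choose a compact $K$ large enough that all $\rho$ breakpoints of the fan project into $\projectionAz{K}$ with many integer points nearby, and then run a pigeonhole argument over the cells of $\Collins{S}(K)$. Since the breakpoints of $\Parametrization;\MaxOptProb$ are rational of bitsize at most $\beta$, I can pick $\mu$ and $\mu_{x,y}$ of order $2^{O(\beta)}\cdot \delta$ so that (a) the projection $\projectionAz{[\Fan]}$ of the fan's cone lies entirely in the interior of $\projectionAz{K}$, and (b) between any two consecutive breakpoints of $\projectionAz{[\Fan]}$ the set $K \cap \integerN^3$ contains enough integer points arbitrarily close to $[\Fan]$ on either side.

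The separation hypothesis then forces the boundaries of cells to track the fan. Indeed, within each cell $c$ of $\Collins{S}(K)$, every integer point must lie on a single side of $[\Fan]$ (since $S$ separates $\Fan$, and $\Collins{S}(K)$ refines the partition induced by $S$). Combined with the density obtained in step~1, this implies that wherever a cell's projection $\projectionAz{c}$ is crossed by $\projectionAz{[\Fan]}$, the bounding curves $c_{\min}, c_{\max}$ must approximate $\projectionAz{[\Fan]}$ to within the integer spacing — otherwise an integer point inside $c$ would sit on the wrong side of the fan.

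Next comes the pigeonhole step. The curve $\projectionAz{[\Fan]}$ is piecewise affine with $\rho$ breakpoints, and traverses the projected Collins cells. Since $\Collins{S}(K)$ has only $\Deg$ cells, at least one cell $c^{*}$ is such that $\projectionAz{c^{*}}$ contains at least $\rho/\Deg$ breakpoints of the fan. On $\projectionAz{c^{*}}$ the relevant boundary (say $c^{*}_{\max}$) is smooth and approximates a piecewise affine function whose second derivative is, distributionally, a sum of at least $\rho/\Deg$ Dirac masses. A smooth approximation of such a function has a second derivative featuring an extremum near each breakpoint (a bump corresponding to each kink of slope), so the volatility of $c^{*}$ is bounded below by $\rho/\Deg$.

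The main obstacle is the quantitative step that turns the ``tracking'' of step~2 into genuine extrema of $c^{*\prime\prime}_{\max}$ at step~4. One has to choose $\mu, \mu_{x,y}$ carefully as a function of $\beta$ and $\delta$ so that integer points straddling the fan are close enough to pin down the boundary on each side of every breakpoint, and simultaneously ensure that no breakpoint is absorbed into a single \emph{flat} region of $c^{*}_{\max}$ where the second derivative would remain of constant sign. This last point is where Collins' decomposition pays off: since the cells of $\Collins{S}(K)$ are cylindrical in the $y$-direction and bounded by cones of curves arising from silhouettes and intersections, the boundary functions are real-analytic on each cell, and a local failure to produce an extremum would propagate to force the cell to leave the strip around $[\Fan]$ before the next breakpoint, contradicting separation.
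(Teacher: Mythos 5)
Your high-level plan — large compact, separation pins cell boundaries to the fan, pigeonhole over cells, mean-value-type count of extrema — is the right skeleton and matches the paper. But the middle of the argument has a genuine gap at the point where you select the cell and claim its boundary tracks the fan.

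In step~2 you assert that wherever $\projectionAz{c}$ is crossed by $\Fan$, the curves $c_{\min},c_{\max}$ must approximate $\Fan$ within the integer spacing, since otherwise an integer point of $c$ would land on the wrong side. This does not follow. The cells of $\Collins{S}(K)$ are three-dimensional, stacked in $z$ over each two-dimensional projected cell; the constraint that the integer points of a single $3$D cell lie on one side of $[\Fan]$ says nothing about whether the \emph{lateral} boundary curves $c_{\min}, c_{\max}$ of its projection pass anywhere near $\Fan$. The fan can cross $\projectionAz{c}$ through its interior: then the above/below split is resolved vertically by different $3$D cells in the stack, and $c_{\max}$ need not approach $\Fan$ at all. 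Moreover a cell crossed by $\Fan$ might simply contain no integer points (if it is thin in $z$), so separation imposes nothing on it. Your pigeonhole in step~3 is therefore aimed at the wrong quantity: counting breakpoints inside the projections of cells selects a cell that $\Fan$ merely passes through, with no guarantee of integer witnesses near the fan.

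The paper's proof avoids this by selecting the cell differently, and this is the idea you are missing. It first introduces \emph{sample points} on each affine segment of $\Fan$ (ten per segment), then proves — using the dividing planes and a Bézout count, and crucially restricting to \emph{flat} cells, i.e.\ cells whose boundaries avoid the surfaces of $S$ so they are guaranteed thick enough in $z$ to catch an integer point — that each sample point has an integer point of its ray inside a flat cell. Only then does it pigeonhole: some flat cell $c$ captures ten sample points from each of $\rho/\Deg$ segments. Because the captured integer points lie exactly on rays through $\Fan$, and the neighbouring integer point $(x,y+1,z)$ is excluded from $c$ by separation, the boundary $\projectionAz{c}_{\max}$ is pinned within $1/\mu$ of each captured sample point. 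Two mean-value-theorem applications then show $\projectionAz{c}_{\max}''$ is nearly zero inside each segment and bounded away from zero between consecutive segments of distinct slopes, yielding $\rho/\Deg$ extrema. So the cell must be chosen by \emph{which integer points lie in it}, not by \emph{which breakpoints its projection contains}, and the flat-cell and sample-point machinery is precisely what makes the ``tracking'' step that you flagged as the main obstacle actually go through.
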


In order to prove this theorem, we will build explicitely the compact $K$ and
this cell by considering sample points on $\Fan$ and show in Lemma
\ref{lem:volatility} a bound on the volatility of this cell.

Let $K$ be a compact delimited by the cone of a rectangle
$\{ (x,y,1) \mid |x|, |y| \leqslant 2^{\beta+1}\}$ and two planes of equation
$z=\mu$ and $z=2\mu$, with $\mu>(6\delta+1)2^{\beta}$. We first remark that all
affine segments of $\Fan$ are in the rectangle base of $K$.

For each affine segment of $\Fan$ with endpoints $(x_i, y_1, 1)$ and
$(x_{i+1}, y_{i+1},1)$ let, for $0 < k < 10 \Deg$, $y_i^k$ be such that
$(x_i^k, y_i^k,1)$ is in the affine segment, where
$x_i^k = \frac{(10 \Deg-k)x_i + kx_{i+1}}{10 \Deg}$. We remark that, as
$|x_i-{x_{i+1}}| > 2^{-\beta}$, we have, for $k,k'$,
$|x_i^k - x_i^{k'}| > 2^{-\beta}/10\Deg$.

\begin{lemma}
  For all sample points $(x_i^k,y_i^k,1)$, there exists a flat cell in
  $\Collins{S}$ that contains an integer point of $[(x_i^k,y_i^k,1)]$.
\end{lemma}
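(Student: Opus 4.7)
The plan is to use the degree bound on $S$ together with the structure of the Collins decomposition to locate an integer point of the ray $R = [(x_i^k, y_i^k, 1)]$ inside a sub-strip free of $S$-surfaces, and argue that such an integer point sits in a flat cell. I parameterize $R \cap K$ by $z \in [\mu, 2\mu]$, with points $(zx_i^k, zy_i^k, z)$. The integer points of $R$ in $K$ are those with $z$ a multiple of $D$, the smallest positive integer such that $Dx_i^k, Dy_i^k \in \integerN$. The explicit formula $x_i^k = ((10\Deg-k)x_i + kx_{i+1})/(10\Deg)$ and the bitsize bound $\beta$ on the breakpoints of $\Fan$ give an effective bound on $D$, which together with the hypothesis $\mu>(6\delta+1)2^{\beta}$ yields a non-trivial count of integer points on $R \cap K$.

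Next I bound the intersections of $R$ with surfaces in $S$. Since these surfaces have total degree $\delta$, restricting their defining polynomials to the ray yields univariate polynomials of total degree at most $\delta$ in $z$; hence $R$ meets $\bigcup_{s\in S} s$ in at most $\delta$ points. The $6\delta - 2$ dividing planes partition $R \cap K$ into $6\delta - 1$ sub-strips of equal $z$-length $\mu/(6\delta)$, and at most $\delta$ of them can contain an $S$-crossing; thus at least $5\delta - 1$ sub-strips of $R$ are \emph{clean}.

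The crucial structural step is that any clean sub-strip of $R$ lies inside a flat cell of $\Collins{S}(K)$. By construction, $\projectionAz{S}$ contains the projections of the intersections of each $S$-surface with each dividing plane. Consequently, within any 2D Collins cell $c$ of $\projectionAz{S}$ --- in particular the cell containing the projection $(x_i^k,y_i^k,1)$ of $R$ --- no $S$-sheet (a graph $t = t_i(u,v)$ of the surface over $c$) can take a value $\mu(1+n/(6\delta))$, so by continuity and connectedness each sheet stays in a single sub-strip over $c$. Since $R \subset [c]$ meets sheet $t_i$ exactly when $z = t_i(x_i^k, y_i^k)$, a clean sub-strip over $c$ contains no $S$-sheet at all. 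The 3D cell containing that segment of $R$ is therefore bounded only by dividing planes and cones, hence flat.

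To conclude, since the $\geq 5\delta - 1$ clean sub-strips cover more than $5/6$ of the $z$-range and the integer points of $R$ are distributed regularly with spacing $D$, at least one integer point of $R$ lies in a clean sub-strip, hence in a flat cell. The main technical obstacle is to carry out the first step tightly: bounding $D$ in terms of $\beta$ and $\Deg$ so that the hypothesis $\mu>(6\delta+1)2^\beta$ guarantees at least one integer multiple of $D$ inside a sub-strip of $z$-length $\mu/(6\delta)$. This requires careful tracking of denominators through the construction of sample points, and in the worst case may call for exploiting a common denominator of the breakpoints of $\Fan$ rather than individual bitsize bounds.
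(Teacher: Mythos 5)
Your proposal reproduces the paper's argument: a degree bound shows the ray meets $S$ in at most $\delta$ points, the dividing planes cut the ray into $\sim 6\delta$ sub-strips so most of them are clean, a clean sub-strip lies in a flat cell, and the bitsize of the sample point places an integer point in such a sub-strip. Your concern at the end is in fact a genuine imprecision in the paper's own proof: the denominator of $x_i^k = \frac{(10\Deg - k)x_i + kx_{i+1}}{10\Deg}$ involves the factor $10\Deg$ and a product of two breakpoint denominators, so $2^\beta x_i^k$ need not be an integer, and the hypothesis $\mu > (6\delta+1)2^\beta$ should be strengthened (e.g.\ to $\mu \gtrsim \delta\,\Deg\,2^{2\beta}$) --- a harmless change, since $\Deg$ is later shown to be polynomial in $\delta$ and $\mu$ is a free parameter of the construction.
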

\begin{proof}
  Let $(x_i^k,y_i^k,1)$ be a sample point. $[(x_i^k,y_i^k,1)]$ is divided in
  $N+1$ intervals by the dividing planes. On the other hand, $[(x_i^k,y_i^k,1)]$
  intersects surfaces of $S$ in at most $\delta$ points, by Bézout theorem. So,
  there exists an interval $e$ of $[(x_i^k,y_i^k,1)]$ that is bounded by the
  dividing planes and that do not intersect any surface in $S$. By construction,
  $e$ is included in a flat cell, and its projection on the $z$-axis has length
  $\mu/(6\delta+1)$, so, as $(x_i^k,y_i^k,1)$ is of bitsize $\beta$,
  $(n2^{\beta}x_i^k,n2^{\beta}y_i^k,n2^{\beta})$ is, for all $n\in \naturalN$ an
  integer point of the ray, so, as $\mu > (6\delta+1) 2^{\beta}$, $e$ contains
  an integer point.
\end{proof}

So, for each affine segment of $\Fan$, there exists a flat cell in
$\Collins{S}$ that contains integer points in the ray of at least $10$ sample
points of the affine segment. Going further, there exists a cell $c$ of
$\Collins{S}$ that contains integer points in the ray of at least 10 sample
points of $\rho/\Deg$ affine segments of $\Fan$.

% Let $c$ be such a cell and consider $\projectionAz{c}$, the projection of $c$ in the
% affine plane.
\begin{lemma}
  \label{lem:volatility}
  The volatility of $c$ is at least $\rho/\Deg$.
\end{lemma}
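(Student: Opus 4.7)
The plan is to exploit the rigidity imposed on the smooth boundary curves $c_{\min}, c_{\max}$ of $\projectionAz{c}$ by the requirement that they bracket samples drawn from $\rho/\Deg$ distinct affine segments of the convex piecewise-linear function $\phi = \MaxOptProb \circ \Parametrization$. First, I project everything to $\AffinePlane$ and observe that every sample $(x_i^k, y_i^k, 1)$ satisfies $c_{\min}(x_i^k) \leq \phi(x_i^k) \leq c_{\max}(x_i^k)$. Within each sampled segment $\sigma_i$, the 10 samples are collinear with slope $m_i$, and by convexity of $\phi$ the slopes $m_{i_1} < m_{i_2} < \cdots < m_{i_t}$ with $t = \rho/\Deg$ strictly increase along the sampled segments in order of $x$-coordinate.

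The key step is to extract derivative information from the bracketing condition. On each sampled segment, using the density of 10 equally spaced samples at mutual separation at least $2^{-\beta}/10\Deg$, a chord-based mean value argument applied to $c_{\max}$ (respectively $c_{\min}$) yields a point $\xi_i \in \sigma_i$ at which $c_{\max}'(\xi_i)$ (or $c_{\min}'(\xi_i)$) approximates $m_i$, with error controlled by the local width $c_{\max}(x) - c_{\min}(x)$ of the cell. By pigeonhole, one single boundary curve $f \in \{c_{\min},c_{\max}\}$ provides such a sample-slope approximation on at least half of the sampled segments.

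The final step converts these approximations into extrema of the second derivative. Along the ordered points $\xi_{i_1} < \xi_{i_2} < \cdots < \xi_{i_r}$ with $r \geq t/2$, the first derivative $f'$ takes values approximating the strictly monotone sequence $m_{i_1} < m_{i_2} < \cdots < m_{i_r}$. Successive pairs force $f'$ to undergo sizeable monotonic variations whose differences between consecutive $\xi$'s are bounded below by the gaps in the $m_{i_j}$; applying Rolle to the induced monotonicity pattern of $f'$ then produces $\Omega(r)$ local extrema of $f''$. Summing the contributions of the two boundary curves yields volatility at least $\rho/\Deg$, as required.

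The main obstacle is making the mean value step fully quantitative, because $c_{\max}$ only dominates rather than interpolates the samples, so standard MVT does not directly pin down its slope. A clean treatment distinguishes between regions where the cell is narrow, so that $c_{\max}'(\xi_i) \approx c_{\min}'(\xi_i) \approx m_i$ by a sandwich argument, and regions where the cell widens substantially, which by an independent Rolle argument on $c_{\max} - c_{\min}$ already contributes extrema of $c_{\max}''$ or $c_{\min}''$. Either horn of this dichotomy yields the bound, and the factor of $10$ samples per segment together with the minimal horizontal separation $2^{-\beta}/10\Deg$ is what allows this dichotomy to be closed quantitatively.
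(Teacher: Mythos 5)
Your outline is close in spirit to the paper's argument — you do the same chord/mean-value bookkeeping along the ten samples of each segment and then push it through to second-derivative extrema — but you have correctly flagged, without resolving, the one step that is actually nontrivial: turning the qualitative bracketing $c_{\min}(x_i^k) \le \phi(x_i^k) \le c_{\max}(x_i^k)$ into a \emph{quantitative} estimate $0 < c_{\max}(x_i^k) - \phi(x_i^k) < 1/\mu$. Without that estimate the mean value theorem only tells you that $c_{\max}'$ hits some secant slope on each sampled interval; you cannot relate that secant slope to the segment slope $m_i$, and the whole chain collapses.

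The ingredient you are missing is the integer-point/separation argument, and it is the crux of the paper's proof. The sample point $(x_i^k,y_i^k,1)$ is in $\projectionAz{c}$ because an integer point $p=(x,y,z)$ on its ray lies in $c$. Now consider the integer point $q=(x,y+1,z)$ directly above it. Since $p$ projects onto $\Fan$ and $q$ projects strictly above it, $p$ and $q$ lie on opposite sides of the fan; because $S$ \emph{separates} $\Fan$, they must therefore lie in different cells, so $q\notin c$ and $\projectionAz{q}\notin\projectionAz{c}$. Thalès then gives $0< y_q-y_p < 1/\mu$ (because $z\ge\mu$), and since $y_p\le c_{\max}(x_p)< y_q$ this pins $c_{\max}(x_p)-y_p < 1/\mu$. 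This is exactly the quantitative control that makes the MVT step work, and it does so \emph{uniformly at every sample point}, so the pigeonhole between $c_{\min}$ and $c_{\max}$ you propose is unnecessary — the paper works with $c_{\max}$ alone throughout.

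Your proposed dichotomy does not repair the gap. In the ``narrow'' branch you still need to know the cell width at the sample points is small, which is precisely the fact supplied by the integer-point trick; asserting narrowness is circular. In the ``wide'' branch, Rolle applied to $c_{\max}-c_{\min}$ produces zeros of $c_{\max}'-c_{\min}'$, which is not the same as extrema of $c_{\max}''$ or $c_{\min}''$ and does not contribute to the volatility as defined. In short: the separation hypothesis is doing real work through the integer-point argument, and a proof that never invokes it cannot close.

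Two minor points: the convexity-based monotonicity of the slopes $m_i$ that you rely on is not needed by the paper's argument, which only uses that \emph{consecutive} segments have distinct slopes $|a-a'|$ bounded away from zero in bitsize; and the constant $10$ (samples per segment) and the horizontal spacing $2^{-\beta}/10\Deg$ enter exactly as you suggest, to keep the MVT denominators bounded below.
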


This is achieved by applying the mean value theorem on the function
$\projectionAz{c}_{\max}'$ on pairs of sample points. In particular, this proof
uses no algebraic geometry. % The details are in Appendix \ref{app:volatility}.

\begin{proof}
  Let $e$ be a segment of $\Fan$ such that the ray of 10 of its sample points
  contain an integer point in $c$. Let $p = (x,y,z)$ be one of its integer point
  and $\projectionAz{p}=(x_p,y_p,1)$ its projection, which is a sample point in
  $\projectionAz{c}$. Let $q=(x,y+1,z)$. As $\projectionAz{p}$ is in $\Fan$, and $S$
  separates $\Fan$, $q$ is not in $c$, and $\projectionAz{q} = (x_q,y_q,1)$ is not
  in $\projectionAz{c}$. By Thalès theorem, $0< y_q-y_p < \frac{1}{\mu}$. So, as
  $y_q > \projectionAz{c}_{\max}(x_p) > y_p$, we have in particular that
  $0< \projectionAz{c}_{\max}(x_p) - y_p < \frac{1}{\mu}$.

  So, the 10 sample points have coordinates that approximate the graph of
  $\projectionAz{c}_{\max}$ with an error bounded by $\frac{1}{\mu}$.  Consider two
  of them $p_1=(x_1,y_1,1)$ and $p_2=(x_2,y_2,1)$, such that $x_1<x_2$. Let $a$
  be the slope of $e$ (in particular $a=(y_2-y_1)/(x_2-x_1)$. By the mean value
  theorem, there exists $\alpha \in [x_1, x_2]$ such that
  $\projectionAz{c}_{\max}'(\alpha)=\frac{ \projectionAz{c}_{\max}(x_2) -
    \projectionAz{c}_{\max}(x_1) }{x_2-x_1}$. But
  $|\projectionAz{c}_{\max}(x_2) - \projectionAz{c}_{\max}(x_1)| \leq |y_2-y_1| +
  \frac{2}{\mu}$ and $|x_2-x_1| > \frac{1}{10\Deg 2^{\beta}}$. So,
  $|\projectionAz{c}_{\max}'(\alpha) - a| \leq 2 \frac{10\Deg 2^{\beta}}{\mu}$.

  So, the function $\projectionAz{c}_{\max}'$ is close to the value $a$, with error
  bounded, between all the sample points. By applying the mean value theorem
  again, we get that there exists a point in the interval such that
  $\projectionAz{c}_{\max}''$ is close to 0, with an error bounded by $2
  \frac{10\Deg 2^{\beta}}{\mu}$.

  In the same way, let $e'$ be another segment of $\Fan$ such that the ray of 10
  of its sample points contain an integer point in $c$, of slope $a'$. Let two
  of them be $p_1' = (x_1',y_1',1)$ and $p_2' = (x_2',y_2',1)$, and suppose
  $x_2'>x_1'>x_2$. By the same reasoning as above, there exists
  $\alpha' \in [x_1',x_2']$ such that
  $|\projectionAz{c}_{\max}'(\alpha') - a'| \leq 2 \frac{10\Deg 2^{\beta}}{\mu}$. By
  the mean value theorem, there exists $\beta \in [\alpha,\alpha']$ such that
  $\projectionAz{c}_{\max}''(\beta)=\frac{ \projectionAz{c}_{\max}'(\alpha') -
    \projectionAz{c}_{\max}'(\alpha) }{\alpha'-\alpha}>\frac{1}{\mu}(|a-a'|-2
  \frac{10\Deg 2^{\beta}}{\mu})$.

  So, for each of the $\rho/\Deg$ segments of $\Fan$, we can exhibit a point
  such that $\projectionAz{c}_{\max}''$ is close to zero, and for each successive
  segment, a point such that it is far. So $\projectionAz{c}_{\max}''$ has at least
  $\rho/\Deg$ extrema.
\end{proof}

\subsection{Volatility and Degree}
\label{subsec:volatility-degree}

We can now state the following essential result.
\begin{theorem}[Mulmuley]
  \label{thm:mulmuley-geometric}
  Let $S$ be a finite set of algebraic surfaces of total degree $\delta$.

  There exists a polynomial $P$ such that, for all $\rho > P(\delta)$, $S$ does
  not separate $\rho$-fans.
\end{theorem}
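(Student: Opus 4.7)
The plan is to derive this theorem as a straightforward combination of the three preceding results: Theorem~\ref{thm:mulmuley} (which converts separation into a lower bound on some cell's volatility) and the two lemmas of Subsection~\ref{subsec:volatility-degree} (which provide upper bounds, polynomial in $\delta$, on cell volatility and on the cell count $\Deg$). The argument will be by contrapositive: assume $S$ separates some $\rho$-fan $\Fan$, and deduce a polynomial upper bound on $\rho$ in terms of $\delta$.

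More explicitly, I would first invoke Theorem~\ref{thm:mulmuley} to obtain a compact $K$ and a cell $c \in \Collins{S}(K)$ whose volatility exceeds $\rho/\Deg$. Next, I would apply the first lemma of Subsection~\ref{subsec:volatility-degree} to bound this volatility from above by some polynomial $Q(\delta)$, giving $\rho/\Deg < Q(\delta)$. Finally, I would use the second lemma to bound $\Deg \leqslant R(\delta)$ for some polynomial $R$, yielding $\rho < Q(\delta)\,R(\delta)$. Setting $P(\delta) = Q(\delta)\,R(\delta)$ gives the required polynomial: whenever $\rho > P(\delta)$, no such separation can occur, which is exactly the statement.

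The mild subtlety is making sure the polynomial $P$ is independent of the $\rho$-fan chosen, but this is automatic since both $Q$ and $R$ depend only on the total degree $\delta$ of $S$ and not on $\Fan$; the compact $K$ produced by Theorem~\ref{thm:mulmuley} is the one adapted to the fan's bitsize, but the cell-count and volatility bounds of the two lemmas apply uniformly to any Collins decomposition of surfaces of total degree $\delta$. I do not anticipate a genuine obstacle here, as all the technical work has been done in the preceding subsections; the present theorem is essentially the numerical consequence of balancing Theorem~\ref{thm:mulmuley}'s lower bound against the two polynomial upper bounds.
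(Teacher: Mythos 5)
Your proposal is correct and follows exactly the route the paper intends: the paper introduces the two lemmas of Subsection~\ref{subsec:volatility-degree} explicitly as the ingredients that, together with Theorem~\ref{thm:mulmuley}, yield the result, and your contrapositive chaining $\rho/\Deg < Q(\delta)$, $\Deg \leqslant R(\delta)$, hence $\rho < Q(\delta)R(\delta) =: P(\delta)$ is precisely that combination. Your remark on the uniformity of $P$ (dependence only on $\delta$, not on the fan or on $K$) is the one subtlety worth flagging, and you handle it correctly.
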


While this Theorem underlies Mulmuley's proof technique, it is not explicitly stated in his article.
This result follows from \cref{thm:mulmuley} and the following two lemmas which appear as claims at the end of Section 5.3 in \cite{Mulmuley99}.

\begin{lemma}
  Let $S$ be a finite set of curves of total degree $\delta$, and $K$ be a
  compact. The cells of the decomposition $\Collins{S}$ of $K$ have a volatility
  bounded by a polynomial in $\delta$.
\end{lemma}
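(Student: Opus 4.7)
My plan is to reduce the volatility bound to a count of real zeros of an algebraic function obtained from the defining polynomials of the boundary curves via implicit differentiation, and then to apply Bézout's theorem.

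First, I would bound the degree of each curve in $\projectionAz{S}$. Each such curve arises in one of three ways: as the projection of the silhouette of a surface $p=0 \in S$, which is cut out by $p$ together with the Euler-type equation $xp_x + yp_y + zp_z = 0$ of degree $\deg(p)$; as the projection of the intersection of two surfaces of $S$ (possibly including the planes $z=\mu(1+n/6\delta)$); or as a vertical line. In the first two cases, Bézout's theorem applied to the defining polynomials yields a curve of degree $O(\delta^2)$, and projection to $\AffinePlane$ preserves this degree up to a constant factor. Hence every boundary curve of a cell $c$ in $\Collins{S}(K)$ projects to a curve $Q(x,y)=0$ in $\projectionAz{K}$ with $\deg Q = \operatorname{poly}(\delta)$.

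Next, fix such a boundary curve and consider the parametrization $y = \projectionAz{c}_{\max}(x)$ given implicitly by $Q(x,y)=0$. Successive implicit differentiations express $y'$, $y''$, $y'''$ as rational functions $R_1(x,y), R_2(x,y), R_3(x,y)$ whose numerators and denominators are polynomials in $x,y$ of degree $\operatorname{poly}(\deg Q) = \operatorname{poly}(\delta)$; each differentiation multiplies the degree by a polynomial factor, so after three steps the degree bound remains polynomial in $\delta$. The extrema of $\projectionAz{c}_{\max}''$ on its domain of definition are exactly the real points where $\projectionAz{c}_{\max}'''$ vanishes, i.e.\ the real solutions of the system $\{Q(x,y)=0,\ N_3(x,y)=0\}$ lying on the arc of $Q$ corresponding to $c$, where $N_3$ is the numerator of $R_3$.

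By Bézout's theorem this system has at most $\deg Q \cdot \deg N_3 = \operatorname{poly}(\delta)$ solutions, so $\projectionAz{c}_{\max}''$ has polynomially many extrema; the same argument with $\projectionAz{c}_{\min}$ handles the lower boundary. Summing the two contributions yields the desired polynomial bound on the volatility of $c$. The main technical obstacle is a careful bookkeeping of degrees through the three implicit differentiations (especially handling the denominators, which requires clearing and then applying Bézout to the numerator system), and ensuring that the projection of silhouettes and intersections really produces an algebraic curve whose degree is controlled linearly (resp.\ quadratically) in $\delta$; both are standard but must be made explicit in order to extract an explicit polynomial in $\delta$.
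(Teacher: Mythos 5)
Your proof is correct, but it takes a different route from the paper's. You work entirely in the affine plane: successively implicitly differentiate the defining polynomial $Q(x,y)=0$ of the boundary curve to express $y'$, $y''$, $y'''$ as rational functions in $(x,y)$, clear denominators, and apply Bézout to the system $\{Q = 0,\ N_3 = 0\}$. The paper instead lifts to the $5$-dimensional phase space with coordinates $(x,y,y^{(1)},y^{(2)},y^{(3)})$ and records each successive implicit differentiation as a new equation $g_i(x,y,\dots,y^{(i)})=0$ in which the derivatives are treated as \emph{fresh variables of degree one}; the point of this trick is that each $g_i$ then has degree bounded by $\deg g$ itself (it is a linear form in the new variable with coefficients that are partial derivatives of lower-degree polynomials), so the degrees stay flat instead of growing polynomially through the three differentiations as they do in your clearing-of-denominators approach. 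The paper then invokes Milnor--Thom on the resulting real algebraic set, whereas you invoke Bézout; both give a polynomial-in-$\delta$ bound, but Milnor--Thom is the more robust choice here since it bounds connected components of real semi-algebraic sets directly without needing the intersection to be zero-dimensional, and it matches the machinery used elsewhere in the paper. Your concern about bookkeeping the denominator degrees is exactly the technical overhead that the phase-space formulation is designed to avoid, so if you want explicit constants the paper's route is cleaner; your version still works and has the virtue of staying in two variables.
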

\begin{proof}
  Let $c$ be a cell in $\Collins{S}$ and $g(x,y)=0$ be the equation of one of
  the boundaries of $\projectionAz{c}$ in the affine plane. The degree of $g$ is
  bounded by the degree of the intersection of surfaces in $S$. Any extrema $x$
  of $f''$, where $f$ is a parametrization $y=f(x)$ of this boundary, can be
  represented as a point $(x,y,y^{(1)},y^{(2)},y^{(3)})$ in the 5-dimensional
  phase space that satisfy polynomial equations of the form:
  \begin{align*}
    g(x,y)=0,\qquad
    &g_1(x,y,y^{(1)})=0, \qquad
    g_2(x,y,y^{(1)},y^{(2)})=0\\
    &g_3(x,y,y^{(1)},y^{(2)},y^{(3)})=0, \qquad
    y^{(3)}=0,
  \end{align*}
  where all the polynomials' degrees are all bounded by the degree of the
  intersection of surfaces in $S$ (as they are the derivatives of $g$). So, by
  the Milnor--Thom theorem, such points are in number polynomial in the total
  degree of the surfaces of $S$.
\end{proof}

\begin{lemma}
  The number of cells $\Deg$ of the Collins decomposition of $S$ is polynomial
  in $\delta$.
\end{lemma}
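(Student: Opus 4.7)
The plan is to bound the cardinality of $\Collins{S}(K)$ by bounding, in turn, the total degree of the planar curve arrangement $\projectionAz{S}$, then the number of cells in that planar arrangement, and finally the number of fibers above each planar cell.

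First I would show that the total degree of the set of curves $\projectionAz{S}$ is polynomial in $\delta$. The silhouette of a surface $\{p=0\}$ is cut out by $p$ and its Euler-type derivative $x\partial_x p+y\partial_y p+z\partial_z p$, both of degree at most $\deg p$; its projection to $\AffinePlane$ therefore has degree polynomial in $\deg p$ (using that projection of an algebraic curve of bidegree $(a,b)$ is of degree at most $ab$). Summing over $S$, the contribution of silhouettes is $O(\delta^2)$. The pairwise intersections of surfaces in $S$, and the intersections of surfaces in $S$ with the $O(\delta)$ dividing planes and the planes bounding $K$, contribute further curves whose total degree is bounded by $O(\delta^3)$ via Bézout. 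Hence $\projectionAz{S}$ consists of curves of total degree $\Delta=\mathrm{poly}(\delta)$.

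Next I would count the vertical lines. A vertical line is added through each intersection of two curves of $\projectionAz{S}$ (at most $\Delta^2=\mathrm{poly}(\delta)$ points by Bézout), through each singular point (bounded by Bézout applied to $g, \partial_x g, \partial_y g$, again $\mathrm{poly}(\delta)$) and through each critical point with tangent along $\vec{e}_y$ (cut out by $g$ and $\partial_y g$, bounded by Bézout). So the number of vertical lines is $\mathrm{poly}(\delta)$, and they contribute polynomially to the total degree of the planar arrangement.

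I would then bound the number of cells of the plane Collins decomposition associated to $\projectionAz{S}$ together with the vertical lines. This is a classical estimate: an arrangement of algebraic curves in $\realN^2$ of total degree $\Delta'$ produces $O((\Delta')^2)$ two-dimensional cells (together with lower-dimensional cells of the same asymptotic count), obtainable by applying the Milnor--Thom theorem to the complement of the arrangement; combined with the previous step this is polynomial in $\delta$.

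Finally, each planar cell $c'$ lifts to a stack of three-dimensional cells in $\Collins{S}(K)$ bounded by the number of times the surfaces of $S$, together with the $6\delta-2$ dividing planes and the two bounding planes $z=\mu$, $z=2\mu$, meet a generic ray above $c'$. By Bézout each surface is met at most $O(\delta)$ times by a line, and the dividing planes contribute $O(\delta)$, so each planar cell gives rise to $\mathrm{poly}(\delta)$ three-dimensional cells. Multiplying the two polynomial bounds yields $\Deg=\mathrm{poly}(\delta)$, which is the desired conclusion. The main obstacle is purely bookkeeping: tracking all the polynomial factors coming from silhouettes, intersections, vertical lines, the planar arrangement and the fiber count, and making sure that projection of algebraic curves is handled correctly so that no exponential blow-up sneaks in.
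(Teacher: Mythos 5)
Your proof is correct and follows essentially the same approach as the paper: bound the number of curves/surfaces produced at each stage of the Collins construction (silhouettes, intersections, dividing planes, vertical lines) by Bézout/Milnor--Thom, then multiply. Your version is considerably more careful about the bookkeeping than the paper's one-sentence sketch, but the underlying argument is the same.
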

\begin{proof}
  The intersection of the surfaces in $S$ are algebraic varieties of number
  bounded by $\delta$, by the Milnor--Thom theorem. Moreover, so are the
  silhouettes of the surfaces, as they are the intersection of two algebraic
  varieties of total degree smaller than $\delta$. So, the number of cells in
  $\Collins{S}$ is bounded by the number of cells of $S$ times the number of
  dividing planes times the number of intersections, silhouettes and vertical
  lines they engender.
\end{proof}

\section{Improving Mulmuley's result}

\subsection{\prams over $\realN$ and maxflow}

We will now prove our strengthening of Mulmuley's lower bounds for \enquote{\prams
without bit operations} \cite{Mulmuley99}. For this, we will combine the results from
previous sections to establish the following result.

\setcounter{temp}{\value{theorem}}
\setcounterref{theorem}{cor:main-pram}
\addtocounter{theorem}{-1}
\begin{theorem}
  Let $N$ be a natural number and $M$ be a real-valued \pram with
   at most $2^{O((\log N)^c)}$ processors, %(TODO : Are you sure about this bound, usually then number of processors is polynomials meaning it should be $N^c$), 
  where $c$ is
  any positive integer.
  
  Then $M$ does not decide \maxflow on inputs of length $N$ in $O((\log N)^c)$ steps.
\end{theorem}
\setcounter{theorem}{\value{temp}}

%\begin{proof}
%  To compute \(p//q\), where \(p,q \in \integerN\), consider the real-valued
%  machine such that the \(i^{\text{th}}\) processor computes \(x=a/b-i\) and if
%  \(0 < x \leqslant 1\), writes \(i\) in the shared memory.
%
%  This operation generalizes euclidian division and is computed in constant
%  time. Moreover, this only uses a number of processor linear in the bitsize of
%  the input if they are integers.
%\end{proof}
%So replacing all calls to the euclidian division to calls to this real-valued
%\pram is enough.

%We can now sketch the proof of the above result.

\begin{proof}[Proof of \Cref{cor:main-pram}]
Let $N$ be an integer. Suppose that a real-valued \pram $M$ with division and roots, with at most 
$p=2^{O((\log N)^c)}$ processors, computes \maxflow on inputs of length at most 
$N$ in time $k=2^{O((\log N)^c)}$. 

We know that $\Interpret{M}$ has a finite set of edges $E$. 
Since the running time of $M$ is 
equal, up to a constant, to the computation time of the $\crew^p(\amcrealfull)$-program 
$\Interpret{M}$, we deduce that if $M$ computes \maxflow in $k$ steps, then 
$\Interpret{M}$ computes \maxflow in at most $Ck$ steps where $C$ is a 
fixed constant.

By \Cref{thm:graphingsBenOrsystems}, the problem decided by $\Interpret{M}$ 
in $Ck$ steps defines a system of equations separating the integral inputs accepted by 
$M$ from the ones rejected. I.e. if $M$ computes \maxflow in $Ck$ steps, then this system
of equations defines a set of algebraic surfaces that separate the $\rho$-fan defined by 
\maxflow. Moreover, by \Cref{entropiccotrees2entropy}, this system of equation has a total 
degree bounded by \(O(2^{k^3(\log p))})\).

By \Cref{thm:param} and \Cref{thm:mulmuley-geometric}, there exists a polynomial $P$
such that a finite set of algebraic surfaces of total degree $\delta$ cannot separate the 
$2^{\Omega(N)}$-fan defined by \maxflow as long as $2^{\Omega(N)}>P(\delta)$. 
Hence, if $M$ is computing \maxflow, we need $P(\delta)\leq 2^{\Omega(N)}$, but we
have established that here $\delta=O(p 2^{k^3})$, so this contradicts the hypotheses 
that $p=2^{O((\log N)^c)}$ and $k=O((\log N)^c)$.
\end{proof}

This extends Mulmuley's result because of the following fact.
\begin{proposition}\label{prop:NCminusZequivR}
A subset $A\subseteq \integerN^k$ is decided by a division-free integer-valued \prams with $k$ processors in time $t$ if and only if there exists a division-free real-valued \prams with $k$ processors computing in time $t$ a subset $B\subseteq\realN^k$ such that:
\[ \{(x_1,x_2,\dots,x_k)\in\integerN^k \mid (x_1,x_2,\dots,x_k)\in B\} = A. \]
\end{proposition}

\begin{proof}
The proof is rather straightforward, since addition and multiplication of integers yield integers. As a consequence, the available operations in division-free real \prams cannot be used to construct non-integer values from solely integer inputs.
\end{proof}

A consequence of the previous result is that Mulmuley's original result is obtained as a corollary of \cref{cor:main-pram}.
Indeed, suppose a \pram without bit operations computes the \maxflow problem in polylogarithmic time. Then
there would exist a real-valued \pram computing \maxflow in polylogarithmic time, a result contradicting \Cref{cor:main-pram}.

Let us now consider the possibility of lifting this result to integer-valued machines using division. 
Let $M$ be an integer-valued \pram. We would like to associate to it a real-valued \pram
\(\widetilde{M}\) such that $M$ and $\widetilde{M}$ accept the same (integer) values, with at most
a polylogarithmic running time overhead. 
This implies in particular that real-valued \prams (with division, and potentially roots) should be able to 
compute euclidean division efficiently. %More precisely, it would imply that euclidean division lies in 
%$\NCReal$ (or $\NCRealplus$). 
It turns out that this is not the case. Indeed, we will show that the remainder of the
euclidean division by $2$ is in fact not computable in polylogarithmic time by real-valued \prams 
even in the presence of division and arbitrary root operations. This will be obtained using the above results
based on entropic co-trees and Mumuley's geometric argument. However, before providing the proof of this result,
we provide a more concrete version of a similar result on algebraic circuits. This proof illustrates in a simpler setting
the abstract techniques we developed above.

\subsection{Real \prams and euclidean division}\label{sec:directproof}

%This seems possible, since:
%\begin{proposition}\label{prop:constanttimeeuclidian}
%  Euclidian division can be computed by a constant time real-valued \pram.
%\end{proposition}
%
%(TODO: revoir le paragraphe suivant: la taille est plutôt "linear in the value of the input"?)
%A first version of this work did claim that this could be used to strengthen Mulmuley's 
%result on integers. However, this is not true because the width of the circuit above is 
%linear in the size of the input. Since translating an integer-valued machine in a real-valued
%one requires to simulate euclidean division on arbitrary values appearing in the computation
%and those can be exponential in the input size, this would lead to exponential width. One may
%think that this can be bypassed by finding a better algorithm for euclidean division, maybe at
%the cost of non-constant (but small enough) depth. It turns out this is not possible. In fact, we
%prove that computing euclidean division requires super-polylogarithmic depth using our main
%theorem.

% === begin Insert

We now give a direct proof that 
%some function computed in $\NCInteger$ (prams over integers with euclidean division) cannot be computed in $\NCReal$  (prams over real numbers with division) when you restrict the inputs to integers. The function we will be considering is the 
algebraic circuits cannot compute the parity function. More precisely, we consider the modulo function (noted $\% $) over $ \llbracket 0;2^n \rrbracket$. We prove that the function $(n \mapsto n \% 2)$ cannot be computed by a family of real circuits of \algcirc of largeness $\mathrm{poly}(n)$ and of depth $\mathrm{polylog}(n)$ over integers of $ \llbracket 0;2^n \rrbracket$. %As a reminder a circuit of $\algcirc$ operate over real numbers and performing the following real number operations : division, multiplication, addition, comparison. 
In our proof, we consider that comparison gates returns $1$ if true, $0$ otherwise. This will ease the proof, while being equivalent to \Cref{def:circuitsreels}. % (in some models the comparison gates indicates which branch of computation to pursue, %in our model ??
%this doesn't change the expressivity of the circuits but makes our proof easier). The gates of the considered circuits are of fan in at most 2.\\

In this section we do a direct analysis of circuits of \algcirc (circuits over
reals with division gate, but no root gates). We will prove that the modulo
function, cannot be computed by such a circuit when we restrict inputs to
integers. %This will prove that $\NCInteger \not \subseteq \complexityclass[\realN]{NC}{}$. 
One
objection quoted by Mulmuley in his paper \cite{Mulmuley99} is that
proving computing the parity function cannot be done with \enquote{\pram without
  bit operation}, is akin to proving that non monotone functions cannot be
computed by monotone circuits. We disagree to the extent that proving that
euclidian division is not computed by a circuit of \algcirc is not trivial
like in the monotone function case and actually encapsulates the reason why
algebraic circuits of low depth are fundamentally limited. The core of our
argument is the same as the one Mulmuley gives in his paper, however we feel
that our formalism may help in understanding what are \prams of polylog depth
and why they are weak. On top of his comparison with monotone functions Mulmuley
adds that anyhow \enquote{a lower bound in the \pram model without bit
  operations is interesting only if the problem has an efficient sequential
  algorithm that does not use bit operations [...] One trivial example of a
  problem that does not have a strongly polynomial time algorithm is bit
  extraction itself (because such an algorithm for bit extraction must work
  within O(1) arithmetic and comparison operations, which is impossible)}. We
then give an example of a very simple problem which has a strongly polynomial
time algorithm but is not computed by algebraic circuits: deciding the set
\[\left\{ (x_1, \ldots, x_n) \in \naturalN^n ; (\sum^{n}_{i=1} x_i )\% 2 = 0 \right\}.\]
A similar argument can also be used to prove that the
natural bijection from $\naturalN$ to $\naturalN^2$ cannot be computed by 
algebraic circuits (and therefore by \prams without bit operations in polylogarithmic 
time), which is an interesting limitation to notice.

%Since algebraic circuits and \prams are equivalent (up to a $\log$ factor) this
%will prove $\NCInteger \not \subseteq \complexityclass[\realN]{NC}{}$. 
The main idea of the proof is a
more concrete statement of the technical lemma above: we show that circuits of
\algcirc compute piece-wise polynomial fractions with
few zeroes and few \emph{pieces} and argue that it therefore can't compute the
euclidean division.

\begin{definition}
A set $I$ over $\realN$ is an interval if there exists two real numbers $a,b \in \realN \cup \{-\infty ; + \infty \}$ such that either $I=[a,b]$, or $I=]a,b]$, or $I=[a,b[$, or $I=]a,b[$.
\end{definition}

\begin{definition}[Comparison of intervals]
  Let $I$ and $J$ be two non empty, non intersecting intervals over $\realN$. We
  say $I<J$ if $\forall x \in I, \forall y \in J, x<y$.

  Given a collection of $r>0$ intervals $(I_i)_{i=0}^r$, we write
  $I_1 < I_2 < \ldots < I_r$ to indicate the family is ordered and all the
  intervals are empty and non intersecting.
\end{definition}

Note that what we call here "pieces" will play the role of (and are in fact) the cells 
in the $k$-cell decomposition exposed above. In particular, the main 
technical result leading to the lower bound is an upper bound on the number of
pieces and on the \emph{extended degree} of the piece-wise rational functions
defined by a circuit of a given depth. This should be understood as a special case 
of our general technical lemma (\Cref{mainlemma}) giving bounds on the number 
of cells and the number and degrees of the polynomials defining each cell (here
the notion of extended degree captures both the number of equations and their
degree, which is possible because we work in a simpler setting).

\begin{definition}[Interval cut and pieces]
Given a family of $r$ intervals $I_1 < I_2 < \ldots < I_r$ ($r>0$), we say it is a \emph{size $r$ interval cut} if $I_1, I_2, \ldots ,I_r$ partitions $\realN$. Each $I_i$ is called a \emph{piece}.
\end{definition}

In the following we may use as a shorthand the notation $I$ for a family of $r$ intervals $(I_i)_{1 \leq i \leq r}$.

\begin{definition}[Intertwining interval cuts]
Suppose given a size $n$ interval cut $I_1 < I_2 < \ldots < I_n$ and a size $m$ interval cut $J_1 < J_2 < \ldots < J_m$. There is at least one size $k$ interval cut $K_1, \ldots, K_k$ such that $\forall h, \exists i,j , K_h \subset I_i \wedge K_h \subset J_j$. If $k$ is moreover minimal then we call $K_1, \ldots, K_k$ an \emph{intertwining} of $I$ and $J$.
\end{definition}

As an example of intertwining consider the two families
$I=(\mbox{\(]-\infty ; 0]\)}, \mbox{\(]0; +\infty[\)})$ and
$J= (\mbox{\(]- \infty ; 1]\)}, \mbox{\(]1; +\infty[\)})$; one possible
intertwining is $(]-\infty ; 0],]0;1], ]1; +\infty[)$. Note that the size $k$ of
an intertwining between size $n$ and size $m$ interval cuts satisfies
$k\leqslant n+m$ (consider the extremal points of each interval).

\begin{definition}[Piece-wise polynomial fraction]
Consider given a size $n$ interval cut $I_1 < I_2 < \ldots < I_n$. A function $f$ over $D \subset \realN$ is said to be a piece-wise polynomial fraction over $I$ if there exists $n$ polynomial fractions $f_1, \ldots, f_{n}$ such that for all integer $1\leqslant i\leqslant n$: $\forall x \in I_i \cap D , f(x)=f_{i}(x)$.

The number of pieces of a piece-wise polynomial fraction $f$ is the smallest number $n$ such that $f$ is a polynomial fraction over a size $n$ interval cut. We write $c$ the function which associates to each piece-wise polynomial fraction $f$ and return its minimal number of pieces. In the following, we will abusively apply $c$ to circuits that compute piece-wise polynomial fractions.
\end{definition}

Let us note that the poles of a polynomial fraction do not requires a splitting in pieces. For instance, the function $f(x)=\frac{1}{x^2-1}$ defined for all $x\in \realN \setminus \{ -1; 1\}$ is a one piece polynomial fraction. If one wanted to have functions defined over the whole of $\realN$ it would be no issue as whenever performing a division we could add a comparison gate in the circuit to verify that we are not dividing by 0, and if so return an arbitrary value.\\

Let $F$ be a piece-wise polynomial fraction over $D \subset \realN$, there can be multiple interval cut associated to the number of pieces of $F$, for instance the function which is 0 for negative numbers and $x \mapsto x$ for non negative numbers as two minimal cuts : $(] - \infty ; 0 ] , ]0;+ \infty[)$ or $(] - \infty ; 0 [ , [0;+ \infty[)$. So in the following we will speak of \textit{an} interval cut of $F$. Once we have fixed an interval cut $I$, the associated polynomial functions are unique except on intervals $I_i$ such that $I_i \cap D = \{ a \}$ is a singleton, for those intervals we always associate the polynomial of degree 0, $f_i = (x \mapsto f(a))$. This is important for the following definition.

\begin{definition}[Augmented degree of a polynomial fraction]
Let $P$ and $Q$ be polynomials over $\realN$. We define the \emph{augmented degree} $d(F)$ of the polynomial fraction $F=\frac{P}{Q}$ as $d(F)= d(P)+d(Q)+1$, where $d(P), d(Q)$ are the degrees of the polynomials $P,Q$ in the standard meaning.
%\end{definition}

%\begin{definition}[Augmented degree of a piece-wise polynomial fraction]
Let $f$ be a piece-wise polynomial fraction over $\realN$ with a minimal interval cut $ I_1 < \ldots < I_n$ and $ f_1, \ldots, f_{n}$  the associated polynomial fractions. We define the extended degree of $f$ by $d(f)= max_{1 \leq i \leq n} d(f_i)$. In the following we will abusively apply $d$ to circuits that compute piece-wise polynomial fractions.

\end{definition}

%retravailler la preuve plus precisement (fausse actuellement changer un peu l'analyse)
%ajouter un truc pour parler de l'union de 2 interval cuts
\begin{theorem}
\label{main}
Let $P$ be a real circuit of depth $k$ taking only one input. Then $P$ computes a piece-wise polynomial fraction such that $d(P) \leq 2^{k}$ and $c(P) \leq 2^{\frac{k^2+3k}{2}}$.
\end{theorem}

\begin{proof}
We prove the result for each gate by induction on the depth of the gate. I.e. we prove that the circuit computes a piece-wise polynomial fraction, and that the stated bounds are correct. We denote by $c_k$ (resp. $d_k$) the maximal number of pieces (resp.  the maximal augmented degree) of a function computed by a gate at depth $k$. Let $p$ be a gate, $c(p)$ corresponds to the augmented degree of the function computed by $p$, $d$ to its augmented degree. %We are going to prove that $d_k \leq 2^{k}$ and $c_k \leq 2^{\frac{k^2+3k}{2}}$ by induction. \\

For depth $k=1$ the function computed is either $f= (x \mapsto x)$ or $g=(x \mapsto 1) $. Notice that $d(f)= 2 $, $c(f)=1$, $d(g) =1$ and $c(g)=1$. Therefore $d_1 \leq 2^1$ and $c_1 \leq 2^2$.

For the induction step, let us consider a gate called $p$ at depth $k+1$. Our induction hypothesis is that $d_k \leq 2^{k}$ and $c_k \leq 2^{\frac{k(k+3)}{2}}$. We now consider the different possible gates $p$.
\begin{itemize}[nolistsep,noitemsep]
\item If $p$ is a multiplication gate: $p=p_1*p_2$. Let $I_1 < \ldots < I_r$ and $J_1, < \ldots < J_m$ be the minimal size interval cuts of $p_1$ and $p_2$, and $r=c(p_1)$ and $m=c(p_2)$. Let us consider $K_1, \ldots, K_k$ an intertwining of $I$ and $J$ such that $k \leqslant r+m$ . We note that $p$ is a piece-wise polynomial fraction over $K$. Therefore $c(p)\leq c(p_1)+c(p_2) \leq 2*c_k   \Rightarrow c_(p) \leq 2*c_k \leq 2^{\frac{(k+1)(k+4)}{2}} $ . By reasoning over each piece of the interval cut $K$ we also obtain that $d(p) \leq d(p_1)+d(p_2) \leq 2d_k$.
\item If $p$ is either an addition or a division gate, the reasoning is similar.
\item If $p$ is a comparison gate: $p= (p_1 \leq p_2)$. Let $A_1 \leq A_2 \leq \ldots \leq A_n$ be the pieces of $p_1$, $B_1 \leq B_2 \leq \ldots \leq B_m$ be the pieces of $p_2$, and define $n = c(p_1)$ and $m=c(p_2)$. Let $I_1 \leq I_2 \leq \ldots \leq I_{r}$ be an intertwining of $A$ and $B$. For all integer $1\leqslant i\leqslant r$, $p_1$ and $p_2$ are both polynomial fraction over $I_i$ and we can canonically write $p_{i,1} = \frac{q_{i,1}}{t_{i,1}}$ and $p_{i,2} = \frac{q_{i,2}}{t_{i,2}}$. To count the number of pieces of $p$, we focus on each $I_i$ separately. One can notice that over $I_i$ any new piece of $p$ contained in $I_i$ may only occur on places where $p_{i,1} -p_{i,2}$ changes sign, moreover $p_{i,1} \leq p_{i,2}$ if and only if either $q_{i,1}*t_{i,2} - t_{i,1}*q_{i,1} \leq 0$ or $q_{i,1}*t_{i,2} - t_{i,1}*q_{i,1} \geq 0$ (depending on the signs of the denominators). But the polynomial $q_{i,1}*t_{i,2} - t_{i,1}*q_{i,1}$ has at most $d(p_{i,1})+d(p_{i,2})-1$ roots, therefore over $I_i$ we "add" at most $d(p_{i,1})+d(p_{i,2})$ pieces. Since we have at most $r \leq n+m = c(p_1) +c(p_2)$ pieces $I_i$, we have that $c(p)$ is at most $(d(p_1)+d(p_2))*(c(p_1)+c(p_2)) \leq 2d_k * 2c_k \leq 2*2^k * 2*2 ^{\frac{k^2 + 3k}{2}} \leq 2 ^{\frac{(k+1)^2 + 3(k+1)}{2}} $. And $d(p)$ is equal to $1 \leq 2^{k+1}$. 
\end{itemize}
%So no matter the last computation gate the induction hypothesis still holds at depth $k+1$ therefore $d_k \leq 2^{k+1}$ and $c_k \leq 2 ^{\frac{(k+1)^2 + 3(k+1)}{2}}$. This concludes the proof.
\end{proof}

The next lemma state that no piece-wise polynomial fraction with few pieces or low augmented degree can agree with the remainder function on many consecutive integers.

\begin{lemma}
\label{bigdegree}
Let $f$ be a piece-wise polynomial fraction over $\realN$, $N$ an integer. If $\forall k \in \llbracket 0;N \rrbracket , f(k)=k\%2$, then $d(f)*c(f) \geq \frac{N}{10}$.
\end{lemma}

\begin{proof}
Let $f$ be a piece-wise polynomial fraction as described in the theorem, and $c$ its number of pieces with $I_1, \ldots , I_c$ a corresponding interval cut. If $c$ is less than $N/10$, there must be a piece $I_j$ containing at least $N/c$ integers of $\llbracket 0;N \rrbracket$. We remind that over $I_j$, the function $f$ is a polynomial fraction which we note $\frac{P}{Q}$. Since $k \mapsto k\%2$ has $N/2c$ zeroes over $I_j$, so must $P$. The polynomial $P$ also cannot be the zero function because $k \mapsto k\%2$ is equal to $1$ for some integers in $I_j$. Therefore P is of degree at least $N/2c$, and $d(f)*c \geq (N/2c) * c \geq N/10$.
\end{proof}

\begin{theorem}
\label{remaindernotinNCR}
For any function family $(f_n)$ computed by a real-valued circuit family of depth $c \log^d(n)$, there exists $N \in \naturalN$ such that $\forall n > N, \exists k \in \llbracket 0;2^n \rrbracket, f_n(k) \neq k\%2$.
\end{theorem}

\begin{proof}
Let $(C_n)_n$ be a sequence of circuits of depth $k_n= O(c \log^d(n))$. Since $n\mapsto n \%2 $ is a function with one input, we may consider $(C_n)_n$ to be a sequence of circuits with one input. For any $n$, the function computed by $C_n$ is a piece-wise polynomial fraction $f$. By theorem \ref{main} we have that $c(f) \leq 2^{\frac{k_n^2 + 3 k_n}{2}}$ and $d(f) \leq 2^{k_n}$ therefore $c(f)d(f) = o(2^n)$. Therefore for large enough $n$, using lemma \ref{bigdegree}, $f$ cannot coincide with $n \mapsto n \% 2 $ for all integers in $\llbracket 0;2^n \rrbracket$.
\end{proof}

We continue by proving that the two problems given in introduction are indeed not computable by algebraic circuits.

\begin{theorem}
The set \[\left\{ (x_1, \ldots, x_n) \in [0,2^n]^n \mid (\sum^{n}_{i=1} x_i )\% 2 = 0 \right\}.\] 
has polynomially bounded arithmetic circuits but is not computable by polylogarithmic depth algebraic circuits.
\end{theorem}

\begin{proof}
One can retrieve with a circuit of size linear in $n$ the last bit of $\sum^{n}_{i=1} x_i$. On the other hand the set \[\left\{ (0, \ldots,0, x_n) \in [0,2^n]^n \mid (\sum^{n}_{i=1} x_i )\% 2 = 0 \right\}.\] is not computable by polylogarithmic depth algebraic circuits by \Cref{remaindernotinNCR}.
\end{proof}

\begin{theorem}
Let $f:   N \mapsto N \times N$ be the usual Cantor pairing function $f(n,m)=\frac{1}{2}(m + n)(m + n + 1) + m$. This function is bijective and its reciprocal $f^{-1}$ cannot be computed by a polylogarithmic depth algebraic circuit.
\end{theorem}

\begin{proof}
Consider the function $g(x,y)=(x<y)$. (it returns either 0 or 1) The function $g(f^{-1})$ has $2^{\frac{n}{2}}$ alternations between $0$ and $1$ over $\llbracket 0;2^n \rrbracket$ therefore it cannot be computed by a polylogarithmic depth algebraic circuit (by arguments used to prove theorem \ref{bigdegree} and \ref{remaindernotinNCR} . But $g$ can be computed by a polylogarithmic depth algebraic circuit. As a consequence, it must be $f^{-1}$ which cannot be computed by a polylogarithmic depth algebraic circuit.
\end{proof}

We haven't been able to use this analysis to include circuits with square root gates. Everything may still hold, but we haven't been able to prove satisfactory bounds on the number of zeros of function computed by circuits with square root gates. This is important because when considering the comparison gates, the number of pieces created directly depends on the number of zeroes of the function.

So, while the arguments of this section show a more concrete application of the ideas behind the method used in this paper, this version is more limited. In the next section, we will extend this result to the model with arbitrary roots using the more abstract method developed in previous sections.

\subsection{Extending to \prams with roots}

The above theorem is a particular case of the following result, which we prove using the general technique 
developed in the previous sections. We note that we do not know of a more concrete proof using arguments similar
to the ones used in the previous section. 

Using 
the general bounds provided by entropic co-trees (\Cref{thm:graphingsBenOrsystems}) and the 
geometric result extracted from Mulmuley's geometric proof of lower bounds 
(\Cref{thm:mulmuley-geometric}), we show that euclidean division by $2$ cannot be computed
by real-valued \prams (with division and arbitrary root operations) on inputs of length $N$ in polylogarithmic time in $N$.

\setcounter{temp}{\value{theorem}}
\setcounterref{theorem}{mainthm}
\addtocounter{theorem}{-1}
\begin{theorem}
  Let $N$ be a natural number and $M$ be a real-valued \pram with
  at most $2^{O((\log N)^c)}$ processors, where $c$ is
  any positive integer.
  
  Then $M$ does not compute euclidean division by $2$ on inputs of length $N$ in $O((\log N)^c)$ steps.
\end{theorem}
\setcounter{theorem}{\value{temp}}

\begin{proof}
Suppose that a real-valued \pram $M$ with division and roots, with at most $p=2^{O((\log N)^c)}$ processors, 
computes euclidean division by $2$ on inputs of length at most $N$ in time $k=2^{O((\log N)^c)}$. 
We know that $\Interpret{M}$ has a finite set of 
edges $E$, and the running time of $M$ is equal, up to a constant, to the computation time of 
the $\crew^p(\amcrealfull)$-program $\Interpret{M}$, we deduce that if $M$ computes 
euclidean division by $2$ in $k$ steps, then $\Interpret{M}$ computes euclidean division by $2$ in at most $Ck$ steps 
where $C$ is a fixed constant.

Consider the following optimisation problem:
\[ \{(x,y,1) \mid y\leqslant x//2 \} \]
It is defined by the frontier set 
\begin{align*}
  \Frontier = \{(x,y,1) \in \realN^3 \mid y = x//2 \}.
\end{align*}
and we remark that the induced cone
\begin{align*}
  [\Frontier] &= \{(x,y,z) \in \realN^2 \times \realN^+ \mid y/z =
                \MaxOptProb \circ\Parametrization (x/z)\}.
\end{align*}
is a $\rho$-fan where $\rho=2^{\Omega(N)}$ is exponential in the maximal size of the inputs.

By \Cref{thm:graphingsBenOrsystems}, the problem decided by $\Interpret{M}$ 
in $Ck$ steps defines a system of equations separating the integral inputs accepted by 
$M$ from the ones rejected. I.e. if $M$ computes euclidean division by $2$ in $Ck$ steps, then this 
system of equations defines a set of algebraic surfaces that separate the $\rho$-fan defined above.
Moreover, this system of equation has a total degree bounded 
by \(O(2^{k^3(\log p)})\).

Now by \Cref{thm:mulmuley-geometric}, there exists a polynomial $P$
such that a finite set of algebraic surfaces of total degree $\delta$ cannot separate the 
$2^{\Omega(N)}$-fan defined by euclidean division by $2$ as long as $2^{\Omega(N)}>P(\delta)$. 
%But
%here the entropy of $G$ is $O(p)$ from\footnote{Once again using the fact that $\amcrealfull$ possesses
%the property that non-central elements act as the identity on the private memory.} \cref{lem:entropycrew}. 
But $\delta=O(p 2^{k^{3}})$, contradicting the hypotheses 
that $p=2^{O((\log N)^c)}$ and $k=O((\log N)^c)$.
\end{proof}

\clearpage
\bibliographystyle{elsarticle-num} % Best practice on http://acm-stoc.org/stoc2023/cfp.html
\bibliography{biblio}

\end{document}